\documentclass{article}
\usepackage[utf8]{inputenc}

\usepackage[margin=1in]{geometry}
\usepackage{amsmath}
\usepackage{enumerate}
\usepackage{amsthm}
\usepackage{amssymb}
\usepackage{graphicx}
\usepackage{tikz}
\usepackage{pgfplots}
\usetikzlibrary{arrows}
\usepackage{caption}
\usepackage{subcaption}
\usepackage{bbm}
\usepackage{enumitem}
\usepackage[colorlinks=true]{hyperref}
\usepackage{cleveref}
\usepackage{float}
\usepackage{thmtools, thm-restate}

\usepackage[noline,noend,linesnumberedhidden,boxed,nokwfunc]{algorithm2e}
\DontPrintSemicolon
\usepackage{algorithmic}

\usepackage[T1]{fontenc}

\tikzstyle{vertex}=[circle, draw, fill, inner sep=0pt, minimum size=5pt]

\newcommand\abs[1]{\left\lvert{#1}\right\rvert}
\newcommand\ceil[1]{\left\lceil{#1}\right\rceil}

\newcommand\norm[1]{\left\lVert{#1}\right\rVert}
\newcommand\wtd[1]{\widetilde{#1}}

\DeclareMathOperator{\gap}{\mathrm{gap}}

\DeclareMathOperator{\Null}{\mathrm{Null}}

\def\Am{\mathbf{A}}
\def\Bm{\mathbf{B}}
\def\Lm{\mathbf{L}}

\def\R{\mathbb{R}}

\def\sE{\mathcal{E}}
\def\sB{\mathcal{B}}

\def\trans{{^\top}}
\def\E{\mathbb{E}}

\def\bv{\mathbf{b}}
\def\xv{\mathbf{x}}
\def\fv{\mathbf{f}}
\def\gv{\mathbf{g}}
\def\rv{\mathbf{r}}

\def\wv{\mathbf{w}}
\def\uv{\mathbf{u}}
\def\vv{\mathbf{v}}
\def\yv{\mathbf{y}}
\def\zv{\mathbf{z}}

\def\one{\mathbbm{1}}

\def\st{\mathrm{st}}

% Can be changed
\def\ALGNAME{\textsf{Dual KOSZ}}
\def\OALGNAME{\textsf{KOSZ}}
\mathcode`l="8000
\begingroup
\makeatletter
\lccode`\~=`\l
\DeclareMathSymbol{\lsb@l}{\mathalpha}{letters}{`l}
\lowercase{\gdef~{\ifnum\the\mathgroup=\m@ne \ell \else \lsb@l \fi}}%
\endgroup

\setlength\parindent{0pt}
\setlength\parskip{5pt}

\newtheorem*{claim*}{Claim}

\newtheorem{lemma}{Lemma}
\newtheorem*{lemma*}{Lemma}

\newtheorem{prop}{Proposition}
\newtheorem*{prop*}{Proposition}
\newtheorem{theorem}{Theorem}
\newtheorem*{theorem*}{Theorem}
\newtheorem{defn}{Definition}
\newtheorem*{defn*}{Definition}

\newtheorem*{convention*}{Convention}

\newtheorem{fact}{Fact}

\newcommand*\samethanks[1][\value{footnote}]{\footnotemark[#1]}

\theoremstyle{plain}

\newenvironment{remark}{\noindent{\bf Remark}\hspace*{1em}}{\bigskip}
\usepackage{multicol}

\begin{document}

\title{Cut-Toggling and Cycle-Toggling for Electrical Flow and Other $p$-Norm Flows}
\author{Monika Henzinger\thanks{Address: Fakult\"at f\"ur Informatik, Forschungsgruppe Theorie und Anwendung von Algorithmen, W\"ahringer Strasse 29/6.32, A-1090 Wien, Austria.  Email: {\tt monika.henzinger@univie.ac.at}} \\University of Vienna \and  Billy Jin\thanks{Address: School of Operations Research and Information Engineering, Cornell University, Ithaca, NY 14853, USA.  Email: {\tt \{bzj3, davidpwilliamson\}@cornell.edu}.  Supported in part by NSF grant CCF-2007009 and NSERC fellowship PGSD3-532673-2019.}\\Cornell University \and Richard Peng\thanks{Address:  School of Computer Science, Georgia Tech, Atlanta, GA 30332, USA.  Email: {\tt rpeng@cc.gatech.edu}. }\\Georgia Tech \& \\University of Waterloo \and David P.\ Williamson\samethanks[2]\\Cornell University}
% \author{{\color{red} Anonymous Author(s)}}

%\thispagestyle{empty}
\maketitle
% \pagenumbering{gobble}

\begin{abstract}

We study the problem of finding $p$-norm flows in undirected graphs so as to minimize the weighted $p$-norm of the flow for any $p > 1$. 
When $p=2$, the problem is that of finding an electrical flow, and its dual is equivalent to solving a Laplacian linear system. The case $p = \infty$ corresponds to finding a min-congestion flow, which is equivalent to max-flows.
A typical algorithmic construction for such problems considers dual variables $x(i)$ corresponding to the flow conservation constraints for each $i \in V$, and
has two simple types of update steps: {\em cycle toggling}, which modifies the flow along a cycle, and {\em cut toggling}, which modifies all potentials on one side of a cut.
Both types of steps are typically performed relative to a spanning tree $T$; then the cycle is a fundamental cycle of $T$, and the cut is a fundamental cut of $T$.  
In this paper, we show that these simple steps can be used to 
give a novel
efficient implementation for the $p = 2$ case and to
find near-optimal $p$-norm flows in a low number of iterations for all values of $p > 1$.
Compared to known faster algorithms for these problems,
our algorithms are simpler, more combinatorial, and also
expose several underlying connections between these algorithms
and dynamic graph data structures that have not been formalized previously.
\begin{itemize}
\item For $p=2$, we give a cut-toggling algorithm that is dual to
the randomized cycle-toggling algorithm of
[Kelner-Orrechia-Sidford-Zhu STOC'13].
Their algorithm performs a near-linear number of cycle-toggling steps and
uses a data structure to implement each such step in logarithmic time, giving a near-linear time algorithm overall.
While our dual algorithm also runs in a near-linear number of cut-toggling steps,
we show that if we abstract the needed cut-toggling step as a natural data structure problem, this problem can be reduced to the online matrix-vector (OMv) problem, which has been conjectured to be hard
[Henzinger-Krinninger-Nanongkai-Saranurak STOC'15].
This implies that it is unlikely for a cut-toggling step to
be implementable in sublinear time,
but we then circumvent this difficulty via batching, sparsification, and recursion,
obtaining an overall almost-linear running time.
\item For general $p$-norm flows, we show that
$\widetilde{O} (\frac{1}{p-1}2^{\frac{p+1}{p-1}} m + nm^{p-1}R)$ cut-toggling iterations are sufficient to find a near-optimal flow when $1 < p \leq 2$, with $R$ the ratio between the maximum and minimum values of the edge weights. 
When $p \geq 2$,
we show that $\widetilde{O}((p2^{2p-1}+ (nR)^{\frac{1}{p-1}}) m)$ cycle-toggling iterations are sufficient.
This exposes a separation between cut and cycle toggling steps once $p$ moves away from $2$.
It also represents a starting point toward getting faster and more robust algorithms for $p$-norm flows,
but also leads to a significantly more difficult problem where
the tree (from which the fundamental cuts are picked) is 
dynamically changing as the algorithm progresses.
\end{itemize}
\end{abstract}

% \newpage

\pagenumbering{arabic}
\section{Introduction}

We study the problem of finding flows in undirected graphs so as to minimize the weighted $p$-norm of the flow for $p > 1$.  In particular we are given an undirected graph $G=(V,E)$, weights $r(i,j) > 0$
%\todo{is this resistance or weights? for $p = 2$ the `weights' are $1 / resistances$}
for each $(i,j) \in E$, and supplies $b(i)$ for each $i \in V$ such that $\sum_{i \in V} b(i)=0.$
% (We will call the $r(i,j)$ values resistances in the rest of the paper, even though in prior work this name has only be used for $p=2$.)
Let $\vec{E}$ be some arbitrary orientation of $E$, and let $\Am$ be the vertex-arc incidence matrix for $(V, \vec{E})$. The minimum-weighted $p$-norm flow problem and its dual are given below. (See Appendix \ref{app:dual_deriv} for how the dual is derived.)
\begin{align*}
(P) \quad &\min  \quad \frac1p\sum_{(i,j)\in\vec{E}} r(i,j)\abs{f(i,j)}^p  
\qquad\qquad (D)  &\max \quad \bv^T\xv - \left(1-\frac1p\right)\sum_{(i,j)\in\vec{E}} \left(\frac{\abs{x(i) - x(j)}^p}{r(i,j)}\right)^{\frac{1}{p-1}} \\
&\text{s.t.} \quad \Am\fv = \bv 
&\text{s.t.} \quad \xv \in \R^{V}~\text{unconstrained} 
\end{align*}
Various values of $p$ give classical flow problems: The case $p=1$ corresponds to an uncapacitated and undirected minimum-cost flow problem, in which $r(i,j)$ is the per-unit cost of shipping flow on edge $(i,j)$; if the graph were directed and we added capacity constraints, the problem would be the general minimum-cost flow problem.  The case $p=2$ in which $r(i,j)$ is the resistance of the edge corresponds to finding an electrical flow; that is, a flow that minimizes the total energy in the network.    The case $p=\infty$ corresponds to minimizing the congestion in a flow in an undirected graph when $r(i,j) = u(i,j)^p$, for $u(i,j)$ the capacity of the edge, which is equal to the maximum flow problem in undirected graphs.

A very typical algorithmic approach for both the minimum-cost flow problem and the electrical flow problem is to consider dual variables (or \emph{potentials}) $x(i)$ corresponding to the flow conservation constraints for each node $i \in V$, and to find flows $\mathbf{f}$ and potentials $\mathbf{x}$ that meet an optimality condition for the flow problem in question.  For instance, for the general (directed, capacitated) minimum-cost flow problem, complementary slackness guarantees that a feasible flow has minimum cost if there exist potentials $\mathbf{x}$ such that  $r(i,j) + x(i) - x(j)$ is nonnegative for all directed edges $(i,j)$ such that $f(i,j)$ is strictly less than the capacity, and $r(i,j) + x(i) -x(j)$ is nonpositive for all $(i,j)$ such that $f(i,j)$ is positive. Similarly, for electrical flow a feasible flow is optimal if there exist potentials $\mathbf{x}$ such that Ohm's Law is obeyed, and $f(i,j) = (x(i)-x(j))/r(i,j)$ for all $(i,j) \in \vec{E}$.  
In the case of electrical flow such potentials $\mathbf{x}$ are the solution to the linear system $\mathbf{Lx = b}$, where $\mathbf{L}$ is the weighted Laplacian of the graph with weight on each edge $(i,j)$ of $1/r(i,j)$, and $\mathbf{b}$ is the supply vector.  Algorithms manipulating potentials $\mathbf{x}$ tend either to be {\em primal-feasible} algorithms that maintain a  feasible flow $\mathbf{f}$ while finding potentials $\mathbf{x}$ that meet the optimality conditions, or {\em dual-feasible} algorithms that maintain the optimality condition on the potentials with respect to a current infeasible flow, and update the potentials to drive towards flow feasibility.

There are two very simple update steps, one for each type of algorithm.  For primal-feasible algorithms, a natural update step is {\em cycle toggling}: we push flow around a cycle so as to maintain primal feasibility.  For dual-feasible algorithms, a natural update step is  {\em cut toggling}: given a current set of potentials $\mathbf{x}$, we update $\mathbf{x}$ by setting $x(i) \gets x(i) + \delta$ for all $i \in S$ for some set $S \subset V$ and some value $\delta$. It is furthermore typical that such steps are made with reference to some spanning tree $T$ in the graph.  Then a cycle-toggling step is performed with respect to the fundamental cycle closed by adding some non-tree edge to $T$, and a cut-toggling step is performed with respect to a fundamental cut in the tree, which is a cut induced by removing an edge of some spanning tree $T$ of the graph.
%\todo{emphasize that the cuts/cycles are coming from trees?}
These two styles of algorithm are well-known for the (directed, capacitated) minimum-cost flow problems, and correspond to the primal and dual network simplex algorithms respectively, in which the tree $T$ corresponds to the current simplex basis.
More generally, there are both cycle-canceling algorithms and cut-canceling algorithms for the minimum-cost flow problem that choose appropriate cycles or cuts in the graph and perform a cycle-toggling or a cut-toggling iteration (for cycle toggling see, for instance, Klein \cite{Klein67} and Goldberg and Tarjan \cite{GoldbergT89}; for cut toggling, see  Hassin \cite{Hassin83} and Ervolina and McCormick \cite{ErvolinaM93}).  

In the case of electrical flow, Kelner, Orrechia, Sidford, and Zhu \cite{KOSZ13} present a randomized near-linear time cycle-toggling algorithm  that finds a near-minimum energy flow $\mathbf{f}$, and also an approximate solution to $\Lm\xv = \bv$.  Their algorithm finds a low-stretch spanning tree $T$ with respect to $\rv$, and performs a near-linear number of iterations, each of which modifies flow on a fundamental cycle with respect to $T$.  An appropriate choice of data structure allows them to implement each cycle-toggling iteration in logarithmic time, leading to the overall near-linear running time.  However, no corresponding cut-toggling algorithm exists in the literature, leading immediately to the following open question:

{\emph{Open question: Does there exist a cut-toggling algorithm for computing near-minimum energy flows / approximately solving Laplacian linear systems, and how efficiently can it be implemented?}}

%\todo{Add paragraph(s) to motivate why this problem is interesting / worth working on.}
%\begin{itemize}
%    \item Community thinks p norms should be harder than LPs
%    \item Applications to max flow
%\end{itemize}

%Another interesting research direction is to extend the cycle-toggling and cut-toggling 
%approach to general values of $p$.

There is a vast literature on solving Laplacian linear systems, and the current  fastest algorithm is the algorithm of \cite{JS21}, which runs in $O(m (\log \log n)^{O(1)} \log \frac{1}{\epsilon})$ time.
These works have motivated the Laplacian paradigm of graph algorithms~\cite{T10}:
solving problems on graphs and networks
using (a sequence of) linear systems of Laplacians. 
Further improvements of these new graph algorithms 
have increasingly emphasized the following question: 
\textit{Which class of algorithmic problems can be solved
using tools from linear systems solvers?}
The combined formulation of Laplacians and $p$-norm flows
given at the start of the introduction is directly motivated
by this connection: the recent development of
almost linear time solvers for $O(\log^{2/3}n)$-norm flows
and their duals~\cite{KPSW19,AS20} have already led to improvements
to extensively studied problems such as unit-capacity
flows and bipartite matchings~\cite{LS20,KLS20,AMV20}.
While many algorithms are known for different values
of $p$ (see Appendix~\ref{sec:previous} for a summary),
they use a host of methods based on continuous optimization,
and there is no clear winner.
The exponent of $m$ for the current best running times for different values of $p$,
for sparse (i.e. $m = n^{1 + o(1)}$),
unit-weighted graphs is given in  plot in Figure~\ref{fig:runtimes}, e.g., the 
best algorithm for $p=2$ takes time $\wtd{O}(m)$, and thus, we plot 1 for $p=2$.
As the figure shows, the complexity of $p$-norm flows is not well understood: For different values of $p$, different algorithms perform best, and there are currently two ``local minima'': for $p=2$ and for $p=\infty$.
%for large value of $p$ the running times of the currently best algorithm is even getting smaller.

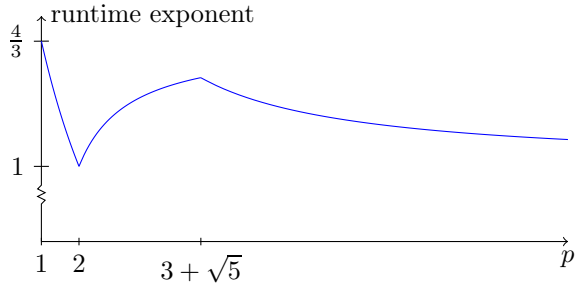
\begin{figure}
\begin{center}
\begin{tikzpicture}[xscale = 0.5, yscale = 5]
  \draw[->] (1, 0) -- (15, 0) node[below] {$p$};
  \draw[-]  (1, 0) -- (1, 0.1) -- (1.1, 0.11) -- (0.9, 0.12) -- (1.1, 0.13) -- (0.9, 0.14) -- (1, 0.15) -- (1, 0.2);
  \draw[->] (1, 0.2) -- (1, 0.6) node[right] {runtime exponent};
  \draw[-] (1, 0.01) -- (1, -0.01) node[below] {$1$};
  \draw[-] (2, 0.01) -- (2, -0.01) node[below] {$2$};
  \draw[-] (5.24, 0.01) -- (5.24, -0.01) node[below] {$3 + \sqrt{5}$};
  \draw[-] (1.2, 0.5333) -- (0.8,0.533) node[left] {$\frac{4}{3}$};
  \draw[-] (1.2, 0.2) -- (0.8,0.2) node[left] {$1$};
  \draw[domain=1:2, smooth, variable=\x, blue] plot ({\x}, {0.2 + (2 - \x)/(2 + \x)});
  \draw[domain=2:5.24, smooth, variable=\x, blue] plot ({\x}, {0.2 + (\x - 2)/(3 * \x - 2)});
  \draw[domain=5.24:15, smooth, variable=\x, blue] plot ({\x}, {0.2 + 1/(\x - 1)});
\end{tikzpicture}
\end{center}
\caption{Current best runtimes for computing $p$-norm flows on a sparse graph as an exponent of $n$, a linear plot.
Here both $p$ and $\frac{1}{p - 1}$ are constants.
%Red curve is the our iteration count as $p \rightarrow 1$.
}
\label{fig:runtimes}
\end{figure}

Furthermore, all existing algorithms for $p$-norm flows
are non-combinatorial.
They make use of numerical and analytic tools such as
homotopy methods~\cite{BCLL18},
multiplicative weight update~\cite{AKPS19},
higher order acceleration~\cite{B20}, and
recursive preconditioning~\cite{AKPS19}.
These approaches, as well as the clear gaps in
our current understanding of this problem,
lead us to ask:

{\emph{Open question: Are there simple combinatorial algorithms (such as cut- or cycle-toggling algorithms) for computing near-optimal $p$-norm flows?}}

%In particular, a $(1+\epsilon)$-approximation to the optimal $p$-norm flow can be found in $O(p^2 m^{\frac{4p-4}{3p-2} + o(1)}\log^2 \frac{1}{\epsilon})$ arithmetic operations for $2 \leq p < \mbox{poly}(m)$ \cite{AS20}, or time $\tilde{O}(pm^{1 + \abs{\frac{1}{2}-\frac{1}{p}}})$ \cite{BCLL18} for $1 < p < \infty$, time $\tilde{O}_p(m^{1 + \frac{\abs{p - 2}}{2p + \abs{p - 2}}})$ time for $1 < p < \infty$ \cite{AKPS19}.  An algorithm with time $O(p(m^{1 + o(1)} + n^{4/3 + o(1)})\log^2 \frac{1}{\epsilon})$ has recently been achieved for $p=\omega(1)$ \cite{ABKS21:arxiv}.  
%Faster running times are known both for unweighted graphs ($r(i,j)=1$ for all $(i,j) \in E$) and the case of electrical flow ($p=2$).  
%For unweighted graphs, i.e.,~$r(i,j)=1$ for all $(i,j) \in E$, a near-optimal flow can be computed in $pm^{\frac{p}{p-1} + o(1)}$ arithmetic operations \cite{AS20} for $2 \leq p \leq \mbox{poly}(m)$ or time $2^{O(p^{3/2})}m^{1 + \frac{7}{\sqrt{p-1}}+o(1)}\mbox{poly}(\log \frac{1}{\epsilon})$  for $p \geq 2$ \cite{KPSW19}.

%\mh{What are the disadvantages of these algorithms?}

\subsection{Our Contributions}

(1) We study cut-toggling algorithms from both structural
and efficiency perspectives, and show:
\begin{itemize}
    \item Cut-toggling algorithms can solve,
    to high accuracy,
    graph Laplacian linear systems in a nearly-linear number of iterations,
    and $p$-norm flows for $1 < p \leq 2$ with an iteration 
    count of $\widetilde{O} (\frac{1}{p-1}2^{\frac{p+1}{p-1}} m + nm^{p-1}R)$ where $R$ is the max ratio of weights.

    \item The cut toggling algorithm for solving graph 
    Laplacians (the $p = 2$ case) can run in almost linear time by taking advantage of the ``offline nature'' of the cut toggles: that is, the choice of cuts can be chosen independent of each other, which allows us to ``batch'' the processing of the cut toggles.
\end{itemize}
(2) To complement the iteration count bound for cut toggling,
we also show that
cycle toggling algorithms can solve $p$-norm flows for $p > 2$ with
an iteration count of $\widetilde{O}((p2^{2p-1}+ (nR)^{\frac{1}{p-1}}) m)$.

Thus, we demonstrate that these two basic flow update steps go
beyond algorithms for minimum-cost flows and can solve other $p$-norm flow problems for $p > 1$.
From the numerical/linear systems solving perspective,
our results can also be viewed as demonstrating that these
numerical routines can be extended in ways that more
closely resemble their combinatorial analogs
in min-cost flow / network simplex algorithms.

Algorithmically, our path towards an almost-linear time
cut-toggling Laplacian solver required overcoming a much more complex
data structure problem, compared to cycle toggling.
For cycle toggling, a simplified version of dynamic trees
is sufficient~\cite{KOSZ13}.
Our almost-linear time implementation of cut toggling, on the other hand,
involves modifying the outer-loop/data-structure interactions.
We will discuss these issues, as well as the likelihood
of them becoming even more intricate in the generalized $p$-norm settings,
in our discussion of potential future work after providing our technical overview.

\subsection{Technical Overview}

\textbf{Special case $p=2$.} We begin with the case of $p=2$, in which the primal problem is that of finding an electrical flow and the dual problem is equivalent to solving the Laplacian linear system $\Lm \xv = \bv$.  Here we show that there is a very natural randomized cut-toggling algorithm which is dual to the randomized cycle-toggling algorithm of Kelner et al.; we will refer to their algorithm as \OALGNAME, and to our algorithm as \ALGNAME.  \ALGNAME\ also starts by choosing a low-stretch spanning tree $T$. It maintains a set of potentials $\mathbf{x}$ (initially zero), and the corresponding (infeasible) flow $\mathbf{f}$ implied by Ohm's Law.  In each iteration, we sample a fundamental cut $S$ of the tree $T$ and perform a cut-toggling update so that the net flow leaving $S$ is $\sum_{i \in S} b(i)$, as required in every feasible flow.  Following arguments dual to those made in Kelner et al.\ we show that this algorithm also performs a near-linear number of iterations in order to find a near-optimal set of potentials $\xv$  and flow $\fv$. 

\begin{restatable}{theorem}{dualkosz}
\label{thm:dual_kosz}
Let $\tau$ be the total stretch of $T$. After $K = \tau \ln(\frac\tau\epsilon)$ iterations, \ALGNAME\ returns $\xv^K \in \R^V$ and $\fv^K \in \R^{\vec{E}}$ such that $\E\norm{\mathbf{x}^* - \mathbf{x}^K}_\mathbf{L}^2 \leq \frac{\epsilon}{\tau}\norm{\mathbf{x}^*}_\mathbf{L}^2$ and $\E[\sE(\fv^K)] \leq (1+\epsilon) \sE(\fv^*)$, for $\fv^*$ and $\xv^*$ optimal primal and dual solutions respectively.
\end{restatable}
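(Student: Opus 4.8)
The plan is to track the $p=2$ dual objective $D(\xv):=\bv^\top\xv-\tfrac12\xv^\top\Lm\xv$ (the $(D)$ of the introduction) and show that \ALGNAME\ shrinks the gap $g_t:=D(\xv^*)-D(\xv^t)$ by a factor $1-\tfrac1\tau$ per iteration in expectation. First I would record the elementary facts: since $\Lm\succeq 0$ and $\bv\perp\one$, $D$ attains its maximum at $\xv^*$ with $\Lm\xv^*=\bv$; expanding with $\bv=\Lm\xv^*$ gives the identity $g_t=\tfrac12\norm{\xv^*-\xv^t}_\Lm^2$, and $D(\xv^*)=\tfrac12\norm{\xv^*}_\Lm^2=\sE(\fv^*)$ by strong duality. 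Next, a cut-toggle at a fundamental cut $S=S_e$ --- which by construction resets the net flow out of $S$ to $\sum_{i\in S}b(i)$ --- is exactly the exact line maximization $\xv^t\mapsto\xv^t+\delta^\star\one_S$ of the concave quadratic $\delta\mapsto D(\xv^t+\delta\one_S)$, and a one-line computation shows it increases $D$ by $\frac{(\one_S^\top\Lm(\xv^*-\xv^t))^2}{2\,\one_S^\top\Lm\one_S}$. Since each iteration weakly increases $D$, the gap $g_t$ is deterministically nonincreasing, which keeps the expectation recursion clean.

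The heart of the argument is a decomposition over fundamental cuts dual to the fundamental-cycle expansion in \OALGNAME. The $n-1$ cut indicators $\{\one_{S_e}:e\in T\}$ together with $\one$ form a basis of $\R^V$: applying the tree incidence matrix $\Am_T^\top$ sends $\one_{S_e}$ to $\pm\one_e$ and $\one$ to $\mathbf 0$, forcing independence. Writing the residual $\yv:=\xv^*-\xv^t=\sum_{e\in T}\beta_e\one_{S_e}+c\,\one$, the same computation identifies $\beta_e$ with the potential drop of $\yv$ across the tree edge $e$. Pairing with $\Lm\yv$ and using $\one^\top\Lm=0$ yields the key identity
\[
2g_t\;=\;\yv^\top\Lm\yv\;=\;\sum_{e\in T}\beta_e\,\rho_e,\qquad \rho_e:=\one_{S_e}^\top\Lm\yv,
\]
where $\rho_e=\sum_{i\in S_e}b(i)-(\text{net flow out of }S_e\text{ under }\xv^t)$ is precisely the residue zeroed by a toggle at $S_e$. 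This mirrors $2\sE(\fv^t-\fv^*)=\sum_{f\notin T}(\fv^t-\fv^*)(f)\cdot(\text{voltage drop of }\fv^t\text{ around }C_f)$ in \OALGNAME.

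With this in hand, the sampling distribution must be $p_e\propto r(e)K_e$ over tree edges, where $K_e:=\one_{S_e}^\top\Lm\one_{S_e}$ is the total conductance crossing $S_e$; the normalizer $\sum_{e\in T}r(e)K_e$ equals $\tau$ by the double-counting $\sum_{e\in T}r(e)K_e=\sum_{(i,j)\in E}\tfrac{1}{r(i,j)}\sum_{e\in P_{ij}}r(e)=\sum_{(i,j)\in E}\st_T(i,j)$, with $P_{ij}$ the tree path. Hence $K_e/p_e=\tau/r(e)$, and Cauchy--Schwarz on $2g_t=\sum_e\bigl(\sqrt{K_e/p_e}\,\beta_e\bigr)\bigl(\sqrt{p_e/K_e}\,\rho_e\bigr)$, together with the trivial bound $\sum_{e\in T}\tfrac{\beta_e^2}{r(e)}\le\yv^\top\Lm\yv=2g_t$ (a partial sum of the nonnegative terms of $\norm{\yv}_\Lm^2$) and $\sum_e\tfrac{p_e}{K_e}\rho_e^2=2\,\E[\text{improvement}\mid\xv^t]$, gives $(2g_t)^2\le(\tau\cdot 2g_t)\bigl(\sum_e\tfrac{p_e}{K_e}\rho_e^2\bigr)=4\tau g_t\,\E[\text{improvement}\mid\xv^t]$, i.e.\ $\E[g_{t+1}\mid\xv^t]\le(1-\tfrac1\tau)g_t$. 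Iterating over the history with $\xv^0=\mathbf 0$ (so $g_0=\tfrac12\norm{\xv^*}_\Lm^2$) gives $\E[g_K]\le e^{-K/\tau}\tfrac12\norm{\xv^*}_\Lm^2=\tfrac{\eps}{2\tau}\norm{\xv^*}_\Lm^2$ for $K=\tau\ln(\tau/\eps)$; since $g_K=\tfrac12\norm{\xv^*-\xv^K}_\Lm^2$, this is the first claim. For the flow, $\fv^K$ is the Ohm's-law flow of $\xv^K$, so $\sE(\fv^K)=\tfrac12\norm{\xv^K}_\Lm^2$; expanding $\norm{\xv^K}_\Lm^2=\norm{\xv^*}_\Lm^2-2\iprod{\bv}{\xv^*-\xv^K}+\norm{\xv^*-\xv^K}_\Lm^2$, bounding the cross term by $|\iprod{\bv}{\xv^*-\xv^K}|\le\norm{\xv^*}_\Lm\norm{\xv^*-\xv^K}_\Lm$ (Cauchy--Schwarz, and Jensen for the expectation), and using $\sE(\fv^*)=\tfrac12\norm{\xv^*}_\Lm^2$ gives $\E[\sE(\fv^K)]\le\bigl(1+O(\sqrt{\eps/\tau})\bigr)\sE(\fv^*)\le(1+\eps)\sE(\fv^*)$, up to a constant-factor adjustment of the accuracy parameter (or directly from $g_K$ if the output flow is corrected on $T$ to be feasible).

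The one genuinely non-obvious step is the decomposition $2g_t=\sum_{e\in T}\beta_e\rho_e$ and its marriage with the sampling weights: recognizing that fundamental cuts are the dual objects of fundamental cycles, that $\beta_e$ is the tree-edge potential drop, and --- crucially --- that the ``dual stretch'' $r(e)K_e$ of a tree edge sums over $T$ to the ordinary total stretch $\tau$, which is exactly what lets stretch-proportional sampling buy $1/\tau$ progress per step. Everything else is the standard coordinate-ascent-with-Cauchy--Schwarz bookkeeping plus the low-stretch-tree accounting already used in \OALGNAME.
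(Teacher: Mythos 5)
Your analysis of the potentials is correct, and its core is a mild but genuine variant of the paper's argument. You compute the per-step expected gain $\frac{1}{2\tau}\sum_{e\in T}r(e)\rho_e^2$ with exactly the paper's sampling weights (your $K_e$ is $1/R(C(i,j))$, and your double-counting of $\sum_{e\in T}r(e)K_e$ is \Cref{lem:stretch}), but where the paper lower-bounds $\frac12\sum_{e\in T}r(e)\rho_e^2$ by the dual optimality gap via a primal certificate --- the feasible tree-defined flow $\fv_{T,\xv}$, whose duality gap is shown in \Cref{lem:gap} to equal exactly this sum --- you obtain the same inequality by expanding the residual $\xv^*-\xv^t$ in the fundamental-cut basis and applying Cauchy--Schwarz. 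Both routes are sound; yours keeps the analysis entirely on the dual side, while the paper's choice of certificate is not incidental: the same tree-defined flow and the same gap identity are what carry the primal half of the theorem.

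That primal half is where your proposal has a real gap. The algorithm does not return the Ohm's-law flow of $\xv^K$; it returns the tree-defined flow $\fv^K=\fv_{T,\xv^K}$, which follows Ohm's law only off the tree and is corrected on tree edges to be a feasible $\bv$-flow --- feasibility is the point, since $\sE(\fv^*)$ is the minimum over feasible flows. Consequently $\sE(\fv^K)\neq\frac12\norm{\xv^K}_\Lm^2$ in general, and your expansion bounds the energy of the wrong (infeasible) object; the parenthetical ``or directly from $g_K$ if the output flow is corrected on $T$'' is precisely the statement that needs proof, and it is where the extra $\ln\tau$ in $K$ is spent. What is needed is $\sE(\fv_{T,\xv^K})-\sE(\fv^*)=\gap(\fv_{T,\xv^K},\xv^*)=\gap(\fv_{T,\xv^K},\xv^K)-g_K\leq(\tau-1)\,g_K$, i.e.\ the reverse inequality $\frac12\sum_{e\in T}r(e)\rho_e^2\leq\tau g_K$, which the paper proves in \Cref{lem:rounding_error} by observing that the expected gain of one more iteration, $\frac{1}{\tau}\gap(\fv_{T,\xv^K},\xv^K)$, can never exceed the remaining gap $g_K$ because $\sB$ never exceeds $\sB(\xv^*)$. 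Your own machinery supplies this in one line, so the hole is fillable, but as written the flow claim is not established; note also that your $\bigl(1+O(\sqrt{\eps/\tau})\bigr)$ bound only yields $(1+\eps)$ when $\eps\gtrsim 1/\tau$, whereas the gap-based conversion gives it outright.
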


However, unlike Kelner et al., we cannot show that each individual cut-toggling update can be made to run in polylogarithmic time. If we abstract the desired cut-toggling update step as a natural data structure problem,  we show that such a data structure cannot be implemented in $O(n^{1-\epsilon})$ time for any $\epsilon > 0$ given a conjecture about the {\em online matrix-vector multiplication problem (OMv)} made by Henzinger, Krinninger, Nanongkai and Saranurak \cite{HKNS15}. They have conjectured that this problem does not have any algorithm that can carry out an online sequence of $n$ Boolean matrix-vector multiplications in time $O(n^{3-\epsilon})$, and show that if the conjecture is false, then various long-standing dynamic graph problems will have faster algorithms.  We show that a single Boolean matrix-vector multiply can be carried out as a sequence of $O(n)$ operations of our desired data structure.  Given the conjecture, then, we cannot implement the data structure operations in $O(n^{1-\epsilon})$ time. Thus there is not a straightforward  near-linear time version of the \ALGNAME\  algorithm.\footnote{In a personal communication, Sherman \cite{Sherman17} said he also had worked out a dual version of the \OALGNAME\  algorithm, but was unable to solve the data structure problem for the updates to potentials.  Our result explains why this might be difficult to do.}

Nevertheless, we surmount this data structural lower bound  by exploiting the fact that the sequence of cuts to be updated can be sampled in advance and, thus, the updates can be batched, circumventing the ``online'' (or ``sequential'') requirement in OMv. This is possible because both the spanning tree $T$ and the probability distribution over cuts of $T$ are fixed at the beginning of the algorithm. More precisely, denote the number of iterations of \ALGNAME\ by $K$ (which is $\widetilde{O}(m)$). 
Instead of sampling the fundamental cuts one at a time, consider sampling the next $l$ cuts that need to be updated for some $l \ll K$. In each ``block" of size $l \ll K$, we contract all the edges of $T$ that do not correspond to one of the $l$ fundamental cuts to be updated. In this way, we work with a contracted tree of size $O(l)$ in each block (instead of the full tree, which has size $O(n)$). This makes the updates faster. However, the price we pay is that at the end of each block, we need to propagate the updates we made (which were on the contracted tree), back to the entire tree. Overall, we show that each block takes $O(l^2 + m)$ time. Since there are $\widetilde{O}(\frac{m}{l})$ blocks, the total runtime is $\wtd{O}(ml + \frac{m^2}{l})$.
Choosing $l=\sqrt{m}$ thus gives a $\tilde{O}(m^{1.5})$ time algorithm.  Interestingly, in a computational study of \OALGNAME, Boman, Deweese, and Gilbert \cite{boman_deweese_gilbert_2016} explored an heuristic implementation that  batched its cycle-toggling updates by looking for collections of edge-disjoint cycles, and found that in many cases this gave a speedup in their experiments. 

By augmenting the batching idea with sparsification and recursion, one can further improve the running time of \ALGNAME\  to $\widetilde{O}(m^{1+\delta})$ for any $\delta > 0$.  To do this, observe that $l$ cut-toggling updates effectively break the spanning tree into $l+1$ components. After contracting the components to get a graph $H$ with $l+1$ vertices, we can show that solving an appropriate Laplacian system on $H$ gives a single update step that makes at least as much progress as the sequence of $l$ updates performed by the straightforward unbatched algorithm. A natural approach is to solve this Laplacian system by recursively calling the algorithm. However, this by itself does not give an improved running time. Instead, we first spectrally sparsify $H$ and then call the algorithm recursively to solve the \emph{sparsified} Laplacian system. Here we use the original Spielman-Teng spectral sparsification~\cite{ST11:journal} because it does not
require calling Laplacian solvers as a subroutine (e.g. \cite{BatsonSST13}). By carefully analyzing the error incurred by sparsification, we are able to show that the update step using sparsification  makes about as much progress as the update step without sparsification. The total running time of the recursive algorithm is then obtained by bounding the time taken at each layer of the recursion tree.

\begin{restatable}{theorem}{fastest}
\label{thm:time}
For any $\delta \in (0,1)$, \ALGNAME\ with batching, sparsification, and recursion can be implemented to run in $O(A^{\frac{1}{\delta}}m^{1+\delta}(\log n)^{\frac{B}{\delta}}(\log \frac{1}{\epsilon})^{\frac{1}{\delta}})$ time, where $A$ and $B$ are constants. 
\end{restatable}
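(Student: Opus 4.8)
The plan is to set up the recursion carefully and then bound the work done at each level of the recursion tree, showing it forms a geometric-like series that sums to $O(A^{1/\delta} m^{1+\delta} (\log n)^{B/\delta}(\log\frac1\epsilon)^{1/\delta})$. First I would fix the block size to be $l = m^{\delta}$ (or possibly $m^{1-\delta}$, chosen to balance the two competing costs below) and recall the batched implementation: within a block we sample the next $l$ fundamental cuts, contract all tree edges not separating one of these cuts to obtain a tree on $O(l)$ vertices, perform the batched update there, and then propagate potentials back to the full tree in $O(m)$ time. From the technical overview, handling a block of $l$ toggles amounts to solving one Laplacian system on the contracted graph $H$, which has $l+1$ vertices but possibly up to $m$ edges. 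The key structural input I would invoke is the claim (established earlier in the paper) that solving this Laplacian on $H$ makes at least as much progress as $l$ individual unbatched toggles; combined with Theorem \ref{thm:dual_kosz}, this means $\widetilde O(m/l)$ blocks suffice to reach accuracy $\epsilon$, each block requiring one recursive solve.

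The next step is to control the recursion. Before recursing on $H$, I would spectrally sparsify it using Spielman–Teng \cite{ST11:journal} to get $H'$ with $O(l \cdot \poly\log n)$ edges (not requiring a Laplacian solver as a subroutine — this is why that particular sparsifier is chosen), and then recurse on $H'$. I would need a lemma quantifying that an $\epsilon'$-approximate solve of the sparsified system yields an $O(\epsilon')$-approximate step for the original block; this is the "carefully analyzing the error incurred by sparsification" part and it requires tracking how the spectral approximation factor degrades the progress guarantee from Theorem \ref{thm:dual_kosz}. One must also set the inner accuracy parameter: since we do $\widetilde O(m/l)$ blocks, each inner solve needs accuracy roughly $\epsilon \cdot \poly(l/m)$, contributing only an extra $\log\frac{1}{\epsilon}$ (and $\log n$) factor per recursion level, which is where the $(\log\frac1\epsilon)^{1/\delta}$ and $(\log n)^{B/\delta}$ come from after $1/\delta$ levels.

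For the running-time bookkeeping, let $T(m)$ denote the time on a graph with $m$ edges. Each of the $\widetilde O(m/l)$ blocks costs: $O(l^2)$ for the batched update on the contracted tree, $O(m)$ for propagation back to the full tree, $O(m\,\poly\log n)$ for building the sparsifier of $H$, plus $T(O(l\,\poly\log n))$ for the recursive solve. With $l = m^\delta$ this gives a recurrence of the shape
\begin{equation*}
T(m) \;\le\; \widetilde O\!\left(m^{1-\delta}\right)\cdot\Bigl( O(m) + T\bigl(\widetilde O(m^\delta)\bigr)\Bigr),
\end{equation*}
so the non-recursive work at the top level is $\widetilde O(m^{2-\delta})$ — which is \emph{not} yet $m^{1+\delta}$, so a single level is not enough; the recursion must be unrolled to depth $1/\delta$, at which point the subproblem size drops to $O(\poly\log n)$ and the accumulated work telescopes to $\widetilde O(m^{1+\delta})$ with the stated $A^{1/\delta}$ and $(\log n)^{B/\delta}$ overhead from the $1/\delta$ multiplicative $\poly\log$ factors. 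I expect the main obstacle to be the error analysis through the combination of sparsification \emph{and} recursion simultaneously: one has to show that the spectral-approximation errors do not compound across the $1/\delta$ levels (each level both sparsifies and solves approximately), and that the progress measure $\norm{\xv^* - \xv^k}_{\Lm}^2$ from Theorem \ref{thm:dual_kosz} remains a valid potential when the Laplacian $\Lm$ itself is being replaced by a sparsified surrogate at each level. Carefully choosing the per-level accuracy so that the total error is $\epsilon$ while keeping the per-level $\log(1/\epsilon')$ bounded is the delicate quantitative heart of the argument; the rest is the routine geometric summation of the work across recursion levels.
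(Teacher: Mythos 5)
The architecture you describe (batch, contract, sparsify with Spielman--Teng, recurse, using the fact that one exact batched step dominates the $l$ sequential toggles) is indeed the paper's, but your quantitative bookkeeping has a genuine gap: the parameter choice is inverted, and the conclusion you draw from it is false. With block size $l=m^{\delta}$ you have $\widetilde{O}(m^{1-\delta}\log\frac1\epsilon)$ blocks at the top level, and each block already incurs $\Omega(m)$ non-recursive work (contraction, building the sparsifier, propagating potentials back). That is $\widetilde{\Omega}(m^{2-\delta})$ spent at level zero alone; no amount of unrolling the recursion can make this ``telescope'' down to $m^{1+\delta}$, since recursion only adds work below a level, it never reduces work already charged at that level. (Also, with subproblem size $m^{\delta}$ the depth is not $1/\delta$: the size goes $m \to m^{\delta} \to m^{\delta^2} \to \cdots$.) The correct regime is the opposite of the one you committed to (and the one you only mention parenthetically): sample $d \approx m^{1-\delta}\,\mathrm{poly}\log n$ tree edges per batch, so that only $K = O(m^{\delta}\log\frac1\epsilon)$ batched iterations are needed per level, each recursing on a contracted-and-sparsified graph with $\approx m^{1-\delta}\,\mathrm{poly}\log n$ edges. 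Then the non-recursive work per level is $K\cdot \widetilde{O}(m) = \widetilde{O}(m^{1+\delta})$, the recursion has depth $1/\delta$, there are $K^{\l}$ nodes at level $\l$ each of size $m^{1-\l\delta}(\mathrm{poly}\log n)^{\l}$, and the total is dominated by the bottom level, which is exactly where the $A^{1/\delta}$, $(\log n)^{B/\delta}$ and $(\log\frac1\epsilon)^{1/\delta}$ factors come from.

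A secondary, non-fatal difference concerns the error analysis. You propose requiring inner accuracy $\epsilon\cdot\mathrm{poly}(l/m)$ and worry about errors compounding across levels; the paper avoids this entirely by fixing \emph{constant} sparsification quality $\gamma$ and \emph{constant} recursive accuracy $\epsilon'$ and proving (\Cref{lem:errs}, \Cref{lem:gap_reduce}) that each approximate batched step still gains at least a constant fraction of the dual-objective progress of the exact batched step, with a ``revert if the objective did not increase'' safeguard covering sparsifier failure. Progress is measured per step in the dual objective of the \emph{original} graph, so nothing compounds across levels and no $\epsilon$-dependent inner accuracy is needed; your variant could be made to work but would need essentially the same constant-fraction-of-progress lemma anyway.
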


\textbf{General $p>1$.} For a general value of $p > 1$, we propose a cycle-toggling algorithm for solving the minimum weighted $p$-norm flow problem and a cut-toggling algorithm for solving its dual. These algorithms generalize \OALGNAME\ and \ALGNAME. The cycle-toggling algorithm works by maintaining a feasible flow $\fv$. At each iteration, it chooses a low-stretch spanning tree $T$ with respect to weights that are determined by $\fv$. It then samples a fundamental cycle of $T$, and adds $\Delta$ to the flow value on every edge in the cycle, where $\Delta \in \R$ is chosen to maximize the decrease in the energy of the flow. On the other hand, the cut-toggling algorithm maintains a vector of potentials $\xv$. Just like the cycle-toggling algorithm, it chooses a low-stretch tree $T$ at each iteration with respect to weights that are determined by $\xv$. It then samples a fundamental cut of $T$, and adds $\Delta$ to the potential of every vertex in the cut, where $\Delta \in \R$ is chosen to maximize the increase in the dual objective. 

Our main result is a bound on the {iteration complexity} of the cycle-toggling and cut-toggling algorithms. We show that the cycle-toggling algorithm can be used to solve the minimum $p$-norm flow problem for all $p \geq 2$, and the cut-toggling algorithm can solve its dual for all $1 < p \leq 2$. The iteration bounds depend on $R$, which is the ratio between maximum and minimum values of $r(i,j)$.

\begin{restatable}{theorem}{cycleiter}
\label{thm:cycle_iter}
For all $p \geq 2$, the cycle-toggling algorithm finds a primal solution $\fv^K$ satisfying $\E[\sE(\fv^K)] \leq (1+\epsilon)\sE(\fv^*)$ in 
$K = O\left(
\left( p2^{2p-1} \cdot m\log n \log\log n +  m(nR)^{\frac{1}{p-1}}\right) \ln \left(\frac{1}{\epsilon}\right)\left(p\ln(n) + \ln(R)
\right)
\right)$ iterations, for $\sE()$ the primal objective function and $\fv^*$ an optimal primal solution.
\end{restatable}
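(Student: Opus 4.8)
The plan is to establish a geometric per-iteration decrease of the suboptimality gap $\Phi^k := \sE(\fv^k) - \sE(\fv^*)$ and then iterate. Two structural facts drive everything: the iterates $\fv^k$ stay feasible, so $\fv^k - \fv^*$ always lies in the cycle space of $G$; and a cycle-toggling step with the energy-optimal $\Delta$ is an exact line minimization of the convex function $\sE$ along the direction $\one_{C}$ for the sampled fundamental cycle $C$. Thus the algorithm is randomized exact coordinate descent on $\sE$ in the coordinate system given by the fundamental-cycle basis of the current tree $T$, and the theorem reduces to a single inequality: $\E_{C}\big[\sE(\fv^k) - \min_{\Delta}\sE(\fv^k + \Delta\one_C)\big] \ge \Phi^k/K_0$, where the expectation is over the algorithm's sampling and $K_0 = O\big(p2^{2p-1}\,m\log n\log\log n + m(nR)^{1/(p-1)}\big)$. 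This gives $\E[\Phi^{k+1}\mid\fv^k]\le (1-1/K_0)\Phi^k$, hence $\E[\Phi^K]\le e^{-K/K_0}\Phi^0$; taking $K = K_0\big(\ln(1/\epsilon) + \ln(\Phi^0/\sE(\fv^*))\big)$ then yields $\E[\sE(\fv^K)]\le(1+\epsilon)\sE(\fv^*)$, and bounding $\ln(\Phi^0/\sE(\fv^*)) = O(p\ln n + \ln R)$ from the weight ratio and the quality of the starting flow (the feasible flow routed along the initial tree) produces the final $(\ln\tfrac1\epsilon)(p\ln n + \ln R)$ factor.

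For the single-cycle progress bound I would analyze one direction $\one_C$ in isolation. Let $\psi(\Delta) := \sE(\fv^k+\Delta\one_C)$, let $g_C := -\psi'(0) = -\langle\nabla\sE(\fv^k),\one_C\rangle$ be the first-order gain of the toggle, and write $w_C := \sum_{e\in C} r(e)\,|f^k_e|^{p-2}$ for the ``local quadratic weight'' of $C$ and $R_C := \sum_{e\in C} r(e)$. Applying the elementary estimate $|a+h|^p \le |a|^p + p\,|a|^{p-2}a\,h + O(2^p p)\big(|a|^{p-2}h^2 + |h|^p\big)$ (valid for $p\ge 2$) on each edge $e\in C$, summing, and minimizing the resulting upper bound on $\psi$ over $\Delta$, one gets
\[
\sE(\fv^k) - \min_{\Delta}\sE(\fv^k+\Delta\one_C)\;\gtrsim\;\frac{1}{2^{p}p}\,\min\Big\{\frac{g_C^2}{w_C},\;\Big(\frac{g_C^{p}}{R_C}\Big)^{1/(p-1)}\Big\}.
\]
The two cases---optimal $|\Delta|$ small versus large relative to the flows on $C$---are where the $p$ and $2^{\Theta(p)}$ factors come from; at $p=2$ this reduces to the familiar $g_C^2/(2R_C)$.

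Next I would aggregate over fundamental cycles. Decompose the residual circulation as $\fv^k - \fv^* = \sum_{e\notin T}\alpha_e\,\one_{C_e}$ along the tree's fundamental cycles. Convexity of $\sE$ gives $\Phi^k \le \langle\nabla\sE(\fv^k),\fv^k-\fv^*\rangle = -\sum_{e\notin T}\alpha_e\,g_{C_e} \le \sum_{e\notin T}|\alpha_e|\,|g_{C_e}|$, so it suffices to charge each term $|\alpha_e|\,|g_{C_e}|$ against the expected progress $\pi_e\cdot(\text{single-cycle bound for }C_e)$, where $\pi_e$ is the sampling probability of off-tree edge $e$. Mimicking \OALGNAME, I would take $\pi_e$ proportional to the (generalized) stretch $\mathrm{str}_T(e)$ of $e$ with respect to $T$ (measured in the weights $r(e)|f^k_e|^{p-2}$), so that a $1/K_0$ factor emerges with $K_0$ governed by the total stretch $\sum_{e\notin T}\mathrm{str}_T(e) = O(m\log n\log\log n)$ of the low-stretch tree; the passage from $\sum_e |\alpha_e|\,|g_{C_e}|$ to the sum of the single-cycle lower bounds then goes through H\"older's inequality with exponents $p$ and $p/(p-1)$---the replacement for the Cauchy--Schwarz step in \OALGNAME. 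The quadratic regime of the single-cycle bound pairs with the stretch term; the $p$-th-power regime is controlled by bounding the ratio between the two regime quantities using $|f^k_e|\le\poly(n)$ and the weight ratio $R$, and this is where the additive $m(nR)^{1/(p-1)}$ term in $K_0$ is born.

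I expect this aggregation to be the main obstacle. At $p=2$ everything is exact and uniform: the weights are the fixed resistances $r(e)$, the single-cycle decrease is exactly $g_C^2/(2R_C)$, and Cauchy--Schwarz together with the definition of stretch directly give $K_0 = \tau$. For $p\ne 2$ three difficulties compound---the weights $r(e)|f^k_e|^{p-2}$ change from iteration to iteration (so the low-stretch $T$ must be re-chosen at each step, which is harmless for the per-step bound but must be noted), edges with near-zero flow make these weights degenerate and must be bounded away from $0$ in terms of $n$ and $R$, and the two regimes of the single-cycle bound must be reconciled through H\"older rather than Cauchy--Schwarz---which is what inflates the constants to $p2^{2p-1}$ and forces the extra additive $m(nR)^{1/(p-1)}$ term. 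Once the per-step inequality $\E[\Phi^{k+1}\mid\fv^k]\le(1-1/K_0)\Phi^k$ is in place, the remainder is the routine geometric-decay calculation together with the bound $\ln(\Phi^0/\sE(\fv^*))=O(p\ln n+\ln R)$, which combine to give exactly the claimed iteration count.
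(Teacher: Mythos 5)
Your overall architecture matches the paper's proof closely: a primal-only analysis establishing $\E[\Phi^{k+1}\mid\fv^k]\le(1-1/\tau)\Phi^k$, the decomposition of the circulation $\fv^*-\fv^k$ into fundamental cycles of a low-stretch tree taken in the weights $r(e)\abs{f^k(e)}^{p-2}$, the KPSW-style Taylor estimates with the $p2^{\Theta(p)}$ constants, the observation that exact line search dominates any comparison step, the two regimes producing the two terms of $\tau$, and the $O(p\ln n+\ln R)$ bound on $\ln(\Phi^0/\sE(\fv^*))$. Where you diverge — and where there is a genuine gap — is the aggregation step. You keep only first-order convexity, $\Phi^k\le\langle\nabla\sE(\fv^k),\fv^k-\fv^*\rangle=\sum_e\alpha_e g_{C_e}$, and propose to charge $\sum_e\abs{\alpha_e}\abs{g_{C_e}}$ against $\sum_e\pi_e\min\{g_{C_e}^2/w_{C_e},(g_{C_e}^p/R_{C_e})^{1/(p-1)}\}$ via H\"older. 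Any such pairing needs bounds of the form $\sum_e\abs{\alpha_e}^2w_{C_e}\lesssim\tau\,\Phi^k$ (quadratic regime) and $\sum_e R_{C_e}\abs{\alpha_e}^p\lesssim\tau\,\Phi^k$ ($p$-th-power regime), i.e.\ it needs the second-order information $\Phi^k\gtrsim\frac{1}{p2^p}\sum_e r(e)\bigl(\abs{f^k(e)}^{p-2}\Delta(e)^2+\abs{\Delta(e)}^p\bigr)$ — precisely the term you threw away by using only first-order convexity. Expanding around $\fv^*$ instead yields weights $\abs{f^*(e)}^{p-2}$, which do not match the $\abs{f^k(e)}^{p-2}$ appearing in $w_C$, and patching this mismatch by "bounding the weights away from zero via $n$ and $R$" would contaminate the $m\log n\log\log n$ term with $\poly(nR)$ factors.

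The paper sidesteps H\"older entirely: it retains the full lower bound $\sE(\fv^*)-\sE(\fv^k)\ge\mathrm{FO}+\frac{1}{p2^p}\sum_e r(e)(\abs{f^k(e)}^{p-2}\Delta(e)^2+\abs{\Delta(e)}^p)$, compares the algorithm's step against the \emph{damped} step $\fv^k+\alpha_C\Delta_C\one_C$ (with $\Delta_C$ the coefficient of $\fv^*-\fv^k$ on cycle $C$, not your line-search optimum), and chooses $\alpha_C$ so small that the second-order loss of the damped step is absorbed, cycle by cycle, into the second-order gain term of the lower bound; the first-order terms then cancel exactly and $\tau=\sum_C 1/\alpha_C$. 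Relatedly, your sampling rule $\pi_e\propto\st_T(e)$ omits the second branch of the paper's distribution, which is proportional to the \emph{maximum} of the stretch-type quantity and $\bigl(p2^{2p-1}\sum_{e'\in C}r(e')/r(e)\bigr)^{1/(p-1)}$; that second branch is what makes the $\abs{\Delta_C}^p$ loss absorbable per cycle and is the actual source of the additive $m(nR)^{1/(p-1)}$ term. To complete your proof you should either reinstate the second-order term of the convexity lower bound and do the paper's term-by-term matching, or carry your two-regime per-cycle bound through an aggregation that explicitly invokes $\Phi^k\gtrsim\frac{1}{p2^p}\sum_e r(e)(\abs{f^k(e)}^{p-2}\Delta(e)^2+\abs{\Delta(e)}^p)$; as written, the step you flag as "the main obstacle" is indeed where the argument does not yet close.
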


\begin{restatable}{theorem}{cutiter}
\label{thm:cut_iter}
For all $1 < p \leq 2$, the cut-toggling algorithm finds a dual solution $\xv^K$ satisfying $\E[\sB(\xv^K)] \geq (1-\epsilon)\sB(\xv^*)$ in 
$K = O\left(\left(q 2^{2q-1} \cdot m\log n \log\log n +  nRm^{\frac{1}{q-1}}\right)\ln \frac{1}{\epsilon}\right)$ iterations, for $\sB()$ the dual objective function, and $\xv^*$ an optimal dual solution. Here, $q = \frac{p}{p-1}$. 
\end{restatable}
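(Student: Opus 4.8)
\textbf{Proof idea for Theorem~\ref{thm:cut_iter}.}
The plan is to run the proof template of Theorem~\ref{thm:cycle_iter} on the dual side. Since $\frac{1}{p-1} = q-1$, we have $\left(\abs{x(i)-x(j)}^p/r(i,j)\right)^{1/(p-1)} = \abs{x(i)-x(j)}^q/r(i,j)^{q-1}$, so maximizing the dual objective $\sB(\xv) = \bv\trans\xv - (1-\tfrac1p)\sum_{(i,j)\in\vec{E}} \abs{x(i)-x(j)}^q / r(i,j)^{q-1}$ is a weighted $q$-norm problem over the coboundary (cut) space of $G$ with edge weights $r(i,j)^{q-1}$, hence with weight ratio $R^{q-1}$; the function $\sB$ is concave and strictly so modulo the all-ones direction $\one$ (using $\bv\trans\one = 0$). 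Under this lens the cut-toggling algorithm mirrors the cycle-toggling algorithm under the dictionary: flows $\leftrightarrow$ potentials, fundamental cycles of $T$ $\leftrightarrow$ fundamental cuts of $T$, cycle space ($\approx m$-dimensional) $\leftrightarrow$ cut space ($(n-1)$-dimensional), exponent $p \leftrightarrow q$, weight ratio $R \leftrightarrow R^{q-1}$. Crucially, the tree $T$ is a low-stretch tree of the \emph{same} graph $G$, so the total-stretch term is unchanged — this is why the bound reads $q2^{2q-1} m\log n\log\log n$ rather than a version with $m,n$ swapped inside the stretch factor, while the ``condition-number'' term $nR\,m^{1/(q-1)}$ is precisely the image of $m(nR)^{1/(p-1)}$ under swapping the two combinatorial size parameters and substituting the dual weight ratio $R^{q-1}$, since $(R^{q-1})^{1/(q-1)} = R$.

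The core step is the one-iteration progress bound. With $\xv^t$ the current potentials and $g^t := \sB(\xv^*) - \sB(\xv^t) \ge 0$, sampling a tree edge $f$ (equivalently a fundamental cut $S_f$) from the algorithm's fixed distribution and performing the exact line search maximizing $\sB$ along $\one_{S_f}$, I want
\[
\E\!\left[\sB(\xv^{t+1}) - \sB(\xv^t)\,\middle|\,\xv^t\right]\ \ge\ \frac{1}{\kappa}\, g^t,\qquad \kappa = \wtd{O}\!\left(q2^{2q-1} m + nR\,m^{1/(q-1)}\right).
\]
To obtain this, decompose $\xv^* - \xv^t$ in the basis $\{\one_{S_f}\}_{f\in T}$ of fundamental-cut vectors (which spans the complement of $\one$, the space on which $\sB$ depends), upper-bound $g^t$ by a sum of per-direction gaps using concavity of the $q$-norm terms together with an $\ell_q$ power-mean/triangle inequality — this is where the $2^{\Theta(q)}$ factor (appearing as $q2^{2q-1}$) is born, from comparing $\bigl\lVert\sum_f \Delta_f \one_{S_f}\bigr\rVert_{q,\mathrm{wtd}}^q$ with $\sum_f \abs{\Delta_f}^q \lVert\one_{S_f}\rVert_{q,\mathrm{wtd}}^q$ and from the slack between the first-order (gradient) decrease and the exact-line-search decrease when $q>2$ — and lower-bound each per-direction gap by the realized line-search decrease. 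Choosing the sampling probability of $f$ proportional to its ``$q$-stretch contribution'' makes the sum over $f\in T$ telescope into the total stretch of $T$, whence the $m\log n\log\log n$ factor via a low-stretch spanning tree construction, while the anisotropy of the weighted $\ell_q$ ball over the $m$ edges and the weight spread $R^{q-1}$ contribute the $nR\,m^{1/(q-1)}$ term.

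Given the one-step bound, iterating yields $\E[g^K]\le(1-\tfrac1\kappa)^K g^0\le e^{-K/\kappa}g^0$, so $K=O(\kappa\ln(g^0/\epsilon'))$ drives $\E[g^K]\le\epsilon'$; rescaling to the multiplicative guarantee $\E[\sB(\xv^K)]\ge(1-\epsilon)\sB(\xv^*)$ costs only a $\log(g^0/\sB(\xv^*))$ factor, which is polylogarithmic in $n$, $R$, $\frac{1}{q-1}$ and is absorbed into $\wtd{O}$, exactly as in the cycle-toggling case. Since the algorithm re-picks a low-stretch tree each iteration with respect to weights determined by $\xv^t$, I would finally check that the reweighting keeps edge weights within a ratio controlled by $R$ so that the per-iteration constant $\kappa$ does not degrade and a fresh low-stretch tree is available at each step.

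The main obstacle is the one-step progress bound when $q>2$. For $q=2$ (i.e.\ $p=2$) this is the clean quadratic argument dual to KOSZ, where $\xv^*-\xv^t$ decomposes \emph{orthogonally} in the cut basis (Pythagoras) and the analysis collapses to the energy argument behind Theorem~\ref{thm:dual_kosz}. For $q>2$ the weighted $\ell_q$ geometry is non-Hilbertian, so in place of orthogonality one must use $p$-norm-aware smoothness/strong-convexity surrogates — a lower bound on the exact-line-search decrease along a single cut, and a matching upper bound on $g^t$ obtained by summing these — which are only within a $2^{\Theta(q)}$ factor of one another; one must then verify that the quantity controlling the sum is exactly the stretch of $T$ in the correctly reweighted metric and that the residual anisotropy over the edge set is exactly the claimed $nR\,m^{1/(q-1)}$. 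Pinning down these exponents (the interplay of $q$ versus $p$, $n$ versus $m$, and the $\frac{1}{q-1}$ power) so that they land precisely as in the statement is where the bulk of the technical work lies.
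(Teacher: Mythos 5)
Your proposal follows essentially the same route as the paper's proof: decompose $\xv^*-\xv^t$ over the fundamental cuts of $T$, bound the global gap from above and the single-cut line-search gain from below via $q$-th-power Taylor estimates (the Lemmas B.2/B.3-type inequalities) that differ by a $q2^{\Theta(q)}$ factor, sample each cut with probability proportional to $1/\alpha_C$ so that the first-order terms telescope exactly and the normalizer $\tau=\sum_C 1/\alpha_C$ splits into a stretch term $q2^{2q-1}m\log n\log\log n$ plus the term $n(mR^{q-1})^{1/(q-1)}=nRm^{1/(q-1)}$, and then iterate the per-step bound. The only cosmetic difference is that no extra $\log$ rescaling factor is needed on the dual side, since $\xv^0=0$ gives initial gap exactly $\sB(\xv^*)$.
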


Note that the cut-toggling algorithm, as stated, returns a dual solution $\xv$ and not a feasible flow $\fv$. However, we can show that it is possible to convert an approximately optimal dual solution $\xv$  to an approximately optimal feasible flow $\fv$, at the expense of multiplying the overall iteration count of the cut-toggling algorithm by a logarithmic factor. The conversion works by first routing a potential-defined flow with respect to $\xv$ (defined by the KKT equations), and then routing the residual supplies along the edges of a spanning tree. 
% \bj{should we elaborate more on this?}\mh{Yes, after you fixed everything else.}

The iteration complexity analysis of the cycle-toggling and cut-toggling algorithms for general $p$ involves two main challenges. The first is that both the progress made by one iteration involves summing terms of the form $\abs{f(e) + \delta}^p$, where $f(e)$ represents the value of the current flow on an edge, and $\delta$ is the amount by which we updated it. (The analysis of cut-toggling has similar terms, but involving $x(i)$.) Whereas for $p=2$ this can be expanded exactly as a sum of three terms, for  integer $p\neq 2$  the binomial expansion gives $p+1$ terms, and for $p$ not equal to an integer the Taylor expansion gives an infinite number of terms. The analysis, thus, involves carefully bounding these Taylor expansions. 

The second challenge is that we were unable to make a primal-dual analysis in the spirit of \cite{KOSZ13} go through for the general $p$ case. This is because (1) the expressions for the duality gap and the progress made in each iteration are more complicated, and thus harder to relate to each other, and (2) the low-stretch spanning tree now changes at each iteration, and thus it is harder to reason about tree-defined potentials as is done in the $p=2$ case. To circumvent these technical challenges of a primal-dual analysis, we instead adopt a primal-only analysis for the cycle-toggling algorithm, and a dual-only analysis for the cut-toggling algorithm. This style of analysis directly compares the progress made in an iteration to (a scalar multiple of) the difference in objective values between the current solution and the optimal solution. The first-order terms exactly match up, and the main work in the analysis is in comparing the higher-order terms. 
%The main 
%\mh{nice. can you say anything more about why that works?}

Unless otherwise noted, all proofs that are missing from the main body of the paper are in the Appendix.

% For the more general problem of $p$-norm flows, we show that $\frac{p}{p-1}2^{\frac{p+1}{p-1}}\widetilde{O}(m) + p^{p-1}2^{p+1}nm^{p-1}R$ cut-toggling iterations are sufficient to find a near-optimal flow for $1 < p \leq 2$ and for $R$ the ratio between the maximum and minimum values of $r(i,j)$.   When $p \geq 2$, we show that $p2^{2p-1}\widetilde{O}(m)+ m(p2^{2p-1}nR)^{\frac{1}{p-1}})$ cycle-toggle iterations are sufficient. The analysis here becomes far more difficult, and we do not work out the running time of an iteration or perform a careful error analysis; each iteration involves solving a polynomial equation to obtain the step size and computing a new low-stretch tree in each iteration.  

%  For $p=2$, we give a duality-based analysis of the cut-toggling algorithm that parallels the analysis of the cycle-toggling algorithm in \cite{KOSZ13}. However, unlike the case with the cycle-toggling algorithm, we prove that assuming the OMv conjecture, there is no natural data structure that allows us to perform each iteration of cut-toggling in sublinear time. To circumvent this lower bound, we exploit the fact that the sequence of cuts to be toggled can be generated {in advance} instead of one at a time. This additional structure allows us to use batching, sparsification, and recursion to overcome the data structure lower bound. In essence, this strategy is to transform a fully online problem into a sequence of easier offline problems, and may be applicable in other settings. 

\paragraph{Future Work.}
Our work raises various directions for future work.
(a) Currently we circumvent our lower bound by batching updates, effectively turning the ``online'' problem with $\tilde O(m)$ updates into a batched problem with $\tilde O(\sqrt m)$ batches of $\sqrt m$ updates each. However to achieve a fast algorithm  we have to compress the graph and call the algorithm recursively, which results in a $O(m^{o(1)})$ increase in the running time. It would be interesting to explore if instead we could use  a data structure with polylogarithmic time per operation that  returns an approximate answer. Note that our lower bound also works for data structures that return a multiplicative approximation, but not for additive approximation. Our recursive algorithm for $p=2$ indicates that small additive errors can be tolerated by \ALGNAME. Thus it is interesting to explore whether there is  an efficient data structure for each update step with small additive error. Potentially a combination of batched updates and approximation could lead to a $\tilde O(m)$-time algorithm for $p=2$.
%Or can the lower bound be circumvented by using a different algorithm when the tree is the star?
%We propose a cut-based combinatorial algorithm to solve Laplacian systems approximately. This algorithm is dual to the cycle-based algorithm by Kelner et al. \cite{KOSZ13}. We show that our algorithm converges in a near-linear number of iterations. 

%To achieve a near-linear running time for, we would further need each iteration to run in polylogarithmic time. We give evidence against this, by presenting a reduction from the OMv conjecture. This is in contrast to the algorithm in \cite{KOSZ13}, which uses a data structure such that each iteration of the algorithm runs in $O(\log n)$ time.   In order to obtain a better running time, one would need to  show it is possible take advantage of the particular structure of updates in the algorithm to implement the data structure.  Note that our reduction crucially needs that a very specific spanning tree (albeit with very small stretch) is chosen. Is it possible for the algorithm to choose a different small-stretch spanning tree that is amendable to a polylogarithmic time implementation?

(b) For general $p$-norm flows, we give bounds on the iteration complexity of the cut-toggling and cycle-toggling algorithms. A natural question is if these algorithms can be made to run in time comparable or better than the current state of the art algorithms. The main bottleneck for cut-toggling and cycle-toggling is the fact that they have to compute a new low-stretch spanning tree at the beginning of every iteration. We believe that the key to obtaining a fast running time is to have an efficient method for dynamically maintaining these low-stretch spanning trees. 
Note that the currently best known algorithms for this problem take $n^{o(1)}$ time per \emph{single} edge weight change and return a spanning tree with expected stretch within $n^{o(1)}$ of the minimum stretch~\cite{DBLP:conf/soda/ChechikZ20,DBLP:conf/soda/ForsterGH21}.
However, at the end of each update step in \ALGNAME\ up to $m$ edges could change their weight, resulting in a $O(m n^{o(1)})$ time per update step if we use the data structures of~\cite{DBLP:conf/soda/ChechikZ20,DBLP:conf/soda/ForsterGH21}. It would be interesting to explore whether the dynamic small-strech tree data structures can be modified to handle these very structured weight changes of potentially a large number of edges more efficiently.

\textbf{Paper Structure.} In Section \ref{sec:notation}, we introduce notation and relevant definitions.  In Section \ref{sec:alg} we give the \ALGNAME\ algorithm for the $p=2$ case and prove that it runs in a near-linear number of iterations. Then in Section \ref{sec:pnorm}, we give a general cycle-toggling algorithm and cut-toggling algorithm and analyze their iteration complexities for finding minimum $p$-norm flows. The remainder of the paper then focuses on computational aspects of \ALGNAME\ for the $p=2$ setting. In Section \ref{sec:ds} we give evidence to show that each iteration of \ALGNAME\ cannot be implemented in sublinear time if they are to be performed one-by-one in ``online" fashion. This lower bound is via a reduction to the OMv conjecture.  Then in Section \ref{sec:speedup}, we show how to overcome this data structural lower bound to obtain almost-linear running time for \ALGNAME.

\section{Notation and Problem Statement} \label{sec:notation}
We are given an undirected graph $G = (V, E)$, with positive weights $\mathbf{r} \in \R^E$. Although the graph is undirected, it is standard to fix an arbitrary orientation $\vec{E}$ of $E$. Let $\Am$ be the vertex-arc incidence matrix of $(V, \vec{E})$. 
In addition to the graph $G$ and the resistances $\mathbf{r}$, we are given a \emph{supply vector} $\mathbf{b} \in \R^V$ such that $\sum_{i \in V} b(i)=0$. We call any flow $\fv \in \R^{\vec{E}}$ that satisfies $\Am \fv = \bv$ a \emph{$\bv$-flow}. For all $(i,j) \in \vec{E}$, we define $f(j,i) = -f(i,j)$. Our goal is to solve the minimum weighted $p$-norm flow problem, which is shown with its dual below.
\begin{align*}
(P) \quad &\min  \quad \frac1p\sum_{(i,j)\in\vec{E}} r(i,j)\abs{f(i,j)}^p  
\qquad\qquad (D)  &\max \quad \bv^T\xv - \left(1-\frac1p\right)\sum_{(i,j)\in\vec{E}} \left(\frac{\abs{x(i) - x(j)}^p}{r(i,j)}\right)^{\frac{1}{p-1}} \\
&\text{s.t.} \quad \Am\fv = \bv 
&\text{s.t.} \quad \xv \in \R^{V}~\text{unconstrained} 
\end{align*}

Let $\sE(\fv)$ denote the primal objective and $\sB(\xv)$ denote the dual objective. For clarity, we will let $q = \frac{p}{p-1}$ and $w(i,j) = r(i,j)^{-\frac{1}{p-1}}$, so that the dual objective is $\sB(\xv) = \bv^T\xv - \frac1q\sum_{(i,j) \in \vec{E}} w(i,j) \abs{x(i)-x(j)}^q$.  

Let $\fv^*$ denote the optimal primal solution and let $\xv^*$ denote an optimal dual solution. Note that there are infinitely many dual solutions, because the dual objective is invariant under adding a constant to every component of $\xv$.  
By strong duality (note that Slater's condition holds), $\sE(\fv^*) = \sB(\xv^*)$. Moreover,
the KKT conditions give a primal-dual characterization of optimality.
\begin{fact}[KKT Conditions for $p$-Norm Flow]
\label{fact:kkt}
Consider $\fv \in \R^{\vec{E}}$ and $\xv \in \R^V$. Then $\fv$ is optimal for the primal and $\xv$ is optimal for the dual if and only if the following conditions hold:
\begin{enumerate}
    \item $\fv$ is a feasible $\bv$-flow;
    \item \label{ohm} For all $(i,j) \in \vec{E}$, $r(i,j)f(i,j)\abs{f(i,j)}^{p-2} = x(i) - x(j)$.\\ 
    Or equivalently, $f(i,j) = w(i,j)(x(i)-x(j))\abs{x(i)-x(j)}^{q-2}$. 
    % \item (Ohm's Law) There exist potentials $\mathbf{p} \in \R^V$ such that $f(i, j) = \frac{p(i) - p(j)}{r(i,j)}$ for all $(i, j) \in \vec{E}$. Moreover, $\mathbf{p}$ is the solution to $\mathbf{L}\mathbf{p} = \mathbf{b}$, up to adding a constant to every component of $\mathbf{p}$.\footnote{Note that the all-ones vector is in the null space of $\mathbf{L}$.} 
\end{enumerate}
\end{fact}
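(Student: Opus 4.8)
The plan is to treat $(P)$ as a convex program and lean on the fact that, since $p>1$ (so also $q=\frac{p}{p-1}>1$), both $t\mapsto\abs{t}^p$ and $t\mapsto\abs{t}^q$ are continuously differentiable and strictly convex, so no subgradient subtleties arise and the usual first-order conditions apply verbatim. I fix the orientation convention so that $(\Am\trans\xv)_e=x(i)-x(j)$ for $e=(i,j)\in\vec{E}$; then with the Lagrangian $L(\fv,\xv)=\frac1p\sum_{(i,j)\in\vec{E}}r(i,j)\abs{f(i,j)}^p-\xv\trans(\Am\fv-\bv)$ one has $\inf_{\fv}L(\fv,\xv)=\sB(\xv)$ (this is exactly the Lagrangian-dual derivation), so $(D)$ is the Lagrangian dual of $(P)$ and strong duality holds by Slater. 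I would also note once and for all that the two displayed forms of condition~\ref{ohm} are equivalent: writing $u=x(i)-x(j)$, the identity $r(i,j)f(i,j)\abs{f(i,j)}^{p-2}=u$ forces $\operatorname{sgn}(f(i,j))=\operatorname{sgn}(u)$ and $\abs{f(i,j)}=(\abs{u}/r(i,j))^{1/(p-1)}$, which rearranges to $f(i,j)=w(i,j)\,u\abs{u}^{q-2}$ via $w(i,j)=r(i,j)^{-1/(p-1)}$ and $(q-1)(p-1)=1$; the converse is the same computation backwards.

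Direction ($\Leftarrow$): assume $(\fv,\xv)$ satisfies (1) and condition~\ref{ohm}; I would simply substitute. Feasibility gives $\bv\trans\xv=(\Am\fv)\trans\xv=\sum_{(i,j)\in\vec{E}}f(i,j)(x(i)-x(j))$, and condition~\ref{ohm} in the form $x(i)-x(j)=r(i,j)f(i,j)\abs{f(i,j)}^{p-2}$ turns each summand into $r(i,j)\abs{f(i,j)}^p$, so $\bv\trans\xv=\sum_{(i,j)}r(i,j)\abs{f(i,j)}^p$. Likewise $\abs{x(i)-x(j)}=r(i,j)\abs{f(i,j)}^{p-1}$, hence $w(i,j)\abs{x(i)-x(j)}^q=w(i,j)r(i,j)^q\abs{f(i,j)}^{q(p-1)}=r(i,j)\abs{f(i,j)}^p$ using $q(p-1)=p$ and $w(i,j)r(i,j)^q=r(i,j)$. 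Therefore $\sB(\xv)=\bigl(1-\tfrac1q\bigr)\sum_{(i,j)}r(i,j)\abs{f(i,j)}^p=\frac1p\sum_{(i,j)}r(i,j)\abs{f(i,j)}^p=\sE(\fv)$. Since $\fv$ is feasible and $\xv$ arbitrary, $\sE(\fv^*)\le\sE(\fv)=\sB(\xv)\le\sB(\xv^*)=\sE(\fv^*)$ by strong duality, so every inequality is an equality and $\fv$, $\xv$ are both optimal.

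Direction ($\Rightarrow$): assume $\fv$ primal optimal and $\xv$ dual optimal. Primal optimality gives feasibility, i.e.\ (1). For condition~\ref{ohm}, introduce the potential-induced flow $\tilde\fv\in\R^{\vec{E}}$ with $\tilde f(i,j)=w(i,j)(x(i)-x(j))\abs{x(i)-x(j)}^{q-2}$. Since $\xv$ maximizes the differentiable concave function $\sB$ over all of $\R^V$, we have $\nabla\sB(\xv)=0$, and computing $\partial\sB/\partial x(v)$ term by term shows this equation is precisely $\Am\tilde\fv=\bv$; so $(\tilde\fv,\xv)$ satisfies (1), and it satisfies condition~\ref{ohm} by construction (together with the equivalence of the two forms noted above). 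By the direction ($\Leftarrow$) just proved, $\tilde\fv$ is primal optimal. But $\sE$ is strictly convex (a sum of strictly convex functions $r(i,j)\abs{\cdot}^p$), so its minimizer over the affine set $\{\fv:\Am\fv=\bv\}$ is unique; hence $\fv=\tilde\fv$, and $\fv$ satisfies condition~\ref{ohm}.

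I do not expect a genuine obstacle here — this is just the KKT theorem specialized to a strictly convex objective with only linear equality constraints — and the only points needing care are the incidence-matrix sign conventions and the routine exponent algebra linking the $r(i,j)$-form and $w(i,j)$-form of Ohm's law. An equivalent packaging would invoke the KKT theorem directly: with Slater holding and no inequality constraints, $\fv$ is primal optimal iff there is a multiplier $\xv$ with $\nabla_{\fv}L(\fv,\xv)=0$, whose coordinate at $e=(i,j)$ reads $r(i,j)f(i,j)\abs{f(i,j)}^{p-2}=x(i)-x(j)$, and such $\xv$ is automatically dual optimal by the computation in the ($\Leftarrow$) step; I would favour the self-contained version above to keep the paper's exposition elementary.
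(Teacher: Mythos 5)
Your proposal is correct, and it follows essentially the route the paper itself relies on: the paper states this as a Fact justified by Lagrangian duality and the KKT conditions, with the key stationarity computation $r(e)f(e)\abs{f(e)}^{p-2}=x(i)-x(j)$ appearing in the dual derivation of Appendix~\ref{app:dual_deriv}, and strong duality via Slater asserted in Section~\ref{sec:notation}. Your write-up simply makes that standard argument self-contained (the substitution check for sufficiency, and $\nabla\sB(\xv)=0$ plus strict convexity of $\sE$ for necessity), and all the exponent algebra and sign conventions check out.
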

Thus if one is looking for an exact solution, then solving (P) is equivalent to solving (D): Given $\xv^*$, we can calculate $\fv^*$ by using $f^*(i,j) = w(i,j)(x^*(i)-x^*(j))\abs{x^*(i)-x^*(j)}^{q-2}$. On the other hand, given $\fv^*$, we can recover corresponding potentials $\mathbf{x}^*$ by setting $x^*(v) = 0$ for some arbitrary vertex $v$, and using the equation $x^*(i) - x^*(j) = r(i,j)f^*(i,j)\abs{f^*(i,j)}^{p-2}$ to solve for the potential on every other vertex.

Our goal is to compute an approximate minimum weighted $p$-norm flow $\fv$. More precisely:

\begin{center}
\framebox{\textbf{Goal:} Given $\epsilon > 0$, compute a $\bv$-flow $\fv \in \R^{\vec{E}}$ that satisfies $\sE(\fv) \leq (1+\epsilon)\sE(\fv^*)$.}
\end{center}

An important special case is when $p=2$. In this case, $\fv^*$ is known as the \emph{electrical flow}, and the dual problem is equivalent to solving the linear system $\Lm \xv = \bv$, where $\Lm = \sum_{ij \in E} \frac{1}{r(i,j)}(\mathbf{e_i} - \mathbf{e_j})(\mathbf{e_i} - \mathbf{e_j})\trans$ is the \emph{Laplacian matrix} of $G$. Here, $\mathbf{e_i}$ is the $i$th standard unit basis vector. 

When $p=2$, \Cref{ohm} in \Cref{fact:kkt} is also known as \emph{Ohm's Law}. For a $\mathbf{b}$-flow $\fv$, there exist potentials $\xv$ such that $(\fv, \xv)$ satisfies Ohm's Law if and only if $\fv$ satisfies \emph{Kirchoff's Potential Law}  (KPL): KPL states that for every directed cycle $C$, $\sum_{(i,j) \in C} f(i,j)r(i,j) = 0$.

Our algorithms will make use of low-stretch spanning trees.  Given weights $\mathbf{w}$, the {\em stretch} of a spanning tree $T$ with respect to $\wv$ is defined as 
$$
    \st_T(G, \wv) = \sum_{(i, j) \in \vec{E}} \st_T((i, j), \wv)
    = \sum_{(i, j) \in \vec{E}} \frac{1}{w(i, j)}\sum_{(k, l) \in P(i,j)} w(k, l),
    $$
    where $P(i, j)$ is the unique path from $i$ to $j$ in $T$. 
We can find a spanning tree $T$ with total stretch $\st_T(G)=O(m\log n \log\log n)$ in $O(m \log n \log\log n)$ time \cite{AN12}.

We use the notation $\one$ to stand for the vector of all 1s, and $\one_X$ to be the characteristic vector of a set $X$ that has 1s in the entries corresponding to the elements of $X$ and 0s elsewhere. 

For a vector $\vv$, $\norm{\vv}_\infty$ denotes $\max_i \abs{\vv(i)}$ and $\norm{\vv}_{-\infty}$ denotes $\min_i \abs{\vv(i)}$.

\section{A Cut-Toggling Algorithm for Solving Laplacian Linear Systems}
\label{sec:alg}
As a first step, we consider the case $p=2$, where the primal problem is that of finding an electrical flow and the dual is equivalent to solving the Laplacian linear system $\Lm \xv = \bv$. We present a cut-toggling algorithm for computing an approximate solution to $\Lm\xv = \bv$, and also an approximate minimum-energy $\bv$-flow. The goal of this section is to show that the cut-toggling algorithm converges in a near-linear number of iterations, and that each iteration runs in linear time. Later in \Cref{sec:speedup}, we will show how to speed up the algorithm to an almost-linear total running time. Since our algorithm is dual to the cycle-toggling algorithm of Kelner et al. \cite{KOSZ13} (which we call \OALGNAME\ in this paper), we will begin by describing the \OALGNAME\ algorithm.

The  \OALGNAME\ algorithm  works by maintaining a feasible $\bv$-flow $\fv$, and iteratively updates $\fv$  along cycles to satisfy Kirchkoff's Potential Law on the cycle.  It starts by choosing a spanning tree $T$ that has low stretch, and computes a $\mathbf{b}$-flow $\mathbf{f}^0$ that uses only edges in the tree $T$.  Then for a number of iterations $K$ that depends on the stretch of the tree, it chooses a non-tree edge $(i,j) \in E-T$ according to a probability distribution, and for the fundamental cycle closed by adding edge $(i,j)$ to $T$, it modifies the flow $\mathbf{f}$ so that Kirchoff's Potential Law is satisfied on the cycle.  The probability $P_{ij}$ that edge $(i,j)$ gets chosen is proportional to the total resistance around the cycle closed by $(i,j)$ divided by $r(i,j)$. Given the tree $T$  with root $r$ and the current flow $\mathbf{f}^t$ in iteration $t$, there is a standard way to define a set of potentials $\mathbf{x}^t$ (called the {\em tree-induced} or {\em tree-defined} potentials):
set $x(r)$ to 0, and $x(k)$ to the sum of $f(i,j)r(i,j)$ on the path in $T$ from $k$ to $r$.
We summarize \OALGNAME\  in Algorithm \ref{KOSZalg}.

\begin{algorithm}[t]
	Compute a  tree $T$ with low stretch with respect to resistances $\mathbf{r}$\;
	Find flow $\mathbf{f}^0$ in $T$ satisfying supplies $\mathbf{b}$\;
	Let $\mathbf{x}^0$ be tree-defined potentials for $\mathbf{f}^{0}$ with respect to tree $T$\;
	\For{$t \gets 1$ \KwTo $K$}{
	 \hangindent=3em
		Pick an $(i,j) \in E-T$ with probability $P_{ij}$;
		Update $\mathbf{f}^{t-1}$ to satisfy KPL on the fundamental cycle corresponding to $(i,j)$\;
		Let $\mathbf{f}^t$ be resulting flow\;
		Let $\mathbf{x}^t$ be tree-defined potentials for $\mathbf{f}^t$\;
	}
	\Return{$\mathbf{f}^K, \mathbf{x}^K$}
	\caption{The \OALGNAME\  algorithm for solving $\mathbf{L}\mathbf{x} = \mathbf{b}$.}
	\label{KOSZalg}
\end{algorithm}

Our algorithm, which we will call \ALGNAME, works by maintaining a set of potentials $\xv$. It iteratively samples cuts in the graph, updating potentials on one side of the cut to satisfy flow conservation across that cut.  Following \OALGNAME, we choose a spanning tree $T$ of low stretch.  Then for a number of iterations $K$ that depends on the stretch of tree $T$, we repeatedly sample a fundamental cut from the spanning tree (i.e.\ a cut induced by removing one of the tree edges). We update all of the potentials on one side of the cut by an amount $\Delta$ so that the amount of flow crossing the cut via Ohm's Law is what is required by the supply vector. We summarize \ALGNAME\ in Algorithm \ref{ouralg}. The main result of this section is a bound on the iteration complexity of \ALGNAME.

\dualkosz*

Next we give the algorithm in somewhat more detail.  Let $R(C) = (\sum_{(k,l) \in \delta(C)} \frac{1}{r(k,l)})^{-1}$. Note that $R(C)$ has units of resistance. For every tree edge $(i,j)$, let  $C(i, j)$ be the set of vertices on one side of the fundamental cut defined by $(i, j)$, such that $i \in C(i, j)$ and $j \not\in C(i, j)$. We set up a probability distribution $P_{ij}$ on edges $(i,j)$ in the spanning tree $T$, where $P_{ij} \propto \frac{r(i,j)}{R(C(i,j))}$.  We initialize potentials $x^0(i)$ to 0 for all nodes $i \in V$. In each iteration, we sample edge $(i,j) \in T$ according to the probabilities $P_{ij}$.    Let $b(C) =\mathbf{b}\trans \one_C$ be the total supply of the nodes in $C$. Note that $b(C)$ is also the amount of flow that should be flowing out of $C$ in any feasible $\mathbf{b}$-flow. 

Let $f^{t}(C)$ be the total amount of flow going out of $C$ in the flow induced by $\mathbf{x}^{t}$. That is,
        $$f^{t}(C) = \sum_{\substack{ij \in E \\ i \in C, \, j \not\in C}} \frac{{x}^t(i) - {x}^t(j)}{r(i,j)}.$$
Note that $f^t(C)$ can be positive or negative. In any feasible $\mathbf{b}$-flow, the amount of flow leaving $C$ should be equal to $\mathbf{b}\trans \one_C = b(C)$. Hence, we define 
$\Delta^t = (b(C) - f^{t}(C))\cdot R(C).$
Observe that $\Delta^t$ is precisely the quantity by which we need to increase the potentials of every node in $C$ so that flow conservation is satisfied on $\delta(C)$. We then update the potentials, so that 
        $$
        p^{t+1}(v) = 
        \begin{cases}
        p^{t}(v) +\Delta^t, &\text{if $v \in C$,} \\
        p^{t}(v), &\text{if $v \not\in C$.}
        \end{cases}
        $$
Once we have completed $K$ iterations, we return the final potentials $\mathbf{x}^{K}$. The last step is to convert $\xv^K$ to a feasible flow by taking a \emph{tree-defined flow} with respect to $T$: $f^K(i,j) = \frac{x^K(i)-x^K(j)}{r(i,j)}$ on all non-tree edges, and $\fv^K$ routes the unique flow on $T$ to make $\fv^K$ a feasible $\bv$-flow.

\begin{algorithm}[t]
	Compute a spanning tree $T$ with low stretch with respect to resistances $\rv$\;
	Set $x^0(i) = 0$ for all $i \in V$\;
	\For{$t \gets 1$ \KwTo $K$}{
	 \hangindent=3em
		Pick an edge $(i,j) \in T$ with probability $P_{ij} \propto \frac{r(i,j)}{R(C(i,j)}$ and let $C = C(i,j)$\;
		$\Delta^t \gets (b(C)-f^t(C))\cdot R(C)$\;
        $
        x^{t+1}(v) \gets 
        \begin{cases}
        x^{t}(v) +\Delta^t, &\text{if $v \in C$,} \\
        x^{t}(v), &\text{if $v \not\in C$.}
        \end{cases}
        $
	}
	Let $\fv^K$ be the tree-defined flow with respect to $\xv^K$ and $T$\;
	\Return{$\mathbf{x}^K$, $\fv^K$}
	\caption{Algorithm \ALGNAME\ for solving $\mathbf{L}\mathbf{x} = \mathbf{b}$.}
	\label{ouralg}
\end{algorithm}

% Our main theorem is the following. We define $$\tau := \sum_{(i, j) \in T} \frac{r(i, j)}{R(C(i, j))}.$$
% \begin{theorem}
% After $K = \tau \ln(\frac1\epsilon)$ iterations, the algorithm returns $\mathbf{x} \in \R^V$ such that $\E\norm{\mathbf{x}^* - \mathbf{x}}_\mathbf{L}^2 \leq \epsilon\norm{\mathbf{x}^*}_\mathbf{L}^2$. 
% \end{theorem}

\subsection{Analysis of \ALGNAME}
\label{sec:analysis}
For $p=2$, recall that $\sE(\fv) = \frac12\sum_{e \in E}\abs{f(e)}^2$ and $\sB(\xv) = \bv^T\xv - \frac12\xv^T\Lm\xv$. 
By convex duality, we have $\sB(\xv) \leq \sE(\fv)$ for any $\xv\in \R^V$ and $\bv$-flow $\fv$. Moreover, $\xv$ maximizes $\sB(\xv)$ if and only if $\Lm\xv = \bv$. (See e.g. \cite[Lemma 8.9]{Williamson19}).
% \begin{fact}[{\cite[Lemma 8.9]{Williamson19}}]
% For a vector $\mathbf{x} \in \R^V$, $\mathbf{x}$ satisfies $\mathbf{L}\mathbf{x} = \mathbf{b}$ if and only if $\mathbf{x}$ maximizes $\sB(\mathbf{x})$ over all $\mathbf{x} \in \R^V$, and at its maximum $\sB(\mathbf{x})=\sE(\mathbf{f})$ for the electrical $\mathbf{b}$-flow $\mathbf{f}$.
% \end{fact}
Thus solving the Laplacian system $\mathbf{L}\mathbf{x} = \mathbf{b}$ is equivalent to finding a vector of potentials that maximizes the dual objective. 
% In the next few subsections, we prove the lemmas that form the bulk of the analysis. 
In what follows, we present the lemmas that form the bulk of the analysis. Their proofs are in the Appendix.
These lemmas (and their proofs) are similar to their counterparts in \cite{KOSZ13}, because everything that appears here is dual to what appears there.

First, we show that each iteration of the algorithm increases $\sB(\xv)$. 

\begin{restatable}{lemma}{enerincr}
\label{lem:energy_increase}
Let $\mathbf{x} \in \R^V$ be a vector of potentials and let $C \subset V$. Let $\mathbf{x}'$ be the potentials obtained from $\mathbf{x}$ as in the algorithm (that is, by adding $\Delta$ to the potential of every vertex in $C$ so that flow conservation is satisfied across $\delta(C)$). Then 
$$\sB(\mathbf{x}') - \sB(\mathbf{x}) = \frac{\Delta^2}{2R(C)}.$$
\end{restatable}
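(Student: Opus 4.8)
The plan is to directly compute $\sB(\xv') - \sB(\xv)$ by substituting $\xv' = \xv + \Delta \one_C$ into the dual objective $\sB(\xv) = \bv^\top\xv - \frac12\xv^\top\Lm\xv$ and expanding. First I would write $\sB(\xv') = \bv^\top(\xv + \Delta\one_C) - \frac12(\xv+\Delta\one_C)^\top\Lm(\xv+\Delta\one_C)$, so that
\[
\sB(\xv') - \sB(\xv) = \Delta\,\bv^\top\one_C - \Delta\,\one_C^\top\Lm\xv - \frac{\Delta^2}{2}\,\one_C^\top\Lm\one_C.
\]
Now I would identify the three pieces combinatorially. The term $\bv^\top\one_C = b(C)$ is the required net flow out of $C$. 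The term $\one_C^\top\Lm\xv$ equals $\sum_{ij\in E,\, i\in C,\, j\notin C}\frac{x(i)-x(j)}{r(i,j)} = f(C)$, the net flow out of $C$ under the Ohm's-law flow induced by $\xv$ (here I use that $\Lm$ is the weighted Laplacian with edge weights $1/r(i,j)$, so $(\Lm\xv)(i) = \sum_{j\sim i}\frac{x(i)-x(j)}{r(i,j)}$, and summing over $i\in C$ the intra-$C$ contributions cancel). Finally $\one_C^\top\Lm\one_C = \sum_{ij\in\delta(C)}\frac{1}{r(i,j)} = 1/R(C)$ by the definition of $R(C)$.

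Substituting these gives $\sB(\xv') - \sB(\xv) = \Delta\,b(C) - \Delta\,f(C) - \frac{\Delta^2}{2R(C)} = \Delta(b(C)-f(C)) - \frac{\Delta^2}{2R(C)}$. It then remains to plug in the algorithm's choice $\Delta = (b(C)-f(C))\cdot R(C)$, which is exactly the value making flow conservation hold across $\delta(C)$. With this choice $\Delta(b(C)-f(C)) = \Delta \cdot \frac{\Delta}{R(C)} = \frac{\Delta^2}{R(C)}$, so the expression becomes $\frac{\Delta^2}{R(C)} - \frac{\Delta^2}{2R(C)} = \frac{\Delta^2}{2R(C)}$, as claimed. (As a sanity check, this is the standard ``one-dimensional quadratic maximization'' identity: maximizing a concave quadratic $g(t) = \alpha t - \frac{t^2}{2\beta}$ over $t$ gives optimum $t^* = \alpha\beta$ with value $\frac{\alpha^2\beta}{2}$; here $\alpha = b(C)-f(C)$ and $\beta = R(C)$.)

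I do not expect a genuine obstacle here — the computation is short and every term has a clean combinatorial meaning. The one point requiring a little care is the bookkeeping in the identity $\one_C^\top\Lm\xv = f(C)$: one must check that edges with both endpoints in $C$ (and both outside $C$) contribute zero, which follows since $(\mathbf{e_i}-\mathbf{e_j})^\top\one_C = 0$ in those cases, and edges in $\delta(C)$ contribute $\pm\frac{x(i)-x(j)}{r(i,j)}$ with the sign consistent with ``flow out of $C$.'' Everything else is algebra, and the result is an exact identity (no inequalities, no asymptotics), matching the dual of the corresponding energy-decrease lemma in \cite{KOSZ13}.
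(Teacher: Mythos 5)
Your proposal is correct and follows essentially the same route as the paper's proof: substitute $\xv' = \xv + \Delta\one_C$, expand the quadratic dual objective, identify the three terms as $b(C)$, $f(C)$, and $1/R(C)$, and plug in $\Delta = (b(C)-f(C))R(C)$. The only cosmetic difference is that you expand the quadratic form in matrix notation while the paper expands edge by edge over $\delta(C)$; the underlying computation is identical.
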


 The second ingredient in the analysis is to introduce an upper bound on how large the potential bound $\sB(\xv)$ can become. This will allow us to bound the number of iterations the algorithm takes.  

\begin{defn}[Gap]
Let $\mathbf{f}$ be a feasible $\mathbf{b}$-flow and let $\mathbf{x}$ be any vertex potentials. Define 
$$\gap(\mathbf{f}, \mathbf{x}) := \sE(\mathbf{f}) - \sB(\mathbf{x}) = \frac12\sum_{e \in E} r(e)f(e)^2 - (\mathbf{b}\trans \mathbf{x} -\frac12 \mathbf{x}\trans \mathbf{L}\mathbf{x}).$$
\end{defn}

This same notion of a gap was introduced in the analysis of the Kelner et al.\ algorithm, and was also used to bound the number of iterations of the algorithm.

The electrical flow $\mathbf{f}^*$ minimizes $\sE(\mathbf{f})$ over all $\mathbf{b}$-flows $\mathbf{f}$, and the corresponding vertex potentials $\mathbf{x}^*$ maximize $\sB(\mathbf{x}^*)$ over all vertex potentials $\mathbf{x}$. Moreover, $\sE(\mathbf{f}^*) = \sB(\mathbf{x}^*)$. Therefore, for any feasible flow $\mathbf{f}$, $\gap(\mathbf{f}, \mathbf{x})$ is an upper bound on optimality:
$$\gap(\mathbf{f}, \mathbf{x}) \geq \sB(\mathbf{x}^*) - \sB(\mathbf{x}).$$
The lemma below gives us another way to write $\gap(\mathbf{f}, \mathbf{x})$, and will be useful to us later. This relation is shown in Kelner et al.\  \cite[Lemma 4.4]{KOSZ13}, but we restate it here and reprove it in the Appendix for completeness.

\begin{restatable}{lemma}{gapp}
Another way to write $\gap(\mathbf{f}, \mathbf{x})$ is 
$\gap(\mathbf{f}, \mathbf{x}) = \frac12\sum_{(i, j) \in \vec{E}} r(i, j) \left(f(i, j) - \frac{x(i)-x(j)}{r(i, j)}\right)^2.$
\end{restatable}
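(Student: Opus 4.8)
The plan is to prove the identity
\[
\gap(\fv, \xv) = \frac12 \sum_{(i,j)\in\vec{E}} r(i,j)\left(f(i,j) - \frac{x(i)-x(j)}{r(i,j)}\right)^2
\]
by expanding the right-hand side and matching it term by term with the definition $\gap(\fv,\xv) = \frac12\sum_e r(e)f(e)^2 - \bv\trans\xv + \frac12\xv\trans\Lm\xv$. First I would expand the square inside the sum on the right-hand side into three pieces:
\[
r(i,j)f(i,j)^2 - 2 f(i,j)(x(i)-x(j)) + \frac{(x(i)-x(j))^2}{r(i,j)}.
\]
Summing over all $(i,j)\in\vec{E}$ and dividing by $2$, the first piece gives exactly $\frac12\sum_e r(e)f(e)^2 = \sE(\fv)$, and the third piece gives $\frac12\sum_{(i,j)}\frac{(x(i)-x(j))^2}{r(i,j)} = \frac12\xv\trans\Lm\xv$ by the definition of the weighted Laplacian $\Lm = \sum_{ij}\frac1{r(i,j)}(\mathbf{e_i}-\mathbf{e_j})(\mathbf{e_i}-\mathbf{e_j})\trans$.

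The only step requiring a little care is the cross term: I need to show $\sum_{(i,j)\in\vec{E}} f(i,j)(x(i)-x(j)) = \bv\trans\xv$. This is exactly where feasibility of $\fv$ (i.e. $\Am\fv = \bv$) is used. Writing $x(i)-x(j) = (\mathbf{e_i}-\mathbf{e_j})\trans\xv$ and noting that $\mathbf{e_i}-\mathbf{e_j}$ is the column of the incidence matrix $\Am$ corresponding to the oriented edge $(i,j)$, the sum $\sum_{(i,j)} f(i,j)(\mathbf{e_i}-\mathbf{e_j})$ equals $\Am\fv = \bv$. Hence $\sum_{(i,j)} f(i,j)(x(i)-x(j)) = (\Am\fv)\trans\xv = \bv\trans\xv$. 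Substituting this in, the middle piece contributes $-\frac12\cdot 2\bv\trans\xv = -\bv\trans\xv$, and adding up the three contributions yields $\sE(\fv) - \bv\trans\xv + \frac12\xv\trans\Lm\xv = \gap(\fv,\xv)$, as desired.

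There is no real obstacle here; the proof is a direct algebraic manipulation, and the single substantive observation is that feasibility $\Am\fv=\bv$ converts the cross term into $\bv\trans\xv$. I would just make sure to state explicitly that the orientation chosen in $\vec{E}$ is consistent in both $\fv$ and $\Am$ (which the paper's conventions guarantee, since $\Am$ is defined as the incidence matrix of $(V,\vec{E})$ and $f(j,i) = -f(i,j)$), so that there is no sign ambiguity in the cross term. The result is also stated in Kelner et al.\ \cite[Lemma 4.4]{KOSZ13}, so I would note that this is a re-derivation included for completeness.
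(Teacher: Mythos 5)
Your proposal is correct and follows essentially the same argument as the paper: both expand the square into the energy term, the cross term, and the quadratic-form term, and both use feasibility $\Am\fv=\bv$ (the paper writes it elementwise via flow conservation at each vertex, you write it as $(\Am\fv)\trans\xv$) to identify the cross term with $\bv\trans\xv$. The only difference is cosmetic—you expand the right-hand side while the paper rewrites the left-hand side—so nothing further is needed.
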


The analysis of Kelner et al. \cite{KOSZ13} relies on measuring progress in terms of the above-defined duality gap between primal flow energy and dual potential bound. The high-level idea of the analysis is that one can show that the duality gap decreases by a constant factor each iteration, which implies a linear convergence rate. In the analysis of their algorithm, they maintain a feasible $\mathbf{b}$-flow $\mathbf{f}$ at each iteration, and measure $\gap(\mathbf{f}, \mathbf{x})$ against corresponding tree-defined potentials $\mathbf{x}$.

One difference between their algorithm and ours is that we do not maintain a feasible $\mathbf{b}$-flow at each iteration. However, for $\gap(\mathbf{f}, \mathbf{x})$ to be a valid bound on distance to optimality, we need $\mathbf{f}$ to be a feasible $\mathbf{b}$-flow. To this end, we introduce the definition of ``tree-defined flow'' below.

\begin{defn}[Tree-defined flow]
\label{def:tree_def_flow}
Let $T$ be a spanning tree, $\mathbf{x} \in \R^V$ vertex potentials, and $\mathbf{b} \in \R^V$ satisfying $\one\trans \mathbf{b} = 0$ be a supply vector. The \textbf{tree-defined flow} with respect to $T$, $\mathbf{x}$ and $\mathbf{b}$ is the flow $\mathbf{f}_{T, \xv}$ defined by 
$$\mathbf{f}_{T, \xv}(i, j) = 
\frac{x(i) - x(j)}{r(i, j)} \quad \text{if $(i, j) \not\in T$},
$$
and for $(i, j) \in T$, $f_{T,\xv}(i, j)$ is the unique value such that the resulting $\mathbf{f}_{T, \xv}$ is a feasible $\mathbf{b}$-flow. That is, for $(i, j) \in T$, if $C = C(i, j)$ is the fundamental cut defined by $(i, j)$ and $b(C) = \mathbf{b}\trans \one_C$ is the amount of flow that should be flowing out of $C$ in a feasible $\mathbf{b}$-flow, then  
\begin{align*}
f_{T,\xv}(i, j)  = b(C) - \sum_{\substack{k \in C,\, l \not\in C \\ kl \in E - ij}} f_{T,\xv}(k, l) 
= b(C) - \sum_{\substack{k \in C,\, l \not\in C \\ kl \in E - ij}} \frac{x(k) - x(l)}{r(k, l)}.
\end{align*}
% [Notational note: The reason I'm summing over $\{kl \in E - ij: k \in C, l \not\in C\}$ instead summing over something like $\{(k, l) \in \delta(C) - (i, j)\}$ is because for each directed edge $(u, v) \in \delta(C)$, I really want to take the potential on the vertex in $C$ minus the potential on the vertex not in $C$, regardless of the direction of the edge. In other words, if $u \in C$ and $v \not\in C$, I want to do $p_u - p_v$, and if $v \in C$ and $u \not\in C$, then I want to do $p_v - p_u$. Is there a better way to write this?]
In other words, $\mathbf{f}_{T,\xv}$ is a potential-defined flow outside of the tree $T$, and routes the unique flow on $T$ to make it a feasible $\mathbf{b}$-flow. 
\end{defn}
The below lemma expresses $\gap(\fv_{T, \xv}, \mathbf{x})$ in a nice way. 
\begin{restatable}{lemma}{gaplem}
\label{lem:gap}
Let $T$ be a spanning tree, $\mathbf{x}$ vertex potentials, and $\mathbf{b}$ a supply vector. Let $\fv_{T, \xv}$ be the associated tree-defined flow. Then 
$$\gap(\fv_{T, \xv}, \mathbf{x}) = \frac12 \sum_{(i, j) \in T} r(i, j) \cdot \frac{\Delta(C(i, j))^2}{R(C(i, j))^2}.$$
\end{restatable}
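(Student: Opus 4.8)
The plan is to derive Lemma~\ref{lem:gap} directly from the preceding lemma, which expresses $\gap(\fv,\xv) = \frac12\sum_{(i,j)\in\vec{E}} r(i,j)\left(f(i,j) - \frac{x(i)-x(j)}{r(i,j)}\right)^2$, specialized to the flow $\fv = \fv_{T,\xv}$. First I would split the sum over $\vec{E}$ into non-tree edges and tree edges. For a non-tree edge $(i,j)\notin T$, the definition of the tree-defined flow gives $f_{T,\xv}(i,j) = \frac{x(i)-x(j)}{r(i,j)}$ exactly, so its term vanishes. Hence only tree edges contribute, and it remains to evaluate $r(i,j)\bigl(f_{T,\xv}(i,j) - \frac{x(i)-x(j)}{r(i,j)}\bigr)^2$ for each $(i,j)\in T$.

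Next, fix a tree edge $(i,j)\in T$ and write $C = C(i,j)$, so that $i\in C$, $j\notin C$, and $(i,j)\in\delta(C)$. Let $f_\xv(C) := \sum_{k\in C,\,l\notin C,\,kl\in E} \frac{x(k)-x(l)}{r(k,l)}$ denote the net Ohm's-law flow out of $C$ induced by $\xv$ (the quantity called $f^t(C)$ in the algorithm). The defining equation of $f_{T,\xv}(i,j)$ sums the Ohm's-law flow over all edges of $\delta(C)$ except $(i,j)$; isolating the one missing term gives
$$f_{T,\xv}(i,j) = b(C) - \Bigl(f_\xv(C) - \tfrac{x(i)-x(j)}{r(i,j)}\Bigr) = \bigl(b(C) - f_\xv(C)\bigr) + \tfrac{x(i)-x(j)}{r(i,j)},$$
so that $f_{T,\xv}(i,j) - \frac{x(i)-x(j)}{r(i,j)} = b(C) - f_\xv(C) = \frac{\Delta(C)}{R(C)}$, the last step being exactly the definition $\Delta(C) = (b(C) - f_\xv(C))\,R(C)$. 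Substituting back, the contribution of $(i,j)\in T$ equals $r(i,j)\,\frac{\Delta(C(i,j))^2}{R(C(i,j))^2}$, and summing over all tree edges (together with the overall factor $\tfrac12$) yields the claimed identity.

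This is essentially a one-line substitution once the previous lemma is available, so I do not expect a serious obstacle; the only point needing a little care is bookkeeping with cut orientations. One should check that the statement is independent of the chosen orientation $\vec{E}$: reversing $i\leftrightarrow j$ on a tree edge replaces $C$ by $V\setminus C$, negates $b(C)$, $f_\xv(C)$, and hence $\Delta(C)$, negates both $f_{T,\xv}(i,j)$ and $\frac{x(i)-x(j)}{r(i,j)}$, and leaves $R(C)$ unchanged, so both $\bigl(f_{T,\xv}(i,j) - \frac{x(i)-x(j)}{r(i,j)}\bigr)^2$ and $\frac{\Delta(C(i,j))^2}{R(C(i,j))^2}$ are invariant. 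An alternative route would bypass the previous lemma and expand $\gap(\fv_{T,\xv},\xv) = \sE(\fv_{T,\xv}) - \sB(\xv)$ from scratch, but going through the already-established expression for the gap is cleaner and shorter.
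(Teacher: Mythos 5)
Your proposal is correct and follows essentially the same route as the paper's proof: invoke the preceding expression for $\gap(\fv,\xv)$, observe that non-tree edges contribute nothing, and for each tree edge fold the $(i,j)$ term into the cut sum to get $f_{T,\xv}(i,j) - \frac{x(i)-x(j)}{r(i,j)} = b(C) - f(C) = \frac{\Delta(C)}{R(C)}$. The orientation-invariance check is a nice extra but not needed beyond what the paper does.
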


% \subsection{Sampling Edges from a Probability Distribution}
% \label{sec:prob_dist}
Suppose we have a probability distribution $(P_{ij}: (i, j) \in T)$ on the edges in $T$. If the algorithm samples an edge $(i, j) \in T$ from this distribution, then by \Cref{lem:energy_increase} the expected increase in the dual objective is
$$
\E[\sB(\mathbf{x}')] - \sB(\mathbf{x}) =\frac12 \sum_{(i,j) \in T} P_{ij}\cdot\Delta(C(i, j))^2/R(C(i, j)).
$$
We want to set the $P_{ij}$ to cancel terms appropriately so that the right-hand side is a multiple of the gap. Looking at Lemma \ref{lem:gap}, we see that an appropriate choice is to set
$$P_{ij} := \frac{1}{\tau} \cdot \frac{r(i, j)}{R(C(i, j))},$$
where $\tau := \sum_{(i, j) \in T} \frac{r(i, j)}{R(C(i, j))}$ is the normalizing constant. For this choice of probabilities, we have
$$\E[\sB(\mathbf{x}')] - \sB(\mathbf{x}) 
= \frac{1}{2\tau} \sum_{(i, j) \in T} r(i,j)\cdot \frac{\Delta(C(i, j))^2}{R(C(i, j))^2} 
= \frac{1}{\tau}\gap(\fv_{T, \xv}, \mathbf{x}),$$
where $\fv_{T, \xv}$ is the tree-defined flow associated with potentials $\mathbf{x}$. As a consequence of this, we have the following.

\begin{restatable}{lemma}{gapdecr}
\label{lem:gap_decreases}
If each iteration of the algorithm samples an edge $(i, j) \in T$ according to the probabilities $P_{ij} = \frac{1}{\tau} \cdot \frac{r(i, j)}{R(C(i, j))}$, then we have
$$\sB(\mathbf{x}^*) - \E[\sB(\mathbf{x}^{t+1})]
\leq
\left(1 - \frac1\tau\right)
\left(\sB(\mathbf{x}^*) - \sB(\mathbf{x}^t)\right).$$
% In other words, in each iteration decreases the distance to optimum by a multiplicative factor of $(1- \frac{1}{\tau})$. 
\end{restatable}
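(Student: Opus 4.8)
The plan is to combine the per-iteration progress identity (Lemma~\ref{lem:energy_increase}) with the gap reformulation (Lemma~\ref{lem:gap}), taking an expectation over the sampled tree edge and then using the standard fact that the gap upper-bounds the suboptimality in dual objective. First I would condition on the state $\xv^t$ at the start of iteration $t+1$, and let $\fv_{T, \xv^t}$ be the tree-defined flow associated to $\xv^t$. By Lemma~\ref{lem:energy_increase}, if the algorithm picks tree edge $(i,j)$ and performs the corresponding cut toggle, the resulting potentials $\xv^{t+1}$ satisfy $\sB(\xv^{t+1}) - \sB(\xv^t) = \Delta(C(i,j))^2 / (2R(C(i,j)))$. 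Taking expectation over the choice of $(i,j)$ according to $P_{ij} = \frac{1}{\tau}\cdot\frac{r(i,j)}{R(C(i,j))}$ gives
\begin{align*}
\E\bigl[\sB(\xv^{t+1}) \mid \xv^t\bigr] - \sB(\xv^t)
&= \sum_{(i,j)\in T} P_{ij}\cdot \frac{\Delta(C(i,j))^2}{2R(C(i,j))}
= \frac{1}{2\tau}\sum_{(i,j)\in T} r(i,j)\cdot\frac{\Delta(C(i,j))^2}{R(C(i,j))^2}.
\end{align*}
By Lemma~\ref{lem:gap}, the right-hand side is exactly $\frac{1}{\tau}\gap(\fv_{T,\xv^t}, \xv^t)$.

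Next I would invoke the inequality $\gap(\fv, \xv) \geq \sB(\xv^*) - \sB(\xv)$, valid for any feasible $\bv$-flow $\fv$, which holds here with $\fv = \fv_{T,\xv^t}$ since the tree-defined flow is by construction a feasible $\bv$-flow. Hence
\begin{align*}
\E\bigl[\sB(\xv^{t+1}) \mid \xv^t\bigr] - \sB(\xv^t) \;\geq\; \frac{1}{\tau}\bigl(\sB(\xv^*) - \sB(\xv^t)\bigr).
\end{align*}
Rearranging, $\sB(\xv^*) - \E[\sB(\xv^{t+1}) \mid \xv^t] \leq (1 - \tfrac{1}{\tau})(\sB(\xv^*) - \sB(\xv^t))$. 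Finally, taking total expectation over $\xv^t$ (using that $\sB(\xv^*)$ is a constant and the tower property) yields the claimed bound $\sB(\xv^*) - \E[\sB(\xv^{t+1})] \leq (1-\tfrac{1}{\tau})(\sB(\xv^*) - \E[\sB(\xv^t)])$.

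There is no serious obstacle here: all the heavy lifting is done by the preceding lemmas, and the argument is just "expected progress equals $\tfrac{1}{\tau}$ times the gap, and the gap dominates the suboptimality." The one point requiring a little care is the conditioning/tower-property step — one should be careful that Lemma~\ref{lem:energy_increase} is applied pointwise for each realization of the sampled edge and then averaged, and that the final inequality is between expectations (not conditional expectations), which is why the last line takes an outer expectation. It is also worth noting explicitly that $\gap(\fv_{T,\xv^t}, \xv^t) \geq 0$ is consistent with $\tau \geq 1$, so the contraction factor $1 - \tfrac{1}{\tau}$ lies in $[0,1)$ and iterating the recursion is meaningful.
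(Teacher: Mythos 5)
Your proposal is correct and follows essentially the same route as the paper's own proof: apply Lemma~\ref{lem:energy_increase} per sampled edge, average with the probabilities $P_{ij}$ to identify the expected progress as $\frac{1}{\tau}\gap(\fv_{T,\xv^t},\xv^t)$ via Lemma~\ref{lem:gap}, then use $\gap \geq \sB(\xv^*)-\sB(\xv^t)$ and rearrange. Your extra remarks on conditioning and the tower property only make explicit what the paper leaves implicit (the stated inequality is the conditional one, with the unconditional version obtained exactly as you describe).
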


\begin{restatable}{corollary}{finalgap}
\label{cor:final_gap}
After $K = \tau\ln(\frac{1}{\epsilon})$ iterations, we have $\sB(\mathbf{x}^*) - \E[\sB(\mathbf{x}^K)] \leq \epsilon \cdot \sB(\mathbf{x}^*)$.
\end{restatable}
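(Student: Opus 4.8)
The plan is to iterate the one-step contraction established in Lemma~\ref{lem:gap_decreases}. Write $g^t := \sB(\xv^*) - \sB(\xv^t)$ for the (random) suboptimality in the dual objective after $t$ iterations. Lemma~\ref{lem:gap_decreases} says that conditioning on $\xv^t$ and taking expectation over the edge sampled in iteration $t+1$, $\E[g^{t+1} \mid \xv^t] \leq (1 - \tfrac{1}{\tau}) g^t$. Taking a further expectation over $\xv^t$ and using the tower property gives $\E[g^{t+1}] \leq (1 - \tfrac{1}{\tau})\, \E[g^t]$, so a straightforward induction on $t$ yields $\E[g^K] \leq (1 - \tfrac{1}{\tau})^K g^0$, where $g^0 = \sB(\xv^*) - \sB(\xv^0)$ is deterministic.

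Next I would evaluate the initial gap. Since \ALGNAME\ initializes $x^0(i) = 0$ for all $i \in V$, we have $\sB(\xv^0) = \bv\trans \mathbf{0} - \tfrac{1}{2}\mathbf{0}\trans \Lm \mathbf{0} = 0$, hence $g^0 = \sB(\xv^*)$. (Recall $\sB(\xv^*) = \sE(\fv^*) \geq 0$; if $\bv = \mathbf{0}$ then $\sB(\xv^*) = 0$ and the corollary is trivially true, so we may assume $\sB(\xv^*) > 0$.) Therefore $\E[g^K] \leq (1 - \tfrac{1}{\tau})^K \sB(\xv^*)$.

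Finally I would bound $(1 - \tfrac{1}{\tau})^K$ using $1 - x \leq e^{-x}$ with $x = 1/\tau$, giving $(1 - \tfrac{1}{\tau})^K \leq e^{-K/\tau}$; plugging in $K = \tau \ln(\tfrac{1}{\epsilon})$ yields $e^{-\ln(1/\epsilon)} = \epsilon$. Combining, $\sB(\xv^*) - \E[\sB(\xv^K)] = \E[g^K] \leq \epsilon\, \sB(\xv^*)$, which is exactly the claimed bound.

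There is no real obstacle here: the corollary is a routine consequence of Lemma~\ref{lem:gap_decreases}. The only points needing minor care are (i) chaining the conditional inequality of Lemma~\ref{lem:gap_decreases} into an unconditional one via the tower property, since randomness accumulates across iterations, and (ii) recording that $\xv^0 = \mathbf{0}$ makes $\sB(\xv^0) = 0$, so the initial gap equals $\sB(\xv^*)$. I would also note that running the same computation with $K = \tau\ln(\tfrac{\tau}{\epsilon})$ gives the sharper bound $\E[g^K] \leq \tfrac{\epsilon}{\tau}\sB(\xv^*)$ used in Theorem~\ref{thm:dual_kosz}, and that for $p = 2$ one has $\sB(\xv^*) - \sB(\xv) = \tfrac{1}{2}\norm{\xv^* - \xv}_\Lm^2$ (complete the square, using $\Lm\xv^* = \bv$) and $\sB(\xv^*) = \tfrac{1}{2}\norm{\xv^*}_\Lm^2$, which converts this into the norm statement $\E\norm{\xv^* - \xv^K}_\Lm^2 \leq \tfrac{\epsilon}{\tau}\norm{\xv^*}_\Lm^2$.
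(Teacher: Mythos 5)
Your proposal is correct and matches the paper's own proof essentially step for step: define the gap random variable, chain the conditional contraction of Lemma~\ref{lem:gap_decreases} into an unconditional one, induct, use $\sB(\xv^0)=0$, and bound $(1-\frac{1}{\tau})^K \leq e^{-K/\tau} = \epsilon$. The extra remarks about the sharper $K = \tau\ln(\frac{\tau}{\epsilon})$ bound and the conversion to the $\Lm$-norm statement are consistent with how the paper later assembles Theorem~\ref{thm:dual_kosz}.
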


% \subsection{Overall Running Time}
% \label{sec:convergence}
We now use the previous lemmas to bound the number of iterations \ALGNAME\ takes. Lemma \ref{lem:gap_decreases} shows that the quantity $\sB(\mathbf{x}^*) - \sB(\mathbf{x}^t)$ decreases multiplicatively by $(1 - \frac1\tau)$ each iteration. Thus, a smaller value of $\tau$ gives faster progress. We prove in the Appendix that $\tau = \st_T(G, \rv)$, which is why the algorithm chooses $T$ to be a low-stretch spanning tree.

    We also need to argue that rounding $\xv^K$ to $\fv^K$ via a tree-defined flow preserves approximate optimality. \Cref{lem:rounding_error} in the Appendix shows this: For any distribution over $\xv$ such that $\E_{\xv}[\sB(\xv)] \geq (1-\frac{\epsilon}{\tau})\sB(\xv^*)$, we have $\E_{\xv}[\sE(\fv_{T, \xv})] \leq (1+\epsilon)\sE(\fv^*)$.
    Combining everything together, we conclude: 
    \dualkosz*
    
    % \begin{proof}[Proof of \Cref{thm:dual_kosz}]
    %  By Corollary \ref{cor:final_gap}, after $K=\tau\ln(\frac{\tau}{\epsilon})$ iterations, the algorithm returns potentials  $\xv^K$ such that $\sB(\mathbf{x}^*) - \E[\sB(\mathbf{x}^K)] \leq \frac{\epsilon}{\tau} \cdot \sB(\mathbf{x}^*)$. Combining with Lemma \ref{lem:energy_to_potential}, we get that $\E \norm{\mathbf{x}^* - \mathbf{x}^K}_\mathbf{L}^2 \leq \frac{\epsilon}{\tau}\norm{\mathbf{x}^*}_\mathbf{L}^2$. Finally, \Cref{lem:rounding_error} gives $\E\left[\sE(\fv^K)\right] \leq (1+\epsilon)\sE(\fv^*)$.  
     
    % \end{proof}
    
    We end this section with a na\"{i}ve bound on the total running time of \ALGNAME. 
    
    \begin{restatable}{lemma}{runtime}
    \label{lem:runtime}
    \ALGNAME\ can be implemented to run in $\wtd{O}(mn\log\frac{1}{\epsilon})$ time. 
    \end{restatable}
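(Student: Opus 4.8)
The plan is to combine the iteration bound from \Cref{thm:dual_kosz} with an implementation that, after preprocessing, spends $O(n)$ time per iteration. Using \Cref{thm:dual_kosz} together with the fact (proved in the Appendix) that $\tau=\st_T(G,\rv)=O(m\log n\log\log n)$ for the low-stretch tree, \ALGNAME\ halts after $K=\tau\ln(\tau/\epsilon)=\wtd{O}(m\log\frac{1}{\epsilon})$ iterations. So it is enough to exhibit a preprocessing phase costing $\wtd{O}(m)$; a main loop in which each iteration costs $O(n)$, except that the first time a given fundamental cut is sampled we additionally pay $O(m)$ (and since $T$ has only $n-1$ edges, this is $O(mn)$ over the whole run); and an $O(m)$ post-processing step producing $\fv^K$. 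Summing, the total is $\wtd{O}(m)+K\cdot O(n)+(n-1)\cdot O(m)=\wtd{O}(mn\log\frac{1}{\epsilon})$.

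For preprocessing I would compute a low-stretch tree $T$ with $\st_T=O(m\log n\log\log n)$ in $O(m\log n\log\log n)$ time~\cite{AN12}, root it, and build a DFS/Euler-tour numbering plus an $O(1)$ LCA oracle, so that each fundamental cut $C(i,j)$ (with $j$ the parent of $i$) is the subtree of $i$, a contiguous interval of the DFS order. Next compute $R(C(i,j))^{-1}=\sum_{(k,l)\in\delta(C(i,j))}\frac{1}{r(k,l)}$ for all tree edges at once: each non-tree edge $(k,l)$ contributes $\frac{1}{r(k,l)}$ to exactly the tree edges on $P(k,l)$, a path-increment/point-query task handled in $\wtd{O}(m)$ by the standard difference trick (increment at $k$ and $l$, decrement twice at $\mathrm{lca}(k,l)$, then take subtree sums); from these set up the sampling distribution $P_{ij}\propto r(i,j)/R(C(i,j))$ for $O(1)$-time sampling (alias method). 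Also precompute $b(C(i,j))=\bv\trans\one_{C(i,j)}$ for all tree edges by one subtree-sum pass ($O(n)$). Finally, maintain two length-$n$ arrays throughout: the potentials $\xv^t$ (initialized to $\mathbf 0$) and the divergence vector $\gv^t:=\Lm\xv^t$ (initialized to $\mathbf 0$).

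For a single iteration, sample a tree edge $(i,j)$ and set $C=C(i,j)$. Since $f^t(C)=\one_C\trans\Lm\xv^t=\sum_{v\in C}g^t(v)$, we can evaluate $\Delta^t=(b(C)-f^t(C))\,R(C)$ in $O(|C|)\le O(n)$ time by scanning the DFS interval of $C$ (plus $O(1)$ from the precomputed $b(C),R(C)$). We then apply $x^{t+1}(v)\gets x^t(v)+\Delta^t$ for $v\in C$ in $O(|C|)$ time, and update $\gv^{t+1}=\gv^t+\Delta^t\,\Lm\one_C$. The vector $\Lm\one_C$ depends only on the cut, lies in $\R^V$ so has at most $n$ nonzeros, and is a sparse-matrix times subtree-indicator product computable in $O(m)$ time; the plan is to compute it the first time $C$ is sampled and cache it, so later iterations using the same cut pay only $O(n)$ for the array update. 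The cached-vector computations total $O(mn)$, all other per-iteration work is $O(n)$, and over $K=\wtd{O}(m\log\frac{1}{\epsilon})$ iterations this is $\wtd{O}(mn\log\frac{1}{\epsilon})$.

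To output $\fv^K$, form the tree-defined flow of \Cref{def:tree_def_flow}: set $f^K(i,j)=(x^K(i)-x^K(j))/r(i,j)$ on the $m-n+1$ non-tree edges and route the residual supplies along $T$ by a single leaves-to-root sweep, in $O(m)$ time; combined with the preceding phases this gives the claimed bound. The one delicate step is evaluating $f^t(C)$ (equivalently $\Delta^t$): unlike a fundamental cycle in the primal \OALGNAME\ algorithm, a fundamental cut $\delta(C)$ is not a contiguous object in $T$, so there is no obvious polylogarithmic-time way to re-read $f^t(C)$ after the potentials change; maintaining $\gv^t=\Lm\xv^t$ and caching the $n-1$ vectors $\Lm\one_{C(e)}$ gets around it here, but the apparent need to touch $\Theta(n)$ coordinates per step is exactly the obstruction that the OMv-based lower bound of \Cref{sec:ds} later makes precise, and it is why this naive bound carries an extra factor of $n$ over the near-linear iteration count.
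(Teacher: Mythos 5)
Your proof is correct and gives the claimed bound, following the same skeleton as the paper (iteration count from \Cref{thm:dual_kosz} with $\tau=\st_T(G,\rv)=\wtd{O}(m)$, times an $O(n)$ per-iteration cost, plus lower-order preprocessing). The only genuine difference is the data structure used to evaluate $f^t(C)$ in $O(n)$ time. The paper maintains the values $f(C)$ for \emph{all} $n-1$ fundamental cuts simultaneously and refreshes every one of them after each toggle via a precomputed $(n-1)\times(n-1)$ table $H(C_1,C_2)=\one_{C_2}\trans\Lm\one_{C_1}$, built upfront in $O(n^2)$ time using Karger's treefix-sum machinery; you instead maintain the single divergence vector $\gv^t=\Lm\xv^t$, recover $f^t(C)=\one_C\trans\gv^t$ on demand by scanning the DFS interval of the subtree, and lazily cache the $n-1$ vectors $\Lm\one_C$ at $O(m)$ each (total $O(mn)$, absorbed into the bound). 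Since $H(C_1,C_2)=\one_{C_2}\trans(\Lm\one_{C_1})$, your cached vectors are essentially a factored, lazily-built form of the paper's table; what your variant buys is that it avoids the treefix-sum subroutine and the upfront $O(n^2)$ preprocessing, at the cost of an $O(|C|)$ query rather than an $O(1)$ lookup — which is immaterial here since the potential update already costs $O(|C|)$. Your computation of all $R(C)$ values via the LCA difference trick is also a valid substitute for the paper's link-cut-tree approach, and your closing remark correctly identifies why neither variant beats $\Theta(n)$ per iteration.
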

    
     In Section \ref{sec:ds}, we argue that given a natural abstraction of the data structure problem we use in computing $f(C)$ and updating potentials, it appears unlikely that we can implement each iteration in $o(n^{1-\epsilon})$ time, if each iteration is to be processed one-by-one in an online fashion.  In Section \ref{sec:speedup}, we show how to overcome this data structure lower bound by taking advantage of the fact that the sequence of updates that we perform can be generated in advance.

\section{Iteration Complexity of Cycle-toggling and Cut-toggling for Minimum $p$-norm Flows}
\label{sec:pnorm}

Before we show how to speed up the cut-toggling algorithm for $p=2$, we first demonstrate how cut-toggling and cycle-toggling algorithms can be applied to solve a minimum $p$-norm flow problem for a general $1 < p < \infty$.  The goal of this section is to derive bounds on the \emph{iteration complexity} of the cut-toggling and cycle-toggling algorithms for solving the minimum $p$-norm flow problem. We will show that for all $p \geq 2$, the cycle-toggling algorithm can be used  to find an approximately optimal flow. On the other hand, the cut-toggling algorithm can be used to solve the \emph{dual} of $p$-norm flow problem for all $1 < p \leq 2$, and  one can convert an approximately optimal dual solution to an approximately optimal primal flow.  These iteration bounds are summarized in Theorems \ref{thm:cycle_iter} and \ref{thm:cut_iter}.

% Recall that in the $p=2$ case, the cycle-toggling algorithm of \cite{KOSZ13} works by maintaining a feasible flow, and gradually enforces Kirchkoff's Potential Law (KPL) by sampling a fundamental cycle in each iteration and adding some scalar to the flow on every edge in the cycle. On the other hand, the cut-toggling algorithm from Section \ref{sec:alg} works by 
The cycle-toggling and cut-toggling algorithms for $p$-norms are described in Algorithms \ref{alg:cycle_general} and \ref{alg:cut_general}, respectively. They are generalizations of the algorithm of \cite{KOSZ13} and the \ALGNAME\ algorithm  described in Section \ref{sec:alg}  to the $p$-norm setting. Observe that the main difference is that the low-stretch tree $T$ now changes dynamically in each iteration. Also, the update step is more involved. Note that when $p=2$, Algorithms  \ref{alg:cycle_general} and \ref{alg:cut_general} reduce to \OALGNAME\ and \ALGNAME\  respectively. (In particular, one can use the same spanning tree for all iterations.)

\begin{algorithm}[ht!]
\begin{algorithmic}[1]
% 	\STATE Initialize $\fv^0$ to be a $\frac{1}{p}$-approximate electrical flow with respect to resistances $\rv$. (That is, $\sE(\fv^0) $
    \STATE Pick any spanning tree $T^0$, and initialize $\fv^0$ to be the $\bv$-flow using only edges of $T^0$. 
	\FOR{$t \gets 1$ \KwTo $K$}
	    \STATE Compute a low-stretch spanning tree $T^t$ with respect to weights $r(e){\abs{f^{t-1}(e)}^{p-2}}$.
	    \STATE Sample an edge $(i,j) \in E - T^t$  with probability  $P^t_{ij}$, where
	    $$P^t_{ij}
	    \propto 
	   \max\left\{ {p2^{2p-1}}\frac{\sum_{e \in C(i,j)}r(e) \abs{f^{t-1}(e)}^{p-2}}{r(i,j)\abs{f^{t-1}(i,j)}^{p-2}}, \,\left({p2^{2p-1}} \cdot \frac{\sum_{e \in C(i,j)} r(e)}{r(i,j)}\right)^{\frac{1}{p-1}}\right\}.
	    $$
	    Here, $C(i,j)$ is the fundamental cycle of $T^t$ defined by $(i,j)$.
	    \STATE Let $C^t = C(i,j)$, and let $\vec{C}^t$ be an arbitrary orientation of $C^t$.
	    \STATE Solve the equation $\sum_{(k,l) \in \vec{C}^t} r(i,j)(f^{t-1}(i,j) + \Delta^t)\abs{f^{t-1}(i,j) + \Delta^t}^{p-2}= 0$ for $\Delta^t$.
	    \STATE For all $(k,l) \in \vec{E}$, let
	    $$f^t(k,l)
	    =
	    \begin{cases}
	    f^{t-1}(k,l), &\text{if $kl \notin C^t$} \\
	    f^{t-1}(k,l) + \Delta^t, &\text{if $(k,l) \in \vec{C}^t$} \\
	    f^{t-1}(k,l) - \Delta^t, &\text{if $(l,k) \in \vec{C}^t$}
	    \end{cases}
	    $$ 
	    be the new flow.
	\ENDFOR
	\STATE \Return{$\fv^K$}
	\caption{Cycle-toggling for $p$-norm flow.}
	\label{alg:cycle_general}
\end{algorithmic}
\end{algorithm}

\begin{algorithm}[ht!]
\begin{algorithmic}[1]
    \STATE Let $q := \frac{p}{p-1}$, and let $w(e) = r(e)^{-\frac{1}{p-1}}$.
% 	\STATE Compute a low-stretch spanning tree $T^0$ with respect to weights $\frac{1}{w(i,j)}$.
	\STATE Set $x^0(i) = 0$ for all $i \in V$.
	\FOR{$t \gets 1$ \KwTo $K$}
	    \STATE Compute a low-stretch spanning tree $T^t$ with respect to weights $\frac{\abs{x^{t-1}(i) - x^{t-1}(j)}^{2-q}}{w(i,j)}$ for $ij \in E$.
	    \STATE Sample an edge $(i,j) \in T^t$  with probability  $P^t_{ij}$, where
	    $$
	    P^t_{ij}
	    \propto
	    \max\left\{q2^{2q-1} \frac{\sum_{ij \in \delta(C(i,j))} w(i,j)\abs{x(i) - x(j)}^{q-2}}{w(u,v)\abs{x(u) - x(v)}^{q-2}}, \;\left(q2^{2q-1}\frac{\sum_{ij \in \delta(C(i,j))}w(i,j)}{w(u,v)}\right)^{\frac{1}{q-1}} \right\}.
	    $$
	    Here, $C(i,j)$ is the fundamental cut of $T^t$ defined by $(i,j)$.
	    \STATE Let $C^t = C(i,j)$.
	    \STATE Solve $\sum_{k\in C^t, l\not\in C^t, kl \in E } w(k,l)(x^{t-1}(k) - x^{t-1}(l) +\Delta^t)\abs{x^{t-1}(k)-x^{t-1}(l)+\Delta^t}^{q-2} = b(C)$ for $\Delta^t$.
	    \STATE For all $i \in V$, let
	    $$
	    x^t(i) =
	    \begin{cases}
	    x^{t-1}(i), &\text{if $i \not\in C^t$} \\
	    x^{t-1}(i) + \Delta^t, &\text{if $i \in C^t$}
	    \end{cases}
	    $$
	   % \STATE Let $\fv^t$ be tree-defined flow for $\xv^t$ with respect to tree $T^t$.
	\ENDFOR
% 	\STATE \Return{$\fv^K$, $\xv^K$}
	\STATE \Return{$\xv^K$}
	\caption{Cut-toggling for the dual of $p$-norm flow.}
	\label{alg:cut_general}
\end{algorithmic}
\end{algorithm}

%The minimum $p$-norm flow problem and its dual are defined as follows (see Appendix \ref{app:dual_deriv} for the details of how the dual is derived): 
%\begin{align*}
%(P) \quad &\min  \quad \frac1p\sum_{(i,j)\in\vec{E}} r(i,j)\abs{f(i,j)}^p  
%\qquad\qquad (D)  &\max \quad \bv^T\xv - \left(1-\frac1p\right)\sum_{(i,j)\in\vec{E}} \left(\frac{\abs{x(i) - x(j)}^p}{r(i,j)}\right)^{\frac{1}{p-1}} \\
%&\text{s.t.} \quad \Am\fv = \bv 
%&\text{s.t.} \quad \xv \in \R^{V}~\text{unconstrained} 
%\end{align*}
%For convenience, let $q = \frac{p}{p-1}$ and $w(i,j) = r(i,j)^{-\frac{1}{p-1}}$, so that the dual objective can equivalently be written as $\bv^T\xv - \frac1q \sum_{(i,j)\in \vec{E}} w(i,j)\abs{x(i)-x(j)}^q$. Throughout this section, we will let $\sE(\fv)$ denote the primal objective and $\sB(\xv)$ denote the dual objective. Also, define $R = \frac{\max_e r(e)}{\min_e r(e)}$.
We show next a bound on the number of iterations for these two algorithms to find a near-optimal solution.
\cycleiter*

\cutiter*

Although the cut-toggling algorithm returns a dual solution $\xv^K$, and \emph{not} a flow the lemma below shows that it is possible to convert an approximately optimal dual solution to an approximately optimal primal flow. To do this, we define an analogous notion of ``tree-defined flow" for general $p$-norms. This is a generalization of Definition \ref{def:tree_def_flow} and follows immediately from Fact~\ref{fact:kkt}. It then follows that the cut-toggling algorithm can be used to compute a near-optimal $p$-norm flow.

\begin{defn}[Tree-defined flow for $p$-norms]
Let $T$ be a spanning tree, $\xv \in \R^V$, and $\bv \in \R^V$ satisfying $\one^T \bv = 0$. The \textbf{tree-defined flow} with respect to $T$, $\mathbf{x}$ and $\mathbf{b}$ is the flow $\mathbf{f}_{T, \xv}$ defined by 
$${f}_{T, \xv}(i,j) = w(i,j)(x(i)-x(j))\abs{x(i)-x(j)}^{q-2} \quad \text{if $(i, j) \not\in T$},
$$
and for $(i, j) \in T$, $f_{T,\xv}(i, j)$ is the unique value such that the resulting flow is a feasible $\mathbf{b}$-flow.
\end{defn}

The lemma below shows that converting $\xv^K$ to a feasible flow via a tree-defined flow preserves approximate optimality up to polynomial factors. 
\begin{restatable}[Converting dual solution to flow]{lemma}{flowconvert}
\label{lem:flow_convert}
Suppose $\xv$ satisfies $\sB(\xv) \geq (1-\epsilon')\sB(\xv^*)$, where
% $$\ln\frac{1}{\epsilon'} = O\left(q
%     \ln\left(mnRq2^q\max\left\{\frac3\epsilon, 1\right\}\right)\right).$$
$$
\epsilon'
\leq
\left(
\frac{\min\{\epsilon/3, 1\}}{2n^4(mR)^{\frac{1}{p}} (q2^q)^{1+\frac1q}(nR)^{\frac{1}{p}}}
\right)^q.
$$
Then for any tree $T$, the tree-defined flow $\fv_{T, \xv}$ satisfies $ \sE(\fv_{T, \xv}) \leq (1+\epsilon)\sE(\fv^*)$. 
\end{restatable}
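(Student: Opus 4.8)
The plan is to pass, in three steps, from the hypothesis on $\sB(\xv)$ to the conclusion on $\sE(\fv_{T,\xv})$: (i) a small dual gap forces $\xv$ to be close to $\xv^*$ edge by edge, via uniform convexity of the dual (this is where $1<p\le 2$ enters); (ii) closeness of potentials forces the potential-defined flow of $\xv$ to be close to $\fv^*$; (iii) the resulting error, being a difference of two feasible $\bv$-flows, is a circulation, hence orthogonal to $\nabla\sE(\fv^*)$, which kills the first-order term when comparing $\sE(\fv_{T,\xv})$ with $\sE(\fv^*)$. Throughout, write $\hat{\fv}$ for the potential-defined flow of $\xv$ on \emph{all} edges, $\hat f(i,j)=w(i,j)(x(i)-x(j))\abs{x(i)-x(j)}^{q-2}$; it agrees with $\fv_{T,\xv}$ off $T$ (but need not be feasible), while by Fact~\ref{fact:kkt} the optimum $\fv^*$ \emph{is} the potential-defined flow of $\xv^*$ on all edges, so that $r(e)\abs{f^*(e)}^p=w(e)\abs{x^*(i)-x^*(j)}^q$ and $\sum_e w(e)\abs{x^*(i)-x^*(j)}^q=p\,\sE(\fv^*)$. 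We may assume $\bv\ne\mathbf 0$, hence $\sB(\xv^*)=\sE(\fv^*)>0$.

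\textbf{Step 1 (gap $\Rightarrow$ potential error).} Write $\sB(\yv)=\bv^T\yv-h(\yv)$ with $h(\yv)=\tfrac1q\sum_{(i,j)\in\vec{E}}w(i,j)\abs{y(i)-y(j)}^q$. Since $1<p\le 2$ gives $q\ge2$, $h$ is uniformly convex of order $q$: summing the scalar inequality $\abs b^q\ge\abs a^q+q\,a\abs a^{q-2}(b-a)+c_q\abs{b-a}^q$ (valid for $q\ge2$, with $c_q:=(2^{q-1}-1)^{-1}\ge 2^{1-q}$) over the edges yields $D_h(\yv,\zv)\ge\tfrac{c_q}{q}\sum_e w(e)\bigl|(y(i)-y(j))-(z(i)-z(j))\bigr|^q$. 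As $\xv^*$ maximizes the concave $\sB$, $\nabla\sB(\xv^*)=0$, i.e.\ $\bv=\nabla h(\xv^*)$, and a one-line computation gives $\sB(\xv^*)-\sB(\xv)=D_h(\xv,\xv^*)$. Combined with $\sB(\xv)\ge(1-\epsilon')\sB(\xv^*)$ this yields $\sum_e w(e)\bigl|(x(i)-x(j))-(x^*(i)-x^*(j))\bigr|^q\le q2^{q}\epsilon'\,\sB(\xv^*)$, so for every edge $\abs{(x(i)-x(j))-(x^*(i)-x^*(j))}\le\theta:=(q2^q\epsilon'\sB(\xv^*)/w_{\min})^{1/q}$, where $w_{\min}=\min_e w(e)$.

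\textbf{Steps 2--3 (flow error, then compare energies).} From the two $q$-norm sums above, $\abs{x^*(i)-x^*(j)}\le(p\,\sE(\fv^*)/w_{\min})^{1/q}$ and, once $\epsilon'\le p/(q2^q)$, also $\abs{x(i)-x(j)}\le 2(p\,\sE(\fv^*)/w_{\min})^{1/q}=:M$. Since $s\mapsto s\abs s^{q-2}$ is $(q-1)M^{q-2}$-Lipschitz on $[-M,M]$, for every edge $\abs{\hat f(i,j)-f^*(i,j)}\le (q-1)w_{\max}M^{q-2}\theta=:\eta$, and collecting exponents (with $w_{\max}/w_{\min}=R^{1/(p-1)}$) one finds $\eta^p$ equals a polynomial in $q$, $p$, $R^{1/(p-1)}$ times $(\epsilon')^{p/q}\,\sE(\fv^*)$, the powers of $\sE(\fv^*)$ coming out to exactly $1$ because $(q-2)/q+1/q=(q-1)/q=1/p$. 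Now set $\gv:=\fv_{T,\xv}-\fv^*$. Both flows are feasible $\bv$-flows, so $\gv$ is a circulation, hence determined (w.r.t.\ $T$) by its values on $E\setminus T$, where $\gv=\hat f-f^*$; expanding $\gv$ in the fundamental-cycle basis of $T$ gives $\abs{\gv(t)}\le\sum_{e\notin T}\abs{\hat f(e)-f^*(e)}\le m\eta$ for every tree edge $t$, whence $\sum_e r(e)\abs{\gv(e)}^p\le 2nm^{p}r_{\max}\eta^p$. By the order-$p$ smoothness inequality $\abs{a+b}^p\le\abs a^p+p\,a\abs a^{p-2}b+C_p\abs b^p$ (valid for $1<p\le 2$ with an absolute constant $C_p\le 2$),
\[ \sE(\fv_{T,\xv})-\sE(\fv^*)\ \le\ \iprod{\nabla\sE(\fv^*)}{\gv}+\tfrac{C_p}{p}\sum_e r(e)\abs{\gv(e)}^p. \]
By Fact~\ref{fact:kkt}, $\nabla\sE(\fv^*)=\Am^T\xv^*$ (coordinatewise $r(e)f^*(e)\abs{f^*(e)}^{p-2}=x^*(i)-x^*(j)$), so the first term is $\iprod{\xv^*}{\Am\gv}=0$ since $\gv$ is a circulation. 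Hence $\sE(\fv_{T,\xv})-\sE(\fv^*)\le\tfrac{2C_p}{p}nm^{p}r_{\max}\eta^p$, and plugging in the expression for $\eta^p$ the factor $\sE(\fv^*)$ cancels, leaving a quantity $\le\epsilon\,\sE(\fv^*)$ once $\epsilon'$ is as small as in the statement; matching the precise numerical bound on $\epsilon'$ is then a routine calculation in $n,m,R,q,\epsilon$ (using $m\le n^2$).

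\textbf{Main obstacle.} The crux is Step 1: turning a small gap in \emph{objective value} into a small \emph{edgewise} error in the potentials. This is exactly where the hypothesis $1<p\le2$ is used, since it is equivalent to $q\ge2$, which makes the concave part $h$ of the dual uniformly convex; for $q<2$, $h$ is only order-$q$ smooth and Step 1 fails, consistent with the paper routing the $p\ge2$ regime through cycle-toggling on the primal (which returns a feasible flow directly and needs no such rounding). The only other essential idea is the circulation argument (orthogonality of $\gv$ to $\nabla\sE(\fv^*)=\Am^T\xv^*$); without killing that first-order term one would only get $\sE(\fv_{T,\xv})\lesssim 2^{p-1}\sE(\fv^*)$, which is useless. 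Everything else is bookkeeping.
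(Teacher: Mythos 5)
Your overall architecture differs from the paper's in an interesting way at the end: the paper never uses the orthogonality of $\fv_{T,\xv}-\fv^*$ to $\Am\trans\xv^*$; instead it splits $\fv_{T,\xv}=\fv+\tilde{\fv}$ (potential-defined flow plus a tree-routed residual) and applies the Minkowski-type inequality $\sE(\fv+\tilde{\fv})\leq(\sE(\fv)^{1/p}+\sE(\tilde{\fv})^{1/p})^p$ twice, so only $p$-th roots of energies are ever compared. Your Step 1 is the paper's Step 1 verbatim (uniform convexity of the $q$-th power part plus $\nabla\sB(\xv^*)=0$, yielding $\sum_e w(e)\abs{\Delta(i)-\Delta(j)}^q\leq q2^q\epsilon'\sB(\xv^*)$), and your circulation/first-order-cancellation trick in Step 3 is sound and arguably cleaner than the paper's double Minkowski.

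However, there is a genuine quantitative gap in Step 2 that the closing sentence ("routine calculation") does not survive. You bound $\abs{\hat f(e)-f^*(e)}$ uniformly by $\eta=(q-1)w_{\max}M^{q-2}\theta$ with $M\propto(\sE(\fv^*)/w_{\min})^{1/q}$ and $\theta\propto(\epsilon'\sE(\fv^*)/w_{\min})^{1/q}$, so that $r_{\max}\eta^p\propto r_{\max}w_{\max}^pw_{\min}^{-1}(\epsilon')^{p/q}\sE(\fv^*)=R^{q}(\epsilon')^{p/q}\sE(\fv^*)$ (using $w=r^{-(q-1)}$), exactly as your own remark about "a polynomial in $R^{1/(p-1)}$" indicates. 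Your final requirement is therefore $\epsilon'\lesssim\bigl(\epsilon/(nm^pR^q)\bigr)^{q-1}$, whereas the lemma's hypothesis only guarantees $\epsilon'\leq S^q$ with $S$ whose denominator contains merely $R^{2/p}\leq R^2$. For $p<2$ these are incomparable in the wrong direction: e.g.\ for $p=3/2$ (so $q=3$) and $R$ large with $n,m,\epsilon$ fixed, the statement permits $\epsilon'\sim R^{-4}$ while your chain needs $\epsilon'\sim R^{-6}$, so the hypothesis does not imply your requirement and the proof does not establish the lemma as stated. The fix is exactly what the paper does in its Step 2: keep the \emph{per-edge} quantities, i.e.\ use $\abs{x(i)-x(j)-(x^*(i)-x^*(j))}\leq(q2^q\epsilon'\sB(\xv^*)/w(i,j))^{1/q}$ and $\abs{x^*(i)-x^*(j)}\leq(p\sE(\fv^*)/w(i,j))^{1/q}$ for the same edge, so that the prefactor $w(i,j)$ in the Lipschitz bound cancels against $w(i,j)^{-(q-1)/q}$ to leave $w(i,j)^{1/q}$, and $r(e)\eta(e)^p$ then carries no $R$ at all on non-tree edges and only a single factor $R$ on tree edges. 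With that repair your route closes under the stated bound on $\epsilon'$; without it, Step 2 is too lossy.
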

% \mh{ Can you write this a $\epsilon' = ..$?}

\section{Lower Bound on the Per-Iteration Complexity of the Algorithm}
\label{sec:ds}
In the remainder of the paper, we focus on the case where $p=2$. Recall that each single iteration of \OALGNAME\  can be implemented in logarithmic time.
In this section we show that assuming the OMv conjecture (see below) each single iteration of \ALGNAME\  cannot be implemented in linear time. This implies that in order to speed up our algorithm we need to ``batch-up'' iterations, which is the approach we use in the next section. We first present a natural data structure, called \textit{TreeFlow} data structure, such that each iteration of the algorithm requires only two operations of the  \textit{TreeFlow} data structure and then prove that assuming the OMv conjecture~\cite{HKNS15} it is impossible to implement the \textit{TreeFlow} data structure such that each operation of the data structure takes $O(n^{1-\epsilon})$ time.
To simplify the reduction we reduce from a closely related problem called the \textit{Online Vector-Matrix-Vector Multiplication Problem (OuMv)}. %It was shown in~\cite{HKNS15} that the OMv problem and the OuMv problem have the same asymptotic time complexity. Thus, we state our presentation below in terms of OuMv instead of OMv.

\begin{defn}[Online Vector-Matrix-Vector Multiplication Problem]
We are given a positive integer $n$, and a Boolean $n\times n$ matrix $\mathbf{M}$. At each time step $t =1, \ldots, n$, we are shown a pair of Boolean vectors $(\mathbf{u_t}, \mathbf{v_t})$, each of length $n$. Our task is to output $\mathbf{u_t^\top} \mathbf{M} \mathbf{v_t}$ using Boolean matrix-vector operations. Specifically, ``addition" is replaced by the \textsf{OR} operation, so that $0+0 = 0$, and $0+1 = 1+0 = 1+1 = 1$. Hence, $\mathbf{u_t^\top} \mathbf{M} \mathbf{v_t}$ is always either 0 or 1. 
\end{defn}

The OMv conjecture implies that no algorithm for the OuMv problem can do substantially better than naively multiplying $\mathbf{u_t^\top} \mathbf{M} \mathbf{v_t}$ at time step $t$. Specifically, it say the following:

\begin{lemma}[\cite{HKNS15}]
\label{conj:omv}
Let $\epsilon > 0$ be any constant.
Assuming the OMv conjecture,
there is no algorithm for the online vector-matrix-vector multiplication problem that uses preprocessing time $O(n^{3-\epsilon})$ and takes total time $O(n^{3-\epsilon})$  with error probability at most $1/3$ in the word-RAM model with $O(\log n)$ bit words. 

% In particular, this implies that the time for a single vector-matrix-vector multiply, amortized over the entire sequence of operations, can be no less than $\Theta(n^2)$. 
\end{lemma}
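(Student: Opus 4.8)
The plan is to prove this by reducing from OMv to OuMv, which is exactly the reduction of Henzinger, Krinninger, Nanongkai and Saranurak \cite{HKNS15} (so in the paper one could equally well just cite it as a black box); I would argue the contrapositive, i.e.\ that an OuMv algorithm with preprocessing time $O(n^{3-\epsilon})$ and total time $O(n^{3-\epsilon})$ would yield an OMv algorithm fast enough to refute the OMv conjecture.

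The starting point is the identity that makes the reduction work: over Boolean arithmetic, the $i$-th coordinate of $\Mm\vv$ is exactly the vector-matrix-vector product $\mathbf{e_i}^\top\Mm\vv$. Hence a single online matrix-vector query $\vv$ of OMv can be answered by the $n$ online vector-matrix-vector queries $(\mathbf{e_1},\vv),\dots,(\mathbf{e_n},\vv)$ handed to an OuMv algorithm, after which the answer bits are reassembled into $\Mm\vv$. Carried out naively --- by invoking a size-$n$ OuMv algorithm and restarting it once per OMv query --- this costs $n$ times the assumed OuMv bound, i.e.\ $O(n^{4-\epsilon})$, which is \emph{not} enough to contradict the OMv conjecture. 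Following \cite{HKNS15}, the fix is to first partition $\Mm$ along both axes into submatrices, answer each of the resulting sub-products (which recombine into $\Mm\vv$ by a Boolean OR over column blocks) using OuMv instances on the smaller submatrices, and choose the granularity of this partition, together with the number of re-invocations, so that each OuMv sub-instance is fed essentially as many online rounds as it natively supports and the preprocessing cost is amortized. Pushing the exponent strictly below $3$ in this way is naturally phrased via the parameterized (``$\gamma$-OMv'') form of the conjecture, of which the symmetric statement quoted here is the relevant instantiation. Randomization is handled routinely: each OuMv answer is a single bit, so success probability can be amplified, and the whole construction stays in the word-RAM model with $O(\log n)$-bit words, so the $1/3$ error bound transfers.

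The main obstacle is not the combinatorial construction but the time accounting: one must verify that, summed over all OuMv sub-instances and all ``sessions'' (preprocessing plus online phases), the running times total $O(n^{3-\epsilon'})$ for some $\epsilon'>0$, which is precisely the delicate balancing carried out in \cite{HKNS15}. My plan would therefore be to invoke their lemma and merely check that our reduction is syntactically an instance of it --- a Boolean matrix $\Mm$, an online stream of pairs $(\uv_t,\vv_t)$, and outputs $\uv_t^\top\Mm\vv_t$ --- which it is.
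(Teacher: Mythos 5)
The paper never proves this statement itself—it is imported directly from \cite{HKNS15}—and your plan is essentially that same approach: cite the HKNS15 reduction, whose content you sketch accurately (answering each OMv query via per-row queries $\mathbf{e_i}^\top\Mm\vv_t$, block-partitioning $\Mm$, and the $\gamma$-parameterized bookkeeping to push the exponent below $3$). So your proposal is correct and matches the paper, which simply cites the lemma as a black box.
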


Thus we will reduce the OuMv problem to the \textit{TreeFlow} data structure such that
computing $\mathbf{u_t^\top} \mathbf{M} \mathbf{v_t}$ requires two operations in the \textit{TreeFlow} data structure. The lower bound then follows from Lemma~\ref{conj:omv}.
%It works as follows.
%\begin{enumerate}
 %   \item
%    (1) We define a data structure, such that any implementation of the data structure with $O(\Delta)$ time per operation gives an implementation of \ALGNAME\ with $O(\Delta)$ time per iteration. 
%    (2) Assuming the OMv conjecture, we prove that no implementation of the \textit{TreeFlow} data structure exists where each operation takes $O(n^{1-\epsilon})$ time. 
%\end{enumerate}

\subsection{The \textit{TreeFlow}  Data Structure}
\label{subsec:ds}
The \textit{TreeFlow} data structure is given as input (1) an undirected graph $G = (V, E)$ with $n = |V|$, (2) a spanning tree $T$ of $G$ that is rooted at an arbitrary but fixed vertex, (3) a value $r(u,v)$ for each edge $(u,v) \in E$ (representing the resistance of $(u, v)$), and (4) a value $b(v)$ for each vertex $v \in V$ (representing the supply at $v$). The quantities $r(u, v)$ and $b(v)$ are given at the beginning and will remain unchanged throughout the operations. For any set $C \subset V$, let $S(C) := \sum_{v \in C} b(v)$.

Furthermore,
each vertex $v$ has a non-negative {value}, denoted $\mathsf{value}(v)$, which  can be seen as the ``potential'' of $v$. It is initially 0 and can be modified.
%One should think of $\mathsf{value}(v)$ as the potential of $v$.
For any set $C \subset V$  we define the \emph{flow out of} $C$ to be the quantity
$$f(C) := \sum_{(u,v) \in E, u \in C, v \not\in C} \left(\mathsf{value}(u) - \mathsf{value}(v)\right)/r(u,v).$$

The \textit{TreeFlow} data structure supports the following operations.
\begin{itemize}
    \item $\textsf{addvalue(\textbf{vertex} $v$, \textbf{real} $x$)}$: Add $x$ to the value of every vertex in the subtree of $T$ rooted at $v$. 
    % \item $\textsf{findvalue(\textbf{vertex} $v$)}$: Return the value of $v$. 
    \item $\textsf{findflow(\textbf{vertex} $v$)}$: Return $S(C) - f(C)$, where $C$ is the set of vertices in the subtree of $T$ rooted at $v$. 
\end{itemize}
The \emph{TreeFlow} data structure implements exactly the operations we require for each iteration of \ALGNAME: The \textsf{addvalue} operation allows us to update the potentials on a fundamental cut, and \textsf{findflow} computes $S(C) - f(C)$, thereby allowing us to compute $\Delta$ at each iteration.
Note that if all $b(v)$-values are zero, the \emph{TreeFlow} data structure simply returns $-f(C)$, which gives it its name.

We even show the lower bound  for a ``relaxed'' version defined as follows: In an \emph{$\alpha$-approximate \textit{TreeFlow} data structure} the
operation \textsf{addvalue} remains as above and the operation  \textsf{findflow($v$)} returns a value that is within a multiplicative
 factor $\alpha \ge 1$ (that can be a function of $n$) of the correct answer, i.e., a value between $(S(C) - f(C))/\alpha$ and $(S(C) - f(C))\cdot \alpha$.

% Initializing the data structure must take no more than $\widetilde{O}(m)$ time. 

% Furthermore, the operations \textsf{addvalue}, \textsf{findvalue}, and \textsf{findflow} must all run in $O(\log n)$ time. 

%\begin{claim}
%Any implementation of the algorithm such that each iteration of the algorithm takes $O(K)$ time %gives an implementation of the data structure such that each operation takes $O(K)$ time. 
%\end{claim}

%\begin{proof} (TODO)
%This is unclear. The confusing thing is that our algorithm always interleaves calls to %\textsf{addvalue} and \textsf{findflow}, whereas the data structure has no such restriction. Also, %the reduction in the next section has the data structure call \textsf{addvalue} many times %followed by calling \textsf{findflow} many times, which is not what the algorithm does. 
%\end{proof}

\subsection{The Reduction}
In this section we  show that any exact or approximate \textit{TreeFlow} data structure takes near-linear time per operation assuming the OMv conjecture. The hardness of approximation is interesting, because it turns out that even an approximation of this quantity is sufficient to obtain an algorithm for $\mathbf{L}\mathbf{p} = \mathbf{b}$. (Albeit with a convergence rate that deteriorates with the approximation factor.)
%In the $\alpha$-approximate \textit{TreeFlow} data structure, the \textsf{addvalue} operations are executed as described above and each \textsf{findflow($v$)} operation returns a value that is within a multiplicative  factor $\alpha \ge 1$ of the correct answer, i.e., a value that is at between $(S(C) - f(C))/\alpha$ and $(S(C) - f(C))\cdot \alpha$. Note that in the following lemma, $\alpha$ does not need to a be a constant and can depend on $n$.

\begin{lemma}
Let $\epsilon > 0$ be any constant and let $\alpha \ge 1$ be any value.
Assuming the OMv conjecture, no implementation of the $\alpha$-approximate \textit{TreeFlow} data structure exists that uses preprocessing time $O(n^{3-\epsilon})$ and where the two operations \textsf{addvalue} and \textsf{findflow} both take $O(n^{1-\epsilon})$ time,
 such that over a polynomial number of operations the error probability is at most $1/3$ in the word-RAM model with $O(\log n)$ bit words.
\end{lemma}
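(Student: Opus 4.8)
The plan is to reduce the OuMv problem to the $\alpha$-approximate \textit{TreeFlow} data structure, so that Lemma~\ref{conj:omv} immediately gives the claimed lower bound. Given the Boolean $n \times n$ matrix $\Mm$, I would first build, in the preprocessing phase, a fixed graph $G$ and spanning tree $T$ that encode $\Mm$. A natural choice is a bipartite-style gadget: create a "row side" with vertices $a_1, \dots, a_n$ and a "column side" with vertices $c_1, \dots, c_n$, connect $a_k$ to $c_\ell$ by an edge exactly when $\Mm_{k\ell} = 1$, and attach everything to a small tree skeleton so that for each index $k$ there is a tree node $v_k$ whose subtree is exactly $\{a_k\}$ (or a cluster that lets us toggle the potential of $a_k$ in isolation), and similarly a tree node $w$ whose subtree is exactly the set $\{c_1,\dots,c_n\}$ together with the relevant portion needed so that \textsf{findflow}$(w)$ reads off the flow out of the column side. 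Edge resistances $r(\cdot)$ and supplies $b(\cdot)$ are set once and never change; I will choose $r \equiv 1$ on the $\Mm$-edges and choose the tree/auxiliary resistances large (or route auxiliary edges so they contribute nothing to the cuts of interest) so that $f(C)$ for the cut $C$ of interest is, up to sign, exactly $\sum_{k \in C,\ \ell \notin C,\ \Mm_{k\ell}=1} (\mathsf{value}(a_k) - \mathsf{value}(c_\ell))$.

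For each query step $t$ with Boolean vectors $(\uv_t, \vv_t)$, I would simulate the multiplication $\uv_t^\top \Mm \vv_t$ using $O(n)$ data-structure operations, as the statement promises. Concretely: use \textsf{addvalue} to set $\mathsf{value}(a_k) = 1$ for every $k$ with $u_t(k) = 1$ and $\mathsf{value}(a_k) = 0$ otherwise (this is $O(n)$ \textsf{addvalue} calls, one per coordinate, exploiting that each $v_k$'s subtree is the singleton $a_k$); similarly set $\mathsf{value}(c_\ell) = 0$ when $v_t(\ell) = 1$ and $\mathsf{value}(c_\ell) = $ (some large constant $M$) when $v_t(\ell) = 0$, so that an edge $a_k c_\ell$ with $\Mm_{k\ell}=1$ contributes a nonzero amount to $f(C)$ precisely when $u_t(k) = 1$ and $v_t(\ell) = 1$, i.e., precisely when that entry witnesses $(\uv_t^\top \Mm \vv_t) = 1$. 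Then one \textsf{findflow} call on the column-side node returns $S(C) - f(C)$; since $S(C)$ is a fixed known constant, I recover $f(C)$, and $\uv_t^\top \Mm \vv_t = 1$ iff $f(C) \ne 0$ iff $f(C)$ is bounded away from $0$ by at least (a constant depending on $M$). After reading the answer I undo the \textsf{addvalue} operations with their negatives, again $O(n)$ of them, restoring $\mathsf{value} \equiv 0$ before the next step. Over all $n$ query steps this is $O(n^2)$ data-structure operations and $O(n)$ preprocessing of the graph plus whatever preprocessing the data structure itself does; if each operation ran in $O(n^{1-\epsilon})$ time after $O(n^{3-\epsilon})$ preprocessing, the total would be $O(n^{3-\epsilon})$, contradicting Lemma~\ref{conj:omv}.

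The one subtlety, and the main thing to get right, is the \emph{approximation factor} $\alpha$: \textsf{findflow} only returns a value in $[(S(C)-f(C))/\alpha,\ (S(C)-f(C))\alpha]$, and $\alpha$ may grow with $n$. To make the reduction robust to this, I would arrange the gadget so that the "yes" and "no" cases of $\uv_t^\top \Mm \vv_t$ correspond to values of $S(C) - f(C)$ that differ by much more than a factor of $\alpha^2$ — for instance, engineer $S(C)$ so that in the "no" case $S(C) - f(C)$ equals $0$ exactly (or a tiny controlled value), while in the "yes" case it is at least $M$, for $M$ chosen polynomially larger than any $\alpha$ the adversary could use; since $\alpha$ is allowed to depend on $n$ but is fixed in advance for the data structure, and our graph has $N = \Theta(n)$ vertices, picking $M = N^{c}$ for large enough constant $c$ (or routing so that "no" gives exactly $0$, which no multiplicative approximation can corrupt) cleanly separates the two cases. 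Handling the boundary where $S(C) - f(C)$ is exactly zero — a multiplicative guarantee says nothing there — is why it is cleanest to design the gadget so the interesting distinction is "exactly $0$" versus "at least $M$", rather than "small" versus "large". Finally, I would note that the error-probability bookkeeping is immediate: a data structure with failure probability $\le 1/3$ over a polynomial number of operations yields, via this reduction, an OuMv algorithm with error $\le 1/3$, matching the hypothesis of Lemma~\ref{conj:omv}, so the conjectured hardness transfers.
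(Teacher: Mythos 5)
There is a genuine gap, and it is in the heart of the reduction: your single aggregated \textsf{findflow} on the column side does not determine $\uv_t^\top \Mm \vv_t$. With your value assignment ($\mathsf{value}(a_k)=1$ iff $u_t(k)=1$; $\mathsf{value}(c_\ell)=0$ if $v_t(\ell)=1$ and $=M$ otherwise), an edge $a_kc_\ell$ with $M_{k\ell}=1$ contributes $\mathsf{value}(c_\ell)-\mathsf{value}(a_k)$ to the flow out of the column-side cut $C$, which is $-1$ for a witness ($u_t(k)=v_t(\ell)=1$) but $M$ when $u_t(k)=0,v_t(\ell)=0$ and $M-1$ when $u_t(k)=1,v_t(\ell)=0$; so the claim that an edge ``contributes a nonzero amount precisely when $u_t(k)=1$ and $v_t(\ell)=1$'' is false. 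Concretely, $f(C)= M\cdot E_0 - \uv_t^\top \Mm(\one-\vv_t) - \uv_t^\top \Mm \vv_t$, where $E_0$ is the number of $1$-entries of $\Mm$ in columns with $v_t(\ell)=0$. The contaminating terms change with every query pair, so they cannot be cancelled by the fixed $S(C)$ (supplies are set once at preprocessing), and computing them is itself an OuMv-type task; no choice of the constant $M$ fixes this, and symmetric reassignments run into the same problem because a single cut sums \emph{signed} contributions over all $\Mm$-edges, allowing witnesses to be masked.

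The missing idea is to query cuts that isolate one column at a time with one-sided (non-negative) contributions, which is exactly what the paper does: take a star tree rooted at a hub $x$ so that every subtree other than the root is a singleton, set $\mathsf{value}=Z$ on $x$, on every row node $d_i$ with $u_i=0$, and on every column node $c_j$ with $v_j=1$, and leave rows with $u_i=1$ (and unqueried columns) at $0$. Then for each $c_j$ with $v_j=1$, every edge leaving the singleton $\{c_j\}$ contributes either $0$ or $Z>0$, so $f(c_j)>0$ iff $c_j$ has a neighbor $d_i$ with $u_i=1$, and OR-ing the answers of the (up to $n$) per-column \textsf{findflow} calls yields the bit; this is still $O(n)$ operations per vector pair, so the operation count and the appeal to Lemma~\ref{conj:omv} go through unchanged. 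Your approximation handling is in the right spirit (separate ``$f=0$'' from ``$f$ large''), but note two points: the returned quantity is $S(C)-f(C)=b(c_j)-f(c_j)$, which need not be $0$ in the ``no'' case, so you still need the sign/interval argument; and since $\alpha$ may be an arbitrary function of $n$, choosing $M=N^{c}$ for a constant $c$ does not suffice in general --- the threshold must scale with $\alpha$ (the paper takes $Z>\norm{\rv}_\infty\norm{\bv}_\infty\alpha^2$) or you must force the exact comparison via the sign of the answer.
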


\begin{proof}
Given an $n \times n$ Boolean matrix $\mathbf{M}$, we create the following \textit{TreeFlow} data structure.
The graph contains $2n+1$ nodes, namely a special node $x$,  one node $c_j$ for each column $j$ with
$1 \le j \le n$ and one node $d_i$ for each row $i$ with $1 \le i \le n$. There is  an edge $(d_i,c_j)$ if entry $M_{ij}$ = 1. Additionally, every node $c_j$ and every node 
$d_i$ has an edge to $x$. These edges are added to guarantee that the graph is connected. 
We set $r(c,d) = 1$ for every edge $(c,d)$ and
denote this graph by $G$. Let $T$ be the spanning tree of $G$ that is rooted at $x$ and consists of all the edges incident to $x$. Note that the subtree of $T$ root at any node $y \ne x$ consists of a single node $y$.

Now consider the sequence of $n$ vector pairs $(\mathbf{u}_t, \mathbf{v}_t)$ of the OuMv problem. Let $(\mathbf{u},\mathbf{v})$ be any such pair. We show below how to compute $\mathbf{u^\top} \mathbf{M} \mathbf{v}$ with $O(n)$ operations in the \textit{TreeFlow} data structure. 
Thus the sequence of $n$ vector pairs leads to $O(n^2)$ operations.
It then follows from the OMv conjecture and Lemma~\ref{conj:omv} that this sequence of $O(n^2)$ operations in the
\textit{TreeFlow} data structure cannot take time $O(n^{3-\epsilon})$, i.e., that it is not
possible that the complexity of both the $\textsf{addvalue}$ operation
and the $\textsf{findflow}$ operation are
$O(n^{1-\epsilon})$.

It remains to show how to compute $\mathbf{u^\top} \mathbf{M} \mathbf{v}$ with $O(n)$ operations in the \textit{TreeFlow} data structure. 
Initially the value $\mathsf{value}(v)$ of all nodes $v$ is 0. Let $Z$ be a large enough constant that we will specify later.

First, increase the value of $x$ to $Z$ by calling $\mathsf{addvalue}(x, Z)$. 
When given $(\mathbf{u},\mathbf{v})$ we increase the value of each row node $d_i$ with $u_i = 0$ by $Z$ by 
calling $\textsf{addvalue}(d_i, Z).$
Then, we perform the following 2 operations for each column node $c_j$ with $v_j = 1$:
$\textsf{addvalue}(c_j, Z)$
and  $\textsf{findflow}(c_j)$.
%on every row node $d_j$ with $u_j = 1$ and 
Afterwards we decrease the value again  for all nodes with value $Z$, so that  every node has value 0 again. (Alternatively, we could also increase the value of every node to $Z$, in which case we never execute an
$\textsf{addvalue}$ operation with negative second parameter.)

Note that $\mathbf{u^\top} \mathbf{M} \mathbf{v} = 1$ iff there exists an edge between a column node $c_j$ with $v_j = 1$ (i.e.  $\mathsf{value}(c_j)=Z$) and a row node $d_i$ with $u_i = 1$ (i.e.  $\mathsf{value}(d_i)=0$). 

We now show that $\mathbf{u^\top} \mathbf{M} \mathbf{v} = 1$ iff $f(c_j)>0$ for some column node $c_j$ with $v_j = 1$.  
(a) Assume first that  $\mathbf{u^\top} \mathbf{M} \mathbf{v} = 1$ and let $c^*$ denote a node $c_j$ and $d^*$ denote a node $d_i$ such that $v_j = 1$, $u_i = 1$ and $M_{ij} = 1.$ We will show that $f(c^*) > 0$.
Recall that the subtree of $c^*$ consists only of $c^*$. The edge $(c^*, d^*)$ leaves the subtree of $c^*$, contributing a positive amount to $f(c^*)$ because $\mathsf{value}(c^*)=Z$ and $\mathsf{value}(d^*)=0$.
All other edges leaving the subtree of $c^*$ contribute a non-negative amount to $f(c^*)$, since $\mathsf{value}(c^*)=Z$ and $\mathsf{value}(d_k)$ for other $k \ne i$ is either $Z$ or 0. Thus $f(c^*) > 0$.  
(b)
Assume next that  $\mathbf{u^\top} \mathbf{M} \mathbf{v} = 0$. In this case every node $c_j$ with $u_j = 1$ (and value $Z$) only has edges to nodes $d_i$ with $v_i = 0$ (and value $Z$). As before the subtree of every node $c_j$ only consists of $c_j$ and, thus, all edges leaving the subtree of $c_j$ contribute 0 to the flow out of the subtree. Thus, 
for every node $c_j$ with $u_j = 1$ we have $f(c_j) = 0$. 

To summarize  we have shown above that $\mathbf{u}\trans \mathbf{M} \mathbf{v} = 1$ iff $f(c_j) > 0$ for some column node $c_j$ with $\mathsf{value}(c_j) = Z$. We will now show how to use the results of the \textsf{findflow} queries returned by an $\alpha$-approximate \emph{TreeFlow} data structure to determine if $f(c_j)$ is positive or zero.  

Here is where we will choose the value of $Z$. The idea is to make $Z$ large enough so that if $f(c_j) > 0$, then $f(c_j)$ is very large. The idea is that this will allows us to distinguish between $f(c_j) = 0$ versus $f(c_j) > 0$, even if we only have access to an $\alpha$-approximation of $S(c_j) - f(c_j) = b(c_j) - f(c_j)$.

It will suffice to choose $Z$ large enough so that if $f(c_j) > 0$, then $f(c_j) > \max\{b(c_j), b(c_j)(1 - \alpha^2)\}$ (As $(1-\alpha^2) < 0$, the second term makes sense if $b(c_j) < 0$.) The value of $Z$ depends on $\alpha$, the supplies $\mathbf{b}$, and the resistances $\mathbf{r}$. For instance, it suffices to choose $Z > \norm{r}_\infty \norm{b}_\infty \alpha^2$. For this choice of $Z$, we have that if $f(c_j) > 0$ then (since it must have an edge to some $d_i$ with $\mathsf{value}(d_i) = 0$), 
$$f(c_j) \geq \frac{\mathsf{value}(c_j) - \mathsf{value}(d_i)}{r(c_j, d_i)} = \frac{Z-0}{r(c_j, d_i)} > \abs{b(c_j)}\alpha^2 > \max\{b(c_j),\, b(c_j)(1 - \alpha^2)\}.$$
Having chosen $Z$ this way, we have the following:
\begin{itemize}
    \item If $b(c_j) \geq 0$, then  $b(c_j) - f(c_j)$ is non-negative if $f(c_j) = 0$, and negative otherwise (because $f(c_j) > b(c_j)$ when $f(c_j) > 0$.) Any $\alpha$-approximation of $b(c_j) - f(c_j)$ allows us to correctly deduce the sign of $b(c_j) - f(c_j)$, hence also whether $b(c_j) -f(c_j) \ge 0$ or
    whether $b(c_j) - f(c_j) < 0$. From this we can deduce wheter $f(c_j) = 0$ or $f(c_j) > 0$.
    \item Suppose $b(c_j) < 0$. If $f(c_j) = 0$, the approximate data structure returns an answer in the interval $[b(c_j)\cdot \alpha,\,\frac{ b(c_j)}{\alpha}]$. If $f(c_j) > 0$, it returns an answer in the interval $[(b(c_j) - f(c_j))\cdot \alpha,\, \frac{b(c_j) - f(c_j)}{\alpha}]$. Note that 
    the left endpoint of the first interval is to the right of the right endpoint of the second interval as 
    $f(c_j) > b(c_j)\left(1 - \alpha^2\right)$ implies that $$ \implies b(c_j)\cdot \alpha > \frac{b(c_j) - f(c_j)}{\alpha}.$$
    Since the two intervals for $f(c_j) = 0$ and $f(c_j) > 0$ do not overlap, we can correctly distinguish the two cases using the approximate data structure. 
\end{itemize}

To summarize, each \textsf{findflow} query on $c_j$ allows us to determine if $f(c_j) > 0$ or $f(c_j) = 0$. If the flow is positive for some $c_j$, then the answer is $\mathbf{u^\top} \mathbf{M} \mathbf{v} = 1$, otherwise it is 0.
Note that it requires $O(n)$ operations in the \textit{TreeFlow} data structure to determine one $\mathbf{u^\top} \mathbf{M} \mathbf{v}$ value, which completes the proof.
\end{proof}

\begin{remark}
%We believe this reduction to OMv provides strong evidence against the possibility of implementing an iteration of the algorithm in nearly-linear time. 
Note that the proof can be modified to be more similar to the update sequence generated by \ALGNAME\ which alternates between \textsf{addvalue} and \textsf{findflow} operations by inserting
 after each \textsf{addvalue} operation  a \textsf{findflow} operation (whose answer might be ignored for the logic of the proof).
 %, leading to an alternating sequence of \textsf{addvalue} and \textsf{findflow} operations. This is interesting as \ALGNAME\ also alternates between \textsf{addvalue} and \textsf{findflow} operations.
 Note that the proof can also be adapted so that the values stored at the nodes are only increased, but this is not necessary for our application.
%
%
%\end{enumerate}
\end{remark}

\section{Speeding Up \ALGNAME}
\label{sec:speedup}
We now show how to surmount the OMv lower bound by taking advantage of the fact that the sequence of updates that \ALGNAME\  performs can be generated in advance. In \Cref{sec:batch}, we show that batching the updates yields a modification of the algorithm that runs in $\wtd{O}(m^{1.5})$ time. Then in \Cref{sec:recurse}, we use sparsification and recursion to further improve the runtime to $\wtd{O}(m^{1+\alpha})$ for any $\alpha > 0$. 

\subsection{A Faster Algorithm using Batching}
\label{sec:batch}
First, we show that it is possible to speed up the running time to $\wtd{O}(m^{1.5})$ time by batching the updates performed by \ALGNAME. In Lemma \ref{lem:runtime}, we showed that the algorithm can be implemented to run in time $\widetilde{O}(mn)$. (Here the tilde hides a factor of  $\log n \log\log n \log\frac{1}{\epsilon}$.) This running time essentially comes from $\wtd{O}(m)$ iterations, $O(n)$ time per iteration, and  $O(n^2)$ preprocessing time to compute the $H(C_1, C_2)$ table.
Recall that each iteration of \ALGNAME\ involves sampling a fundamental cut $C$ of the low-stretch spanning tree $T$ from a fixed probability distribution $P$, and then adding a constant to the potential of every vertex in $C$ so that the resulting potential-defined flow satisfies flow conservation across $C$. 

The main idea of batching is as follows. Denote the number of iterations by $K$ (which is $\widetilde{O}(m)$). 
Instead of sampling the fundamental cuts one at a time, consider sampling the next $l$ cuts that need to be updated for some $l \ll K$.  We can perform this sampling in advance because both the tree $T$ and the probability distribution over cuts of $T$ are fixed over the entire course of the algorithm. In each ``block" of size $l \ll K$, we contract all the edges of $T$ that do not correspond to one of the $l$ fundamental cuts to be updated. In this way, we work with a contracted tree of size $O(l)$ in each block (instead of the full tree, which has size $O(n)$). This makes the updates faster. However, the price we pay is that at the end of each block, we need to propagate the updates we made (which were on the contracted tree), back to the entire tree. We will show that by choosing $l = \sqrt{m}$, we can balance this tradeoff and get an improved running time of $\wtd{O}(m^{1.5})$. Pseudocode for \ALGNAME\ with batching is given in Algorithm \ref{alg:faster}. Note that the correctness of this algorithm follows directly from the correctness of \ALGNAME: Algorithm \ref{alg:faster} samples cuts from exactly the same distribution as \ALGNAME, and if we fix the same sequence of cuts to be used by both algorithms, then the output of the two algorithms is identical.

\begin{algorithm}[ht!]
\caption{\ALGNAME\ with batching}%$(T,\q_1,\q_2)$}
\begin{algorithmic}[1]
\label{alg:faster}
	\STATE Compute a  tree $T$ with low stretch with respect to resistances $\rv$.
	\STATE Compute $S(C)$ and $R(C)$ for all fundamental cuts $C$ of $T$.
	\STATE Set $\xv^0(i) = 0$ for all $i \in V$. Set $f(C) = 0$ for all fundamental cuts $C$ of $T$.
	\FOR{$t \gets 1$ \KwTo $\ceil{\frac{K}{l}}$}
	    \STATE Sample $l$ edges $(i_1,j_1), \ldots, (i_l, j_l)$ with replacement from $T$, according to the distribution $P$.
	    \STATE Contract all edges in $T$ that were not sampled in step 5. \\ Let $\wtd{G}$ be the resulting graph and $\wtd{T}$ be the resulting tree.
	    \STATE For each $1 \leq k\leq l$, let ${C}_k$ denote the fundamental cut in $T$ determined by edge $(i_k, j_k)$. Let $\wtd{C}_k$ denote the fundamental cut in $\wtd{T}$ determined by $(i_k, j_k)$.
	    \STATE $\yv(\tilde{v}) \gets 0$ for all $\tilde{v} \in V(\wtd{G})$.
	    \FOR{$k \gets 1$ \KwTo $t$}
	        \STATE Compute $\Delta_k = (S(C_k) - f(C_k))\cdot R(C_k)$. \COMMENT{Requires $f(C_k)$ to be already computed}
	        \STATE $\yv(\tilde{v})  \gets \yv(\tilde{v}) + \Delta_k$ for all $\tilde{v} \in \wtd{C}_k$.
	        \STATE Update values of $f(C_j)$ for all $j\in\{k+1,\ldots, t\}$.
	    \ENDFOR
	    \FORALL{$i \in V$}
	        \STATE Let $\tilde{v}(i)$ be the vertex in $\wtd{G}$ that $i$ was contracted to.
	         \STATE $\xv^t(i) \gets \xv^{t-1}(i) + \yv(\tilde{v}(i))$.
	    \ENDFOR
        \STATE Recompute $f(C)$ for all fundamental cuts $C$ of $T$.
	\ENDFOR
	\STATE Let $\fv^{\ceil{K/l}}$ be the tree-defined flow with respect to $\mathbf{x}^{\ceil{K/l}}$ and $T$.
	\STATE \Return{$\mathbf{x}^{\ceil{K/l}}$, $\fv^{\ceil{K/l}}$}
	\caption{\ALGNAME\ with batching.}
\end{algorithmic}
\end{algorithm}

\begin{theorem}
    The overall running time of \ALGNAME\ with batching is $O(m^{1.5}\log n \log\log n \log\frac{1}{\epsilon})$. This is achieved by choosing $l = \sqrt{m}$. 
\end{theorem}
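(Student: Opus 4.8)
The plan is to prove this as a running-time statement only: correctness is inherited verbatim from the analysis of \ALGNAME\ in Theorem~\ref{thm:dual_kosz}, since Algorithm~\ref{alg:faster} samples cuts from exactly the same distribution $P$ over the fundamental cuts of the same low-stretch tree $T$ and, on any fixed cut sequence, produces the identical output. So I would only need to bound the time. First I would record the easy accounting: the number of iterations is $K = \tau\ln(\tau/\epsilon) = \wtd O(m\log\tfrac1\epsilon)$ with $\tau = \st_T(G,\rv) = O(m\log n\log\log n)$, so the number of blocks is $\lceil K/l\rceil = \wtd O(m/l)$; the preprocessing (computing a low-stretch $T$, then $S(C)$ and $R(C)$ for all fundamental cuts via one tree-path aggregation) costs $O(m\log n\log\log n)$. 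The theorem then reduces to showing that each block runs in $O(l^2 + m)$ time, after which setting $l=\sqrt m$ balances the two terms.

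Next I would itemize the per-block cost. The parts costing $O(m)$ are: (i) sampling the $l$ edges and contracting every non-sampled tree edge to form $\wtd G$ and $\wtd T$ — one pass that also merges parallel edges of $\wtd G$ by summing conductances, leaving $|V(\wtd G)|\le l+1$ and $|E(\wtd G)| = O(\min(l^2,m))$, together with the collapse map $i\mapsto\tilde v(i)$; (ii) propagating the block's potential change back to $G$ by $x^t(i)\gets x^{t-1}(i)+y(\tilde v(i))$, which is $O(n)$; and (iii) recomputing $f(C)$ for every fundamental cut $C$ of $T$ with respect to the new $\xv^t$, again a tree-path aggregation over the $m$ edges, costing $\wtd O(m)$. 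The remaining work is the inner loop over the $l$ sampled cuts: at step $k$ we read the maintained value $f(C_k)$, form $\Delta_k$ in $O(1)$, add $\Delta_k$ to $y$ on the subtree $\wtd C_k$ of $\wtd T$ (which has $\le l+1$ vertices, so $O(l)$), and update $f(C_j)$ for the remaining sampled cuts via $f(C_j)\mathrel{+}=\Delta_k\,H(j,k)$, in $O(l)$ time provided we have precomputed the $(\le l)\times(\le l)$ matrix $H$ whose entry $H(j,k)$ is the change in the net flow across $C_j$ caused by a unit potential bump on $C_k$. A short check handles the bookkeeping subtleties: repeated samples of the same tree edge are processed correctly because future cuts (including future re-occurrences) are all updated, and contraction changes no cross-cut conductance, so $H$ may legitimately be computed inside $\wtd G$. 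Granting $H$, the inner loop is $O(l^2)$ and the whole block is $O(l^2 + m)$.

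The technical heart — and the step I expect to be the main obstacle — is assembling $H$ in $O(l^2 + m)$ time, since the obvious approaches cost $O(l^3)$ or worse. I would identify $H(a,b) = \one_{\wtd C_a}^\top \Lm_{\wtd G}\,\one_{\wtd C_b}$, where $\Lm_{\wtd G}$ is the conductance Laplacian of the contracted graph (equivalently $H = D^\top\Lm_{\wtd G}D$ with $D$ the matrix of fundamental-cut indicator columns), and then use two laminar aggregations over $\wtd T$. First, exploiting $\one_{\wtd C_e} = \mathbf e_{z_e} + \sum_{e'\text{ child of }e}\one_{\wtd C_{e'}}$, compute the vectors $\mathbf p_e := \Lm_{\wtd G}\one_{\wtd C_e}$ for all tree edges $e$ in post-order: each is one column of $\Lm_{\wtd G}$ plus a sum of already-computed child vectors, for a total of $O(\mathrm{nnz}(\Lm_{\wtd G}) + l^2) = O(l^2 + m)$. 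Second, since $H(a,b) = \one_{\wtd C_a}^\top\mathbf p_b = \sum_{v\in\wtd C_a}\mathbf p_b(v)$, column $b$ of $H$ is the vector of subtree-sums of $\mathbf p_b$ over $\wtd T$, obtained by one DFS in $O(l)$ time, so all $\le l$ columns cost $O(l^2)$. I would need to double-check the orientation/sign conventions making these identities hold, but this gives $H$ in $O(l^2 + m)$.

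Finally I would combine: each of the $\lceil K/l\rceil = \wtd O(m/l)$ blocks takes $O(l^2 + m)$ time, for a total of $\wtd O(\frac{m}{l}(l^2+m)) = \wtd O(ml + m^2/l)$, plus $O(m\log n\log\log n)$ preprocessing. Choosing $l = \sqrt m$ makes this $\wtd O(m^{1.5})$; tracking the suppressed factors — one $\log n\log\log n$ from $\tau$ and one $\ln\frac1\epsilon$ from the iteration count of Theorem~\ref{thm:dual_kosz} (with the lower-order $\ln\tau$ absorbed, and the tree-path aggregations implemented with $O(n)$-time LCA preprocessing so they add no further logs) — yields the stated bound $O(m^{1.5}\log n\log\log n\log\frac1\epsilon)$.
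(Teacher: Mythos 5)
Your proposal is correct and follows essentially the same route as the paper's proof: batch the $\wtd{O}(m)$ sampled cuts into blocks of $l$, contract the unsampled tree edges, maintain the cross-cut flows $f(C_k)$ within a block via a precomputed $l\times l$ interaction table so each toggle costs $O(l)$, propagate potentials and recompute all $f(C)$ at the end of each block in $O(m)$ time, and balance the $O(l^2+m)$ per-block cost against $K/l$ blocks by taking $l=\sqrt m$. The only deviation is one sub-step: where the paper obtains the table $H(C_i,C_j)$ in $O(l^2)$ time by invoking Karger's treefix-sum method, you construct it explicitly as $H(a,b)=\one_{\wtd{C}_a}^{\top}\Lm_{\wtd{G}}\,\one_{\wtd{C}_b}$ via two laminar aggregations over the contracted tree in $O(l^2+m)$ per block, which is a valid self-contained substitute that leaves the block cost, and hence the final $O(m^{1.5}\log n\log\log n\log\frac{1}{\epsilon})$ bound, unchanged.
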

\begin{proof}

% Suppose we have just sampled the next $l$ fundamental cuts to be updated. Call the cuts $C_1, \ldots, C_l$. Recall our convention that each cut $C_i$ is associated with a unique (directed) spanning tree edge $(u_i, v_i)$, such that $C_i$ is the vertex set of the subtree of $T$ rooted at $u_i$. (Recall also our convention that the spanning tree edges are directed toward the root.)

%See the below figure for an illustration. 

% \begin{figure}[H]
%     \centering
%     \includegraphics[scale=0.2]{contract.jpg}
%     \caption{Contracting the tree.}
%     \label{fig:my_label}
% \end{figure}

Consider a batch of $l$ updates. Note that the contracted tree $\wtd{T}$ has at most $l+1$ vertices. After contracting, we need to perform $l$ updates. This involves, for each $k \in \{1,2,\ldots, l\}$:
\begin{itemize}
    \item Computing $\Delta_k := (S(C_k) - f(C_k))\cdot R(C_k)$,
    \begin{itemize}
        \item This takes $O(1)$ time assuming $f(C_k)$ has already been computed.  (Recall that the values $S(C_k)$ and $R(C_k)$ are computed at the very beginning of the algorithm, which takes $O(m\log n)$ time.)
    \end{itemize}
    \item Adding $\Delta_k$ to $\yv(\tilde{v})$ for every $\tilde{v} \in \wtd{C}_k$.
    \begin{itemize}
        \item This takes $O(l)$ time, because the contracted tree has size $O(l)$. 
    \end{itemize}
    \item Updating the values $f(C_{k+1}), f(C_{k+2}), \ldots, f(C_{l})$ so they can be used in the later iterations of the inner loop. 
    \begin{itemize}
        \item If each $f(C_j)$ can be updated in $O(1)$ time, this takes $O(l)$ time. 
        \item To update each $f(C_j)$ in $O(1)$ time, we can precompute at the beginning of the block the $H(C_i, C_j)$ table for $i,j \in \{1,2,\ldots, l\}$, like we did before. The difference now is that we only need to compute the table for the cuts that will be updated in the block. There are $l$ such cuts, so the total time to compute the table is $O(l^2)$, again using Karger's method.
    \end{itemize}
\end{itemize}

At the end of each block, we propagate the updates we made on the contracted graph back to the original graph. This involves
\begin{itemize}
    \item Determining the new potential of each node in $G$.
    \begin{itemize}
        \item This takes $O(n)$ time, because one can simply iterate over all the nodes of $G$.
    \end{itemize}
    \item Determining the value of $f(C)$ for each fundamental cut determined by $T$. (Recall that our convention is that the edges of $T$ are directed towards the root, and that the fundamental cuts we consider are the vertex sets of the subtrees of $T$.)
    \begin{itemize}
        \item This can be done in $O(m)$ time using a dynamic program that works from the leaves to the root. First, we compute $f(C)$ at each leaf of the tree. Next, suppose we have are at a non-leaf node $v$, and let $C$ be the set of vertices in the subtree rooted at $v$. Suppose we have already computed $f(C_1), f(C_2), \ldots, f(C_k)$, where $C_1, \ldots, C_k$ are the proper subtrees of $v$. Then we can compute $f(C)$ as follows:
        $$f(C) = \sum_{i=1}^k f(C_i) + \sum_{w: vw \in E} \frac{p(v) - p(w)}{r(v, w)}.$$
        This sum correctly counts the flow leaving $C$. This is because any edge leaving $C$ is counted once. On the other hand, if an edge is between $C_i$ and $C_j$, then it is counted once in the $f(C_i)$ term, and once with the \emph{opposite sign} in the $f(C_j)$ term, so it zeros out. Similarly, if an edge is between $C_i$ and $v$, it also zeros out. 
        
        The running time of this dynamic program is $O(m)$, because the time taken at each node is proportional to its degree, and the sum of all the node degrees is equal to $2m$. 
    \end{itemize}
\end{itemize}

To summarize, there are $K$ iterations, divided into blocks of size $l$. In each block, we pay the following.
\begin{itemize}
    \item \textbf{Start of block:} $O(m)$ time to contract the tree, and $O(l^2)$ time to compute the $H(C, C')$ table for the cuts that will be updated in the block.
    \item \textbf{During the block:} $O(l)$ time per iteration. Since each block consists of $l$ iterations, this is $O(l^2)$ in total.
    \item \textbf{End of block:} $O(m)$ time to propagate the changes from the contracted tree to the original tree.
\end{itemize}
Hence, each block takes $O(m + l^2)$ time. Multiplying by the number of blocks, which is $K/l$, this gives a running time of $O(K(\frac{m}{l} + l))$. Choosing $l = \sqrt{m}$ to minimize this quantity, we get $O(K\sqrt{m})$. 

The final running time is therefore $O(K\sqrt{m})$ plus the preprocessing time. Note that we no longer need to spend $O(n^2)$ time to compute the $H(C, C')$ table at the start of the algorithm; this was replaced by $O(l^2)$ to compute a smaller table at the start of each block. Hence, preprocessing now just consists of finding a low-stretch spanning tree ($O(m\log n \log\log n$)), plus computing the values of $R(C)$ ($O(m\log n)$), and $f(C)$ ($O(n)$).

Thus, the preprocessing time is dominated by the time it takes to run the iterations. So, the total running time is now: $O(K\sqrt{m}) = O(m^{1.5}\log n \log\log n \log \frac{1}{\epsilon})$.

\end{proof}

\subsection{A Still Faster Algorithm via Batching, Sparsification, and Recursion}
\label{sec:recurse}
We now show that we can further speed up the algorithm using sparsification and recursion. The goal is to show that we can we can obtain a running time of the form $O(A^{\frac{1}{\delta}}m^{1+\delta}(\log n)^{\frac{B}{\delta}}(\log \frac{1}{\epsilon})^{\frac{1}{\delta}})$ for any $\delta > 0$, where $A$ and $B$ are constants. 

Consider batching the iterations of the algorithm as follows. Pick a positive integer $d$, and repeat $K$ times:
\begin{itemize}
    \item Sample the next $d$ updates to be performed by the algorithm. These correspond to $d$ edges of the spanning tree $T$.
    \item Let $V_0, V_1, \ldots, V_d$ be the vertex sets that $T$ is partitioned into by the $d$ tree edges.
    \item Add $\Delta(i)$ to every vertex in $V_i$. We will choose the values $\Delta(0), \Delta(1), \ldots, \Delta(d)$ to greedily maximize the increase in the dual bound. 
\end{itemize}
Note that our original algorithm corresponds to the case when $d = 1$.  The lemma below quantifies the increase of the dual objective after one step of the above update.
\begin{lemma}
\label{lem:dual_incr}
Let $(V_0, \ldots, V_d)$ be a partition of $V$. 
Let $\xv \in \R^V$ be a vector of potentials, and let $\Delta = (\Delta(0), \ldots, \Delta(d))$ be any vector in $\R^{d+1}$. Let $\tilde{\xv}$ be obtained from $\xv$ by adding $\Delta(i)$ to the potential of every node in $V_i$. Then, the increase in the dual bound is given by the formula
$$\mathcal{B}(\wtd{\xv}) - \mathcal{B}(\xv) = {\bv}_H^T\Delta - \frac12\Delta^T{\Lm}_H\Delta,$$
where
\begin{itemize}
    \item $H$ is the contracted graph with vertices $V_0, V_1, \ldots, V_d$ and resistances $r(V_k, V_l) = \left(\sum_{ij \in \delta(V_k, V_l)} \frac{1}{r(i,j)}\right)^{-1}$,
    \item ${\Lm}_H$ is the Laplacian matrix of ${H}$, and
    \item ${b}_H(k) = b(V_k) - f(V_k)$ for $k = 0,1,\ldots, d$.
\end{itemize}
In particular, the choice of $\Delta$ that maximizes $\sB(\wtd{\xv}) - \sB(\xv)$ is given by the solution to ${\Lm}_H\Delta = \bv_H$.
\end{lemma}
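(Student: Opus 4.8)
The plan is to expand both sides directly, using the closed form $\sB(\xv) = \bv\trans\xv - \tfrac12\xv\trans\Lm\xv$ which is valid for $p=2$. Set $\yv := \wtd{\xv} - \xv$; by construction $\yv$ is constant on each block, so $\yv = \sum_{k=0}^d \Delta(k)\one_{V_k}$. Substituting $\wtd{\xv} = \xv + \yv$, using symmetry of $\Lm$, and cancelling the terms depending only on $\xv$ gives
$$
\sB(\wtd{\xv}) - \sB(\xv)
= \bv\trans\yv - \xv\trans\Lm\yv - \tfrac12\yv\trans\Lm\yv
= (\bv - \Lm\xv)\trans\yv - \tfrac12\yv\trans\Lm\yv ,
$$
so it remains to recognize the linear term as $\bv_H\trans\Delta$ and the quadratic term as $\Delta\trans\Lm_H\Delta$.

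For the quadratic term I would write $\yv\trans\Lm\yv = \sum_{ij \in E}\frac1{r(i,j)}(y(i)-y(j))^2$. An edge with both endpoints in the same block contributes $0$; an edge $ij \in \delta(V_k,V_l)$ contributes $\frac1{r(i,j)}(\Delta(k)-\Delta(l))^2$. Grouping the sum over the cuts $\delta(V_k,V_l)$ and using $\sum_{ij\in\delta(V_k,V_l)}\frac1{r(i,j)} = \frac1{r(V_k,V_l)}$, this collapses to $\sum_{k<l}\frac{(\Delta(k)-\Delta(l))^2}{r(V_k,V_l)}$, which is exactly the Laplacian quadratic form $\Delta\trans\Lm_H\Delta$ of the contracted graph $H$.

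For the linear term I would split it block by block: $(\bv - \Lm\xv)\trans\yv = \sum_{k=0}^d \Delta(k)\,\one_{V_k}\trans(\bv - \Lm\xv)$. Now $\one_{V_k}\trans\bv = b(V_k)$, and $\one_{V_k}\trans\Lm\xv$ is exactly the potential-defined flow leaving $V_k$: since $(\Lm\xv)(i) = \sum_{j:\,ij\in E}\frac{x(i)-x(j)}{r(i,j)}$ is the net flow out of vertex $i$, summing over $i\in V_k$ makes the contributions of edges internal to $V_k$ cancel in pairs, leaving $f(V_k) = \sum_{i\in V_k,\, j\notin V_k,\, ij\in E}\frac{x(i)-x(j)}{r(i,j)}$. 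Hence $\one_{V_k}\trans(\bv - \Lm\xv) = b(V_k) - f(V_k) = b_H(k)$, and the linear term equals $\sum_k \Delta(k) b_H(k) = \bv_H\trans\Delta$. Combining the two computations yields the claimed identity $\sB(\wtd{\xv}) - \sB(\xv) = \bv_H\trans\Delta - \tfrac12\Delta\trans\Lm_H\Delta$.

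For the last sentence, $\Lm_H \succeq 0$, so $\Delta \mapsto \bv_H\trans\Delta - \tfrac12\Delta\trans\Lm_H\Delta$ is concave and is therefore maximized at any stationary point, i.e. any $\Delta$ solving $\Lm_H\Delta = \bv_H$. Such a $\Delta$ exists because $\bv_H\trans\one = \sum_k \bigl(b(V_k)-f(V_k)\bigr) = \sum_{i\in V} b(i) - 0 = 0$ (the flows $f(V_k)$ sum to zero by the same pairwise cancellation over crossing edges), so $\bv_H$ lies in the range of $\Lm_H$ since $H$ is connected (being a contraction of the spanning tree $T$). I do not expect a genuine obstacle here; the argument is essentially bookkeeping, and the only points needing care are the sign conventions in identifying $\one_{V_k}\trans\Lm\xv$ with $f(V_k)$ and the observation that the reduced supply vector $\bv_H$ is balanced.
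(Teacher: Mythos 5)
Your proposal is correct and follows essentially the same route as the paper's proof: a direct expansion of the quadratic dual objective $\sB(\xv)=\bv\trans\xv-\tfrac12\xv\trans\Lm\xv$, grouping the edge terms by the cuts $\delta(V_k,V_l)$ to recognize the contracted Laplacian quadratic form and identifying the linear term with the residual supplies $b(V_k)-f(V_k)$, followed by the same concavity argument for the maximizer. The only differences are organizational (you work with the shift vector $\yv=\sum_k\Delta(k)\one_{V_k}$ and the identity $\one_{V_k}\trans\Lm\xv=f(V_k)$ instead of the paper's edge-by-edge difference-of-squares factorization) plus the small extra observation that $\bv_H$ is balanced so the system $\Lm_H\Delta=\bv_H$ is solvable, which the paper leaves implicit.
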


\begin{proof}
We write the increase in the dual potential bound. Recall that $f(V_k)$ is the amount of flow leaving $V_k$ in the flow $f(i,j) = \frac{x(i)-x(j)}{r(i,j)}$. We let $f(V_k, V_l)$ be the amount of flow going from $V_k$ to $V_l$.
\begin{align*}
    &2\left(\sB(\wtd{\xv}) - \sB(\xv)\right) \\
    =\; &(2\bv^T\wtd{\xv} - \wtd{\xv}^T\Lm\wtd{\xv}) - (2\bv^T\xv- \xv^T\Lm\xv) \\
    =\; &2\sum_k b(V_k)\Delta(k) + \sum_{(i,j) \in \vec{E}} \frac{1}{r(i,j)}\left[(x(i)-x(j))^2 - (\tilde{x}(i) - \tilde{x}(j))^2\right] \\
    =\; & 2\sum_k b(V_k)\Delta(k) + \sum_{(i,j) \in \vec{E}} \frac{1}{r(i,j)}\left[(x(i)-x(j) + \tilde{x}(i) - \tilde{x}(j))(x(i)-x(j) - \tilde{x}(i) + \tilde{x}(j))\right] \\
    =\; & 2\sum_k b(V_k)\Delta(k) + \sum_{k<l}\sum_{(i,j)\in \delta(V_k,V_l)} \frac{1}{r(i,j)}\left[(2x(i)-2x(j) +\Delta(k)  -\Delta(l))(\Delta(l) - \Delta(k))\right] \\
    =\; & 2\sum_k b(V_k)\Delta(k) + 2\sum_{k<l}(\Delta(l) - \Delta(k))\sum_{(i,j)\in \delta(V_k,V_l)} \frac{1}{r(i,j)}(x(i)-x(j)) -\sum_{k<l}(\Delta(k) - \Delta(l))^2\sum_{(i,j) \in \delta(V_k, V_l)} \frac{1}{r(i,j)}  \\
    =\; & 2\sum_k b(V_k)\Delta(k) + 2\sum_{k<l}(\Delta(l) - \Delta(k))f(V_k, V_l) -\Delta^T\Lm_{H}\Delta  \\
    =\; & 2\sum_k b(V_k)\Delta(k) - 2\sum_{k}\Delta(k) f(V_k) -\Delta^T\Lm_{H}\Delta  \\
    =\; & 2\sum_k (b(V_k)-f(V_k))\Delta(k) -\Delta^T\Lm_{H}\Delta \\
    =\; & 2{\bv}_H^T\Delta -\Delta^T\Lm_{H}\Delta
\end{align*}
Note that this is a concave function of $\Delta$, because $\Lm_{H}$ is positive semidefinite. Therefore, maximizing this expression is equivalent to setting its gradient to 0. Taking its gradient and setting to 0 yields $\Lm_{H}\Delta = \bv_H$, as claimed. 

\end{proof}

\begin{remark}
Another interpretation of the $\Delta$ that maximizes $\sB(\wtd{\xv}) - \sB(\xv)$ in the Lemma above is as follows: $(\Delta(0), \ldots, \Delta(d))$ are the values such that if one adds $\Delta(i)$ to the potential of every vertex in $V_i$, the resulting potential-induced flow satisfies the flow constraints $f(V_k) = b(V_k)$ for all $k = 0, \ldots, d$. 
\end{remark}

% \subsection{Sparsify}
% So far, we have reduced the problem of solving $Lx = b$ to the problem of solving $K/d$ linear systems of the form $\wtd{\Lm}\Delta = \tilde{b}$, where $\wtd{\Lm}$ is the Laplacian matrix of a contracted graph $\tilde{G}$, which has $d+1$ vertices.

% Instead of solving the linear system $\wtd{\Lm}\Delta = \tilde{b}$, we will do the following:
% \begin{itemize}
%     \item Compute a sparsified version $L_1$ of $\wtd{\Lm}$,
%     \item Solve the equation $L_1\Delta_1 = \tilde{b}$,
%     \item Update $x \gets x + \Delta_1$. 
% \end{itemize}
% The following Lemma shows that as long as $L_1$ is a spectral approximation of $\wtd{\Lm}$, the increase in the dual bound when we add $\Delta_1$ to $x$ will approximate the increase when we add $\Delta$ to $x$.
% \begin{lemma}
% Suppose $x$
% \end{lemma}
% The main disadvantage of this approach is that we have to compute a sparsifier, which is a big black box that detracts from the simplicity of our original algorithm.

\subsection{The Sparsify and Recurse Algorithm}
Next we give the algorithm with sparsification and recursion in more detail. Observe that $d$ cut-toggling updates effectively break the spanning tree into $d+1$ components.  After contracting the components to get a graph $H$ with $d+1$ vertices, \Cref{lem:dual_incr} shows that solving the Laplacian system $\Lm_H \Delta = \bv_H$ gives the update that maximizes the increase in $\sB(\wtd{\xv}) - \sB(\xv)$  among all updates that increment the potential of all vertices in $V_i$ by the same amount.  In particular, the progress made by this update step is is at least as large as the progress made by the sequence of $d$ updates performed by the straightforward unbatched algorithm. 

A natural approach is to solve $\Lm_H \Delta = \bv$ recursively.
However, this by itself does not give an improved running time.
Instead, we will first spectrally sparsify $H$ to get a sparsified approximation $\wtd{\Lm}_H$ of $\Lm_H$, satisfying $(1-\gamma)\Lm_H \preceq \wtd{\Lm}_H \preceq (1+\gamma) \Lm_H$ for an appropriate constant $\gamma \in (0,1)$.
Such a matrix $\wtd{\Lm}_H$ is known as a $\gamma$-spectral sparsifier of $\Lm_H$.
We then call the algorithm recursively {on $H$} to solve $\wtd{\Lm}_H \wtd{\Delta} = \bv_H$.
Thus, a main task of the analysis is to bound the error incurred by solving the sparsified system instead of the exact one.
For the spectral sparsification, we use the original Spielman-Teng algorithm~\cite{ST11:journal} because it does not
require calling Laplacian solvers as a subroutine (e.g. \cite{BatsonSST13}). 
A variant of it with better failure probabilities is given in
Theorem 6.1~\cite{PS13}: one can find a sparsifier with $O(n\log^c n / \gamma^2)$ nonzero entries in $O(m\log^{c_1}n)$ time, with probability at least $1 - 1/n^2$.
Here, $c$ and $c_1$ are constants, and $m,n$ are the number of edges and vertices, respectively, in the graph before sparsifying. 
% For any $p \in (0, \frac12)$, the Spielman-Teng algorithm outputs a $\gamma$-spectral sparsifier $\wtd{\Lm}_H$ with $O(n\log^c(\frac{n}{p}) / \gamma^2)$ nonzero entries with probability at least $1 - p$, and its expected running time is  $O(m\log^{c_1}(\frac{n}{p}))$. Here, $c$ and $c_1$ are constants, and $m,n$ are the number of edges and vertices, respectively, in the graph before sparsifying. 
Pseudocode for \ALGNAME\ with batching, sparsification, and recursion is given in Algorithm \ref{alg:fastest}. 

\begin{algorithm}[H]
\caption{\ALGNAME\ with batching, sparsification, and recursion}%$(T,\q_1,\q_2)$}
% {\bf Input:} $G = (V, E)$ and resistances $\rv$, supply vector $\bv$, $n_0 \in \Z_+$, $\gamma \in (0,1)$, $\epsilon' \in (0,1)$.
\begin{algorithmic}[1]
\label{alg:fastest}
%\REQUIRE Time horizon $T$, user type distributions 
\STATE If $\abs{V} \leq n_0$, solve $\Lm_G\xv = \bv$ using Gaussian elimination and \textbf{return} $\xv$.
\\\COMMENT{$\Lm_G$ is the Laplacian matrix of $G$.}
\STATE Compute a low-stretch spanning tree $T$ of $G$. 
\STATE Initialize $\xv^0 = 0$.
\FORALL{$t=0$ to $K$}
\STATE Generate the next $d$ updates. These correspond to $d$ tree edges $e_1^t, \ldots, e_d^t \in T$.
\STATE Let $V^t_0, \ldots, V^t_d$ be the vertex sets of the connected components of $T - \{e_1^t, \ldots, e_d^t\}$. 
\STATE Contract $G$ to $H^t$ with $d+1$ vertices: Each $V_i^t$ is a vertex in $H^t$, and resistances in $H^t$ are 
$$r_{H^t}(V_k^t, V_l^t) = \left(\sum_{ij \in \delta(V_k, V_l)} \frac{1}{r(i,j)}\right)^{-1}.$$
        % \begin{itemize}
        %     \item The vertices of $H^t$ are $V^t_0, \ldots, V^t_d$.
        %     \item In $H^t$, the resistances are $r_{H^t}(V_k, V_l) = \left(\sum_{ij \in \delta(V_k, V_l)} \frac{1}{r(i,j)}\right)^{-1}$.
        %     \item Contracting $H^t$ and computing the new resistances takes $O(m)$ time.
        % \end{itemize}
\STATE Compute $\wtd{\Lm}_{H^t}$, a $\gamma$-spectral sparsifier of $\Lm_{H^t}$, where $\gamma \in (0,1)$ will be a parameter that we will determine later.
        % \begin{itemize}
        %     \item In other words, $\wtd{\Lm}_{H^t}$ satisfies $\Lm_{H^t}(1-\gamma) \preceq \wtd{\Lm}_{H^t} \preceq \Lm_{H^t}(1+\gamma)$.
        %     \item We will use the algorithm in Theorem 6.1 of this paper by Peng and Spielman: \cite{PS13}. It states that we find a sparsifier with $O(n\log^c n / \gamma^2)$ nonzero entries in $O(m\log^{c_1}n)$ time, with probability at least $1 - 1/n^2$. Here, $c$ and $c_1$ are constants, and $m,n$ are the number of edges and vertices, respectively, in the graph before sparsifying. 
        %     \item We use this algorithm because it works for weighted graphs and because it does not require calling a Laplacian solver as a subroutine. 
        % \end{itemize}
\STATE  Let $\bv^t_G = \bv - \Lm_G\xv^t$.
\STATE Let $\bv^t_{H^t} \in \R^{V(H^t)}$ be defined as follows:
    $b^t_{H^t}(V_i^t) = \sum_{u \in V_i^t} b^t_G(u) \quad \text{for all $i=0,1,\ldots,d$}.$
\STATE Call the algorithm recursively to solve the Laplacian system $\wtd{\Lm}_{H^t}\wtd{\Delta}^t = \bv^t_{H^t}$ for $\wtd{\Delta}^t$. This will \\ 
return an approximate solution $\wtd\Delta^t_{\epsilon'}$ that satisfies $\norm{\wtd\Delta^t_{\epsilon'} - \wtd\Delta^t}_{\wtd{\Lm}_{H^t}}^2 \leq \epsilon' \norm{\wtd{\Delta}^t}^2_{\wtd{\Lm}_{H^t}}$. Here, $\epsilon'$ is the \\
error parameter that we will input to the recursive call, to be determined later.
\STATE Update $\xv^t$ using $\tilde{\Delta}^t_{\epsilon'}$ to get the next iterate $\xv^{t+1}$. For every vertex $u \in V$, 
$$x^{t+1}(u) \gets x^t(u) + \tilde{\Delta}^t_{\epsilon'}(V^t_i),$$ where $V^t_i$ is the set in $V_0^t, \ldots, V_d^t$ such that $u \in V^t_i$. In other words, we update $\xv^t$ to $\xv^{t+1}$ by \\ 
adding $\tilde{\Delta}^t_{\epsilon'}(V^t_i)$ to the potential of every vertex in $V^t_i$. 
\STATE \textbf{If} $\sB(\xv^{t+1}) \leq \sB(\xv^t)$, revert $\xv^{t+1} \gets \xv^t$. 
\ENDFOR
\STATE Return $\xv^K$ and the corresponding tree-defined flow $\fv^K$.
\end{algorithmic}
\end{algorithm}

The base case of the recursive algorithm is when $\abs{V} \leq n_0$, where $n_0$ is a constant that the algorithm can choose. For the base case, we simply use Gaussian elimination to solve $\Lm_G\xv =\bv$, which takes $O(1)$ time since $n_0$ is a constant. For every $t$, contracting $G$ down to $H^t$ and computing the new resistances takes $O(m)$ time.

\subsection{Analysis of the Sparsify and Recurse Algorithm}
We now analyze Algorithm \ref{alg:fastest}. We first do a convergence analysis, then analyze the running time.
\subsubsection{Error Analysis}
The lemma below bounds the expected rate of convergence of $\xv^t$ to $\xv^*$.
\begin{lemma}
\label{lem:err}
For all $t \geq 0$, we have $\E\norm{\xv^* - \xv^t}_{\Lm_G}^2 \leq \left(1 - \beta + \beta e^{-\frac{d}{\tau}}\right)^t \norm{\xv^*}_{\Lm_G}^2$. Here,
\begin{itemize}
    \item $\tau = O(m\log n \log\log n)$ is the stretch of the spanning tree,
    \item $d$ is the number of updates in each batch,
    \item $\beta = \left(1-\frac{1}{n_0^2}\right)\left(1-\left(4\epsilon'\cdot\frac{1+\gamma}{1-\gamma}\cdot\left(1+\frac{\gamma^2}{(1-\gamma)^2}\right) + \frac{2\gamma^2}{(1-\gamma)^2}\right)\right).$
\end{itemize}
In particular, if we choose $n_0 = 10$, $\gamma = \frac{1}{100}$, and $\epsilon' = \frac{1}{100}$, then $\beta \geq \frac{4}{5}$, so that
$$\E\norm{\xv^* - \xv^t}_{\Lm_G}^2 \leq \left(\frac{1}{5} + \frac{4}{5} e^{-\frac{d}{\tau}}\right)^t \norm{\xv^*}_{\Lm_G}^2.$$
\end{lemma}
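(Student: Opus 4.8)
The plan is to track the potential difference $\norm{\xv^* - \xv^t}_{\Lm_G}^2$ across one batch of $d$ updates, and show that in expectation it shrinks by a factor of $1 - \beta + \beta e^{-d/\tau}$. The key identity is that for any potentials $\xv$, the dual gap $\sB(\xv^*) - \sB(\xv)$ equals $\frac12\norm{\xv^* - \xv}_{\Lm_G}^2$; this is because $\sB(\xv) = \bv^T\xv - \frac12\xv^T\Lm_G\xv$ is a concave quadratic maximized at $\xv^*$ with $\Lm_G\xv^* = \bv$, so $\sB(\xv^*) - \sB(\xv) = \frac12(\xv^*-\xv)^T\Lm_G(\xv^*-\xv)$. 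Thus it suffices to show the dual gap contracts by $1 - \beta + \beta e^{-d/\tau}$ per batch in expectation. I would fix a batch starting at iterate $\xv^t$ and condition on the randomness of the previous batches.

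\textbf{Step 1: progress of the ideal (exact, unsparsified) update.} Using Lemma~\ref{lem:dual_incr}, the exact optimal choice $\Delta^*$ solving $\Lm_{H^t}\Delta = \bv_{H^t}$ increases $\sB$ by $\frac12\bv_{H^t}^T\Lm_{H^t}^{-1}\bv_{H^t}$, and this is at least the progress of $d$ consecutive single-cut updates of \ALGNAME. By Corollary~\ref{cor:final_gap}/Lemma~\ref{lem:gap_decreases} applied $d$ times (and averaging over which $d$ edges were sampled — here the independence of the sampled edges across a batch is what lets us simply iterate the one-step bound), the expected residual gap after $d$ ideal single-cut steps is at most $(1-\frac1\tau)^d \le e^{-d/\tau}$ times the starting gap. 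So in expectation, the exact batched update achieves residual gap $\le e^{-d/\tau}(\sB(\xv^*)-\sB(\xv^t))$.

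\textbf{Step 2: charging the error from sparsification and from the inexact recursive solve.} This is the main obstacle. We actually use $\wtd\Delta^t_{\epsilon'}$, the approximate solution to the \emph{sparsified} system $\wtd\Lm_{H^t}\wtd\Delta = \bv_{H^t}$, rather than $\Delta^*$. I would bound the shortfall in two layers. First, relate the optimizer $\wtd\Delta$ of the sparsified system to $\Delta^*$: since $(1-\gamma)\Lm_{H^t}\preceq\wtd\Lm_{H^t}\preceq(1+\gamma)\Lm_{H^t}$, the value $\bv_{H^t}^T\wtd\Delta$ when plugged into the \emph{true} objective ${\bv}_{H^t}^T\Delta - \frac12\Delta^T\Lm_{H^t}\Delta$ loses at most a factor controlled by $\gamma^2/(1-\gamma)^2$ relative to the optimum — this is a standard "solving a spectrally close system still makes progress in the original energy" computation (expand $\Delta = \wtd\Delta$ against $\Lm_{H^t}^{-1}\bv_{H^t}$ and use the spectral bounds). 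Second, account for the recursive call returning $\wtd\Delta^t_{\epsilon'}$ with $\norm{\wtd\Delta^t_{\epsilon'} - \wtd\Delta^t}_{\wtd\Lm_{H^t}}^2 \le \epsilon'\norm{\wtd\Delta^t}^2_{\wtd\Lm_{H^t}}$; converting this $\wtd\Lm_{H^t}$-norm error into a loss in the true objective again costs a factor governed by $\epsilon'$ and $\frac{1+\gamma}{1-\gamma}$. Putting these together gives that the actual update makes at least $\big(1 - (4\epsilon'\cdot\frac{1+\gamma}{1-\gamma}(1+\frac{\gamma^2}{(1-\gamma)^2}) + \frac{2\gamma^2}{(1-\gamma)^2})\big)$ times the progress of the exact update — which is exactly the second factor of $\beta$. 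The factor $(1-\frac{1}{n_0^2})$ is the probability that the Spielman--Teng sparsifier (via Theorem 6.1 of \cite{PS13}) succeeds at this level; on failure we bound progress by $0$, and the line 13 safeguard (revert if $\sB$ did not increase) ensures we never go backwards, so the gap is monotone and the expectation bound is clean.

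\textbf{Step 3: assemble the recursion.} Combining, in one batch the expected residual dual gap is at most $\big(1 - \beta + \beta e^{-d/\tau}\big)$ times the previous one, where $\beta = (1-\frac{1}{n_0^2})(1 - (\cdots))$ collects the success probability and the fractional-progress loss. Iterating over $t$ batches and translating back from dual gap to $\Lm_G$-norm via the identity in the first paragraph gives $\E\norm{\xv^*-\xv^t}_{\Lm_G}^2 \le (1-\beta+\beta e^{-d/\tau})^t\norm{\xv^*}_{\Lm_G}^2$, using $\xv^0 = 0$ so that the initial gap is $\frac12\norm{\xv^*}_{\Lm_G}^2$. Plugging $n_0 = 10$, $\gamma = \epsilon' = \frac{1}{100}$ makes the parenthetical loss term small enough that $\beta \ge \frac45$, giving the stated corollary. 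I expect Step 2 — carefully propagating the two error sources (spectral sparsification error $\gamma$, recursive solve error $\epsilon'$) through to a multiplicative loss in the \emph{original} energy progress, and getting constants that match the displayed $\beta$ — to be the delicate part; the rest is bookkeeping on top of Lemmas~\ref{lem:dual_incr} and \ref{lem:gap_decreases}.
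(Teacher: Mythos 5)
Your proposal is correct and follows essentially the same route as the paper: reduce the $\Lm_G$-norm statement to a per-batch contraction of the dual gap via the identity $\sB(\xv^*)-\sB(\xv)=\frac12\norm{\xv^*-\xv}_{\Lm_G}^2$, get the $e^{-d/\tau}$ factor by comparing the exact batched update (Lemma~\ref{lem:dual_incr}) against $d$ sequential single-cut steps (Lemma~\ref{lem:gap_decreases}), and then charge the sparsification error $h(\gamma)=\frac{\gamma^2}{(1-\gamma)^2}$ and recursive-solve error $\epsilon'$ to a multiplicative loss $\alpha$ in the true progress, together with the $(1-\frac{1}{n_0^2})$ sparsifier success probability and the revert safeguard — exactly the content of the paper's Lemmas~\ref{lem:errs} and \ref{lem:gap_reduce}. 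The constants you quote for $\alpha$ and $\beta$ match the paper's, so the argument is the same up to bookkeeping.
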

\begin{proof}
Define the random variable $D^t := \sB(\xv^*) - \sB(\xv^t)$.
We will show in Lemma \ref{lem:gap_reduce} that for every possible realization $\xv^t$, we have 
$$\E\left[D^{t+1} \mid \mathbf{x}^t\right] \leq \left(1 -\beta+\beta e^{-d/\tau}\right)\E\left[D^t \mid \mathbf{x}^t\right].$$
This implies that $\E[D^{t+1}] \leq \left(1 -\beta+\beta e^{-d/\tau}\right)\E\left[D^t\right]$ unconditionally. 

It then follows that
$$\E\left[D^t\right] \leq \left(1 -\beta+\beta e^{-d/\tau}\right)^t\E\left[D^0\right] 
=\left(1 -\beta+\beta e^{-d/\tau}\right)^t\left(\sB(\mathbf{x}^*) - \sB(\mathbf{x}^0)\right) 
= \left(1 -\beta+\beta e^{-d/\tau}\right)^t\sB(\mathbf{x}^*).$$
Thus, 
$$\sB(\mathbf{x}^*) - \E[\sB(\mathbf{x}^t)] \leq  \left(1 -\beta+\beta e^{-d/\tau}\right)^t\sB(\mathbf{x}^*).$$
\end{proof}

As in the original analysis of \ALGNAME, we will study the duality gap and analyze its decrease at each step of the algorithm. Consider some iteration $t$ of the algorithm. Recall that $\xv^t$ is the iterate at the start of iteration $t$. For every possible sequence of $e_1^t, \ldots, e_d^t$ (the trees edges chosen in iteration $t$), define the following:
\begin{itemize}
    \item Let $\hat{\xv}^{t+1}$ be the vector obtained from $\xv^t$ by adding ${\Delta}^t(V^t_i)$ to every vertex in $V^t_i$, where  ${\Delta}^t := \Lm_{H^t}^\dag \bv^t_{H^t}$. 
    \item Let $\bar{\xv}^{t+1}$ be obtained from $\xv^t$ by applying the updates for the sequence of tree edges $e_1^t, \ldots, e_d^t$, \emph{one by one}. (i.e. Exactly as in the original, unbatched version of \ALGNAME\ described in \Cref{sec:alg}.)
\end{itemize}
\begin{lemma}
\label{lem:errs}
Fix any choice of $e^t_1, \ldots, e^t_d$, and assume that $\wtd{\Lm}_{H^t}$ is a $\gamma$-approximate sparsifier of $\Lm_{H^t}$. Then
$$\sB(\xv^{t+1}) - \sB(\xv^t) \geq  (1-\alpha)\left(\sB(\hat{\xv}^{t+1}) - \sB(\xv^t)\right)$$
where $\alpha = 4\epsilon'\cdot\frac{1+\gamma}{1-\gamma}\cdot\left(1+\frac{\gamma^2}{(1-\gamma)^2}\right) + \frac{2\gamma^2}{(1-\gamma)^2}$. 

If we further assume that $\gamma \in (0,\frac12)$, we can simplify to get
$$\sB(\xv^{t+1}) - \sB(\xv^t) \geq  (1-12\epsilon'-8\gamma^2-48\epsilon'\gamma^2)\left(\sB(\hat{\xv}^{t+1}) - \sB(\xv^t)\right).$$
\end{lemma}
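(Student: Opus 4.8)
The plan is to compare three quantities: the dual increase from the \emph{exact} batched update $\hat{\xv}^{t+1}$ (which solves $\Lm_{H^t}\Delta = \bv^t_{H^t}$ exactly), the dual increase from the update actually used, $\xv^{t+1}$ (which solves the \emph{sparsified} system $\wtd{\Lm}_{H^t}\wtd\Delta = \bv^t_{H^t}$ approximately, returning $\wtd\Delta^t_{\epsilon'}$), and as an intermediate stepping stone the exact solution $\wtd\Delta^t := \wtd\Lm_{H^t}^\dag \bv^t_{H^t}$ of the sparsified system. By \Cref{lem:dual_incr}, for any $\Delta$ the dual increase is $g(\Delta) := \bv_H^T\Delta - \frac12\Delta^T\Lm_H\Delta$ with $\Lm_H = \Lm_{H^t}$ and $\bv_H = \bv^t_{H^t}$, and the maximizer is $\hat\Delta = \Delta^t$ with optimal value $g(\hat\Delta) = \frac12 \bv_H^T\Lm_H^\dag \bv_H = \frac12\norm{\hat\Delta}^2_{\Lm_H}$. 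So I need a lower bound $g(\wtd\Delta^t_{\epsilon'}) \ge (1-\alpha)\,g(\hat\Delta)$.

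First I would handle the \emph{sparsification error}: compare $g(\wtd\Delta)$ to $g(\hat\Delta)$, where $\wtd\Delta = \wtd\Lm_{H^t}^\dag\bv_H$. Writing $g(\wtd\Delta) = \bv_H^T\wtd\Delta - \frac12\wtd\Delta^T\Lm_H\wtd\Delta$ and using $\bv_H = \wtd\Lm_H\wtd\Delta$, this is $\wtd\Delta^T\wtd\Lm_H\wtd\Delta - \frac12\wtd\Delta^T\Lm_H\wtd\Delta$. Using $(1-\gamma)\Lm_H \preceq \wtd\Lm_H \preceq (1+\gamma)\Lm_H$ one bounds this below by a constant (depending on $\gamma$) times $\norm{\wtd\Delta}^2_{\Lm_H}$, and separately relates $\norm{\wtd\Delta}^2_{\Lm_H}$ to $\norm{\hat\Delta}^2_{\Lm_H} = 2g(\hat\Delta)$; the cleanest route is to note $\wtd\Delta = \wtd\Lm_H^\dag\Lm_H\hat\Delta$ and push the spectral inequalities through, which yields a multiplicative loss of the form $1 - O(\gamma^2/(1-\gamma)^2)$. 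Second, I would handle the \emph{approximate-solve error}: since the recursive call guarantees $\norm{\wtd\Delta^t_{\epsilon'} - \wtd\Delta^t}^2_{\wtd\Lm_{H^t}} \le \epsilon'\norm{\wtd\Delta^t}^2_{\wtd\Lm_{H^t}}$, I expand $g(\wtd\Delta^t_{\epsilon'})$ around $\wtd\Delta^t$, control the cross and quadratic terms by Cauchy--Schwarz in the $\wtd\Lm_{H^t}$-norm together with the spectral equivalence to convert to the $\Lm_H$-norm, and collect the loss as a factor $1 - O(\epsilon'\cdot\frac{1+\gamma}{1-\gamma}(1+\frac{\gamma^2}{(1-\gamma)^2}))$. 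Multiplying the two factors and expanding gives exactly $\alpha = 4\epsilon'\cdot\frac{1+\gamma}{1-\gamma}\cdot(1+\frac{\gamma^2}{(1-\gamma)^2}) + \frac{2\gamma^2}{(1-\gamma)^2}$, and the simplification for $\gamma \in (0,\tfrac12)$ follows by crudely bounding $\frac{1+\gamma}{1-\gamma} \le 3$ and $\frac{1}{1-\gamma} \le 2$.

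The main obstacle I anticipate is bookkeeping the norm conversions cleanly: the approximation guarantee from the recursion is stated in the $\wtd\Lm_{H^t}$-norm, the objective $g$ is naturally in the $\Lm_H$-norm, and $\wtd\Delta^t$ itself is defined via $\wtd\Lm_H^\dag$, so one must be careful that the pseudoinverses act on the correct subspace (both Laplacians share the same kernel, the all-ones vector on each connected component of $H^t$, and $\bv^t_{H^t}$ lies in the orthogonal complement since $\one^T\bv^t_{H^t} = 0$, which makes all the $\Lm^\dag$ manipulations legitimate). Once the norm equivalences are set up correctly, each individual bound is a short computation; the only real work is making sure the two sources of error (sparsification, inexact solve) are composed in the right order so the constants come out as claimed. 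Finally, the step in Algorithm \ref{alg:fastest} that reverts $\xv^{t+1} \gets \xv^t$ whenever $\sB(\xv^{t+1}) \le \sB(\xv^t)$ only helps, since the claimed inequality's right-hand side $(1-\alpha)(\sB(\hat\xv^{t+1}) - \sB(\xv^t))$ is nonnegative, so the bound holds for the reverted iterate as well.
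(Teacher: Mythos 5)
Your setup matches the paper's: by \Cref{lem:dual_incr} the dual increase of an update $\Delta'$ is $g(\Delta') = \bv_H^T\Delta' - \frac12(\Delta')^T\Lm_H\Delta'$, the exact batched update $\Delta = \Lm_H^\dag\bv_H$ maximizes it with $g(\Delta) = \frac12\norm{\Delta}_{\Lm_H}^2$, and the sparsification error between the two exact solutions is controlled by \Cref{prop:spec_approx}. The gap is in how you compose the two error sources. You propose to multiply two objective-level loss factors, one of order $\gamma^2/(1-\gamma)^2$ for $g(\wtd{\Delta})$ versus $g(\Delta)$ and one of order $\epsilon'\cdot\frac{1+\gamma}{1-\gamma}$ for $g(\wtd{\Delta}_{\epsilon'})$ versus $g(\wtd{\Delta})$, and assert that the product ``gives exactly'' the stated $\alpha$. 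The second factor is not available from the recursion guarantee alone: $\wtd{\Delta}$ is a critical point of the \emph{sparsified} objective, not of $g$, so expanding $g(\wtd{\Delta}_{\epsilon'})$ around $\wtd{\Delta}$ leaves a nonvanishing linear term $\bigl((\wtd{\Lm}_{H}-\Lm_{H})\wtd{\Delta}\bigr)^T(\wtd{\Delta}_{\epsilon'}-\wtd{\Delta})$ (equivalently, a cross term $\langle \wtd{\Delta}_{\epsilon'}-\wtd{\Delta},\,\wtd{\Delta}-\Delta\rangle_{\Lm_H}$), and Cauchy--Schwarz only bounds it by a quantity of order $\gamma\sqrt{\epsilon'}\,\norm{\Delta}_{\Lm_H}^2$, which is neither $O(\epsilon')$ nor $O(\gamma^2)$; moreover, turning this absolute loss into a multiplicative loss in $g(\wtd{\Delta})$ requires $g(\wtd{\Delta}) \geq (\tfrac12-\gamma)\norm{\wtd{\Delta}}_{\Lm_H}^2$, which costs an extra factor $(\tfrac12-\gamma)^{-1}$ and breaks down for $\gamma \geq \tfrac12$. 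You could salvage a bound of the same general shape via AM--GM ($\gamma\sqrt{\epsilon'} \leq \frac12(\gamma^2+\epsilon')$), but not the specific $\alpha$ in the statement, so as written the argument does not prove the lemma.

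The paper avoids pairwise objective comparisons entirely: it bounds the single quantity $\norm{\wtd{\Delta}_{\epsilon'}-\Delta}_{\Lm_H}^2$ by $2\norm{\wtd{\Delta}_{\epsilon'}-\wtd{\Delta}}_{\Lm_H}^2 + 2\norm{\wtd{\Delta}-\Delta}_{\Lm_H}^2$, controls the first term by the recursion guarantee after converting norms (using $(1-\gamma)\Lm_H \preceq \wtd{\Lm}_H \preceq (1+\gamma)\Lm_H$ and then $\norm{\wtd{\Delta}}_{\Lm_H}^2 \leq 2\bigl(1+\tfrac{\gamma^2}{(1-\gamma)^2}\bigr)\norm{\Delta}_{\Lm_H}^2$) and the second term by \Cref{prop:spec_approx}, obtaining $\norm{\wtd{\Delta}_{\epsilon'}-\Delta}_{\Lm_H}^2 \leq \alpha\norm{\Delta}_{\Lm_H}^2$; only then does it convert to the objective, by expanding this inequality and substituting $\Lm_H\Delta = \bv_H$ to get $2\bv_H^T\wtd{\Delta}_{\epsilon'} - \wtd{\Delta}_{\epsilon'}^T\Lm_H\wtd{\Delta}_{\epsilon'} \geq (1-\alpha)\,\Delta^T\Lm_H\Delta$, which via \Cref{lem:dual_incr} is exactly $\sB(\xv^{t+1})-\sB(\xv^t) \geq (1-\alpha)\bigl(\sB(\hat{\xv}^{t+1})-\sB(\xv^t)\bigr)$. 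The factors $4$ and $2$ in $\alpha$ are precisely the two applications of $\norm{a+b}^2 \leq 2\norm{a}^2+2\norm{b}^2$, and the argument works for all $\gamma \in (0,1)$. Your step 1 is already the second half of this additive-in-norm composition; replacing your multiplicative step 2 with it closes the gap. Your remarks about the shared kernel, $\bv^t_{H^t}$ summing to zero, and the revert step only helping are all correct.
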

\begin{proof}
To simplify notation, in this proof we will use 
\begin{itemize}
    \item $\bv_H$ to denote ${\bv^t_{H^t}}$, 
    \item $\tilde{\Delta}_{\epsilon'}$ to denote $\tilde{\Delta}_{\epsilon'}^t$,
    \item $\tilde{\Delta}$ to denote $\tilde{\Delta}^t$
    \item $\Delta$ to denote $\Delta^t$,
    \item $H$ to denote $H^t$,
\end{itemize}

Later in this proof, we will show that 
\begin{equation}
\label{eq:delta_approx}
\norm{\tilde{\Delta}_{\epsilon'} - \Delta}_{\Lm_H}^2 \leq \alpha \norm{\Delta}_{\Lm_H}^2,
\end{equation}
for a constant $\alpha$ that depends on $\epsilon'$ and $\gamma$. Assuming (\ref{eq:delta_approx}) holds, by the definition of the matrix norm it follows that
$$\left(\tilde{\Delta}_{\epsilon'} - \Delta\right)^T\Lm_H\left(\tilde{\Delta}_{\epsilon'} - \Delta\right) \leq \alpha \Delta^T\Lm_H\Delta.$$
Expanding the left-hand side and rearranging, we get
$$2\tilde{\Delta}_{\epsilon'}^T\Lm_H \Delta - \tilde{\Delta}_{\epsilon'}^T\Lm_H \tilde{\Delta}_{\epsilon'} \geq (1 - \alpha) \Delta^T\Lm_H\Delta.$$
Using $\Lm_H \Delta = \bv_H$, this becomes
$$2\tilde{\Delta}_{\epsilon'}^T\bv_H - \tilde{\Delta}_{\epsilon'}^T \Lm_H\tilde{\Delta}_{\epsilon'} \geq (1 - \alpha) \Delta^T\Lm_H\Delta.$$
Recall that $\xv^{t+1}$ is obtained from $\xv^t$ by adding $\tilde{\Delta}_{\epsilon'}(V_i^t)$ to every vertex in $V_i^t$. 
Using Lemma \ref{lem:dual_incr}
with $\Delta(i) = \wtd{\Delta}_{\epsilon'}(V_i^t)$, $\xv = \xv^t$,  $\wtd{\xv} = \xv^{t+1}$, $\wtd{\Lm} = \Lm_H$, and $\wtd{\bv} = \bv_H$ it follows that the left-hand side is equal to $\sB(\xv^{t+1}) - \sB(\xv^t)$. 
On the other hand, $\sB(\hat{\xv}^{t+1}) - \sB(\xv^t) = 2\bv_H^T\Delta - \Delta^T\Lm_H\Delta = \Delta^T\Lm_H\Delta$. (Since $\Lm_H \Delta = \bv_H$.)
Thus, the right-hand side is equal to $(1-\alpha)\left(\sB(\hat{\xv}^{t+1}) - \sB(\xv^{t})\right)$. Thus we have
$$\sB(\xv^{t+1}) - \sB(\xv^t) \geq  (1 - \alpha)\left(\sB(\hat{\xv}^{t+1}) - \sB(\xv^t)\right),$$
as claimed. 

It remains to prove (\ref{eq:delta_approx}). To prove (\ref{eq:delta_approx}), note that we have
\begin{enumerate}
    \item $\norm{\tilde{\Delta}_{\epsilon'} - \tilde{\Delta}}_{\wtd{\Lm}_H}^2 \leq \epsilon' \norm{\tilde{\Delta}}_{\wtd{\Lm}_H}^2$ (This is the error from the recursive solve).
    \item $\norm{\tilde{\Delta} - \Delta}^2_{\Lm_H} \leq h(\gamma) \norm{\Delta}_{\Lm_H}^2$ (Follows by part 2 of Proposition \ref{prop:spec_approx}. This is the error from sparsification). 
\end{enumerate}
The first inequality, together with $(1-\gamma)\Lm_H \preceq \wtd{\Lm}_H \preceq (1+\gamma)\Lm_H$ and part 1 of Proposition \ref{prop:spec_approx}, implies that 
$$\norm{\tilde{\Delta}_{\epsilon'} - \tilde{\Delta}}_{\Lm_H}^2 \leq
\frac{1}{1-\gamma}
\norm{\tilde{\Delta}_{\epsilon'} - \tilde{\Delta}}_{\wtd{\Lm}_H}^2 \leq
\frac{\epsilon'}{1-\gamma} \norm{\tilde{\Delta}}_{\wtd{\Lm}_H}^2\leq \epsilon' \cdot \frac{1+\gamma}{1-\gamma} \cdot \norm{\tilde{\Delta}}^2_{\Lm_H}.$$

Now, using the inequality $\norm{a+b}^2 \leq 2\norm{a}^2 + 2\norm{b}^2$ (which holds for any norm), we note that
$$\norm{\tilde{\Delta}}^2_{\Lm_H} \leq 2\norm{\Delta}_{\Lm_H}^2 + 2\norm{\tilde{\Delta} -\Delta}_{\Lm_H}^2 \leq 2\norm{\Delta}_{\Lm_H}^2 +2h(\gamma)\norm{\Delta}_{\Lm_H}^2.$$
Hence,
$$\norm{\tilde{\Delta}_{\epsilon'} - \tilde{\Delta}}_{\Lm_H}^2 \leq 2\epsilon' \cdot \frac{1+\gamma}{1-\gamma} \cdot (1+h(\gamma)) \norm{\Delta}_{\Lm_H}^2.$$
Again using $\norm{a+b}^2 \leq 2\norm{a}^2 + 2\norm{b}^2$, we have
\begin{align*}
    \norm{\tilde{\Delta}_{\epsilon'} - {\Delta}}_{\Lm_H}^2
    &\leq 2\left(\norm{\tilde{\Delta}_{\epsilon'} - \tilde{\Delta}}_{\Lm_H}^2 + \norm{\tilde{\Delta} - {\Delta}}_{\Lm_H}^2\right) \\
    &\leq 2\left(2\epsilon' \cdot \frac{1+\gamma}{1-\gamma} \cdot (1+h(\gamma)) \norm{\Delta}_{\Lm_H}^2 + h(\gamma)\norm{\Delta}_{\Lm_H}^2\right) \\
    &= \left(4\epsilon' \cdot \frac{1+\gamma}{1-\gamma} \cdot (1+h(\gamma)) + 2h(\gamma) \right)\norm{\Delta}_{\Lm_H}^2.
\end{align*}
Therefore, (\ref{eq:delta_approx}) holds with $\alpha =4\epsilon' \cdot \frac{1+\gamma}{1-\gamma} \cdot (1+h(\gamma)) + 2h(\gamma)$.
\end{proof}
\begin{lemma}
\label{lem:gap_reduce}
For any vector $\xv^t$, we have
$$\sB(\xv^*) - \E[\sB({\xv}^{t+1})] \leq \left(1 -\beta+\beta e^{-d/\tau}\right) \left(\sB(\xv^*) - \sB(\xv^t)\right),$$
where $\beta = (1-\frac{1}{n_0^2})(1-\alpha)$. 

Here, the expectation is taken over the random choices of $e_1^t, \ldots, e^t_d$, and also over the randomness of the sparsification step. (Recall that the sparsify algorithm is randomized, and in particular it successfully returns a $\gamma$-approximate sparsifier with probability $\geq 1 - \frac{1}{\abs{V(H^t)}^2}$.)

% \richard{also over randomness of the sparsify step,
% but the recursive solve works unconditionally?}
\end{lemma}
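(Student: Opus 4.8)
The plan is to chain together three facts, all with $\xv^t$ held fixed: (i) the single-batch improvement bound of Lemma~\ref{lem:errs}, which is what remains once the sparsification randomness is averaged out; (ii) the observation that the ``exact batched'' iterate $\hat{\xv}^{t+1}$ is at least as good as the ``sequential'' iterate $\bar{\xv}^{t+1}$; and (iii) the $d$-fold contraction obtained by iterating the unbatched estimate of Lemma~\ref{lem:gap_decreases}.

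First I would condition on the $d$ tree edges $e^t_1,\dots,e^t_d$ drawn in iteration $t$ and average over the internal randomness of sparsification (and of the recursive solve). The Spielman--Teng / PS13 routine returns a bona fide $\gamma$-approximate sparsifier of $\Lm_{H^t}$ with probability at least $1-\tfrac1{n_0^2}$ (the graph being sparsified has $d+1\ge n_0$ vertices; if $d+1<n_0$ the recursive call is exact and sparsification may be skipped, which only improves the constants). On that event Lemma~\ref{lem:errs} gives $\sB(\xv^{t+1})-\sB(\xv^t)\ge(1-\alpha)\bigl(\sB(\hat{\xv}^{t+1})-\sB(\xv^t)\bigr)$; off that event the revert step (``if $\sB(\xv^{t+1})\le\sB(\xv^t)$, revert'') forces $\sB(\xv^{t+1})-\sB(\xv^t)\ge0$. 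Since $\sB(\hat{\xv}^{t+1})-\sB(\xv^t)$ is a nonnegative multiple of $(\Delta^t)^{\top}\Lm_{H^t}\Delta^t$ by Lemma~\ref{lem:dual_incr}, averaging the two cases gives
$$\E\bigl[\sB(\xv^{t+1})-\sB(\xv^t)\ \big|\ e^t_1,\dots,e^t_d\bigr]\ \ge\ \beta\bigl(\sB(\hat{\xv}^{t+1})-\sB(\xv^t)\bigr),\qquad \beta=\Bigl(1-\tfrac1{n_0^2}\Bigr)(1-\alpha).$$

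Next I would compare $\hat{\xv}^{t+1}$ with $\bar{\xv}^{t+1}$. The key point is that performing the $d$ fundamental-cut toggles of unbatched \ALGNAME\ (for the edges $e^t_1,\dots,e^t_d$) one at a time adds to \emph{every} vertex of a fixed component $V^t_i$ of $T-\{e^t_1,\dots,e^t_d\}$ the same scalar, namely $\sum_{k:\,V^t_i\subseteq C(e^t_k)}\delta_k$, where $\delta_k$ is the toggle amount used in the $k$-th unbatched step (computed with the potentials as already modified by the earlier toggles). Hence $\bar{\xv}^{t+1}$ is one of the block-constant updates considered in Lemma~\ref{lem:dual_incr}, and since $\hat{\xv}^{t+1}$ uses the maximizing $\Delta^t=\Lm_{H^t}^{\dagger}\bv^t_{H^t}$, I get $\sB(\hat{\xv}^{t+1})-\sB(\xv^t)\ge\sB(\bar{\xv}^{t+1})-\sB(\xv^t)$ pointwise in $e^t_1,\dots,e^t_d$. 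Then, since those $d$ edges are sampled i.i.d.\ from $P$, the iterate $\bar{\xv}^{t+1}$ is exactly the state of unbatched \ALGNAME\ after $d$ steps started from $\xv^t$, so applying Lemma~\ref{lem:gap_decreases} $d$ times and using $(1-\tfrac1\tau)^d\le e^{-d/\tau}$ yields $\E[\sB(\bar{\xv}^{t+1})-\sB(\xv^t)\mid\xv^t]\ge(1-e^{-d/\tau})\bigl(\sB(\xv^*)-\sB(\xv^t)\bigr)$.

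Finally, taking expectations over $e^t_1,\dots,e^t_d$ and chaining the three inequalities (the constant $\beta$ is independent of the tree edges, so it pulls out) gives $\E[\sB(\xv^{t+1})-\sB(\xv^t)\mid\xv^t]\ge\beta(1-e^{-d/\tau})\bigl(\sB(\xv^*)-\sB(\xv^t)\bigr)$, and rearranging the identity $\sB(\xv^*)-\E[\sB(\xv^{t+1})\mid\xv^t]=\bigl(\sB(\xv^*)-\sB(\xv^t)\bigr)-\E[\sB(\xv^{t+1})-\sB(\xv^t)\mid\xv^t]$ produces the claimed bound with factor $1-\beta+\beta e^{-d/\tau}$. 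The step I expect to be the main obstacle is the middle one: one must verify carefully that an ordered composition of single fundamental-cut toggles is genuinely block-constant with respect to the partition obtained by deleting all $d$ edges simultaneously, so that the optimality of $\hat{\xv}^{t+1}$ from Lemma~\ref{lem:dual_incr} can be invoked; the handling of the sparsification-failure event is a minor point dispatched by the revert step, and the fact that Lemma~\ref{lem:errs}'s guarantee is really an expectation over the recursive solve is absorbed into $\alpha$.
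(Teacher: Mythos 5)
Your proposal is correct and follows essentially the same route as the paper's proof: iterate Lemma~\ref{lem:gap_decreases} $d$ times for $\bar{\xv}^{t+1}$, use the optimality of $\Delta^t$ from Lemma~\ref{lem:dual_incr} to get the pointwise dominance $\sB(\hat{\xv}^{t+1})\ge\sB(\bar{\xv}^{t+1})$, apply Lemma~\ref{lem:errs} on the sparsifier-success event and the revert step on the failure event, and chain. The block-constancy of the sequential toggles, which you flag as the main thing to check, is exactly the observation the paper makes (each component $V^t_i$ lies entirely on one side of each of the $d$ fundamental cuts, so the composed update is constant on $V^t_i$), so there is no gap.
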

\begin{proof}
By Lemma \ref{lem:gap_decreases}, we know that
$$\sB(\xv^*) - \E[\sB(\bar{\xv}^{t+1})] \leq \left(1 - \frac1\tau\right)^d\left(\sB(\xv^*) - \sB(\xv^t)\right) \leq e^{-\frac{d}{\tau}}\left(\sB(\xv^*) - \sB(\xv^t)\right).$$
Rearranging, this is equivalent to
$$\E[\sB(\bar{\xv}^{t+1})] - \sB(\xv^t) \geq \left(1 - e^{-\frac{d}{\tau}} \right)\left(\sB(\xv^*) - \sB(\xv^t)\right).$$
Observe that for every realization of $e^t_1, \ldots, e^t_d$, we have $\sB(\hat{\xv}^{t+1}) \geq \sB(\bar{\xv}^{t+1})$. This is because $\hat{\xv}^{t+1} - \xv^t = \Delta^t$, where $\Delta^t$ by definition is the vector that \emph{maximizes} the increase $\sB(\hat{\xv}^{t+1}) - \sB(\xv^t)$ while subject to being incremented by the same amount on each of the components $V^t_0, \ldots, V^t_d$. On the other hand, the vector $\bar{\xv}^{t+1} - \xv^t$ is also  incremented by the same amount on each of the components $V^t_0, \ldots, V^t_d$ by the way our original algorithm works.

Since $\sB(\hat{\xv}^{t+1}) \geq \sB(\bar{\xv}^{t+1})$ holds for every realization of $e^t_1, \ldots, e^t_d$, it follows that $\E[\sB(\hat{\xv}^{t+1})] \geq \E[\sB(\bar{\xv}^{t+1})]$, where the expectation is taken over the random choices of $e^t_1, \ldots, e^t_d$ made by the algorithm. Hence,
\begin{equation}
\label{eq:opt_incr}
\E[\sB(\hat{\xv}^{t+1})] - \sB(\xv^t) \geq \left(1 - e^{-\frac{d}{\tau}}\right) \left(\sB(\xv^*) - \sB(\xv^t)\right).
\end{equation}
To conclude, we will use Lemma \ref{lem:errs} to translate the above inequality (which is in terms of $\hat{\xv}^{t+1}$), to an inequality in terms of $\xv^{t+1}$. We have
\begin{itemize}
    \item With probability $\geq 1 - \frac{1}{n_0^2}$, the sparsifier is successful and by Lemma \ref{lem:errs}, $$\E[\sB(\xv^{t+1})] - \sB(\xv^t) \geq (1-\alpha)\left(\E[\sB(\hat{\xv}^{t+1})] - \sB(\xv^t) \right).$$
    \item With probability $\leq \frac{1}{n_0^2}$, the sparsifier is unsuccessful and $\E[\sB(\xv^{t+1})] - \sB(\xv^t) \geq 0$. This is because in the algorithm, we evaluate $\sB(\xv^{t+1})$ and only update $\xv^t$ to $\xv^{t+1}$ if $\sB(\xv^{t+1}) \geq \sB(\xv^t)$. Otherwise, we make $\xv^{t+1} = \xv^t$. 
\end{itemize}
Note that the above expectations are with respect to the random choices of $e^t_1, \ldots, e^t_d$, conditioned on the sparsifier being successful/unsuccessful. Now, taking another expectation with respect to the randomness of the sparsifier, we get
\begin{align*}
\E[\sB(\xv^{t+1})] - \sB(\xv^t) &\geq \left(1-\frac{1}{n_0^2}\right)(1-\alpha)\left(\E[\sB(\hat{\xv}^{t+1})] - \sB(\xv^t) \right) \\
&\geq \left(1-\frac{1}{n_0^2}\right)(1-\alpha)\left(1 - e^{-\frac{d}{\tau}}\right) \left(\sB(\xv^*) - \sB(\xv^t)\right) &\text{(by (\ref{eq:opt_incr}))}
\end{align*}
Rearranging the above inequality gives
$$\sB(\xv^*) - \E[\sB({\xv}^{t+1})] \leq \left(1 - \left(1-\frac{1}{n_0^2}\right)(1-\alpha)\left(1 - e^{-\frac{d}{\tau}}\right)\right) \left(\sB(\xv^*) - \sB(\xv^t)\right),$$
as claimed.
\end{proof}

\subsubsection{Running Time Analysis}
The following theorem bounds the running time of the algorithm.
\fastest*

% \todo{Can probably avoid the exponent on $\log 1/\epsilon$, because only the topmost layer needs $\epsilon$ accuracy.}

\begin{proof}
By \Cref{lem:err}, it suffices to run the algorithm for $K$ iterations, for any $K \geq \frac{\ln \epsilon}{\ln\left(\frac{1}{5} + \frac{4}{5}e^{-d/\tau}\right)}$.

Using the inequalities $e^{-x} \leq 1 - \frac{x}{2}$ and $\ln(1-x) \leq -\frac{x}{2}$ which hold for $x\in(0,1)$, we see that it suffices to choose $K = \frac{5\tau}{d} \ln(1/\epsilon)$. 

Recall that $\tau \leq c_3m\log^2n$, for some constant $c_3$. Thus, we choose 
%\begin{itemize}
%\item 
$d = c_3\bar{m}(\log^2n)\cdot{m}^{-\delta}.$ Here, $\bar{m}$ is the number of edges in the \emph{current iteration}, while ${m}$ is the number of edges in the \emph{topmost} iteration (i.e. in the original graph $G$).
%\end{itemize}
With this choice of $d$, we have $K = 5c_3m^{\delta}\ln(1/\epsilon).$ (Note that $K$ is the same at every level of the recursion tree.)

The work at one iteration consists of 
\begin{itemize}
    \item Computing $T$,
    \item Doing $K$ times
    \begin{itemize}
    \item Contracting and sparsifying to a graph with $d$ vertices and $a_1 d \log^cn / \gamma^2$ edges, for some constant $a_1$.
    \item Doing a recursive call.
    \end{itemize}
\end{itemize}
If $m$ is the number of edges in the graph at one level of the recursion tree, then at the next level, the number of edges is
$$a_1 d \log^cn / \gamma^2 = a_1c_3\bar{m}(\log^2n)\cdot{m}^{-\delta}\log^cn / \gamma^2 = a_1c_3\bar{m}\cdot{m}^{-\delta}(\log n)^{2+c}/\gamma^2$$
Therefore, the number of edges in a graph at level $l$ of the recursion tree is
$$\textrm{edges}_l = {m}^{1-l\delta}(a_1c_3)^l(\log n)^{(2+c)l}/\gamma^{2l}.$$
The total work required by a node at level $l$ of the recursion tree is dominated by the sparsifier, which takes time
$$\textrm{work}_l = K\cdot a_2\cdot \textrm{edges}_l \cdot \log^{c_1}n$$
for some constant $a_2$.

Finally, the total number of recursion tree nodes at level $l$ is equal to $K^l$. This implies that the total work required by all the nodes at level $l$ of the recursion tree is equal to 
\begin{align*}
\textrm{total work}_l &= \textrm{work}_l \cdot K^l \\
&= K\cdot a_2\cdot \textrm{edges}_l \cdot \log^{c_1}n \cdot K^l\\
&= K\cdot a_2\cdot {m}^{1-l\delta}(a_1c_3)^l(\log n)^{(2+c)l}/\gamma^{2l} \cdot \log^{c_1}n \cdot K^l\\
&= 5^{l+1}c_3^{l+1}{m}^{1+\delta}(\ln 1/\epsilon)^{l+1}a_2(a_1c_3)^l(\log n)^{(2+c)l}/\gamma^{2l} \cdot \log^{c_1} n \\
&\leq A^l {m}^{1+\delta}(\log n)^{Bl}(\ln 1/\epsilon)^{l+1}
\end{align*}
for some constants $A,B$.

To conclude, we note that the total work summed across all the levels is at most a constant factor times the total work at the maximum level, which is $l = \frac{1}{\delta}$. 

\end{proof}

\appendix

\section{Deriving the dual of $p$-norm flow}
\label{app:dual_deriv}
Here, we show the steps of deriving the dual of the $p$-norm flow problem. Although in the paper we focus on the case where $1 < p \leq 2$, the derivation of the dual problem works for all $p > 1$. We follow the process described in Chapter 5 of \cite{boyd_vandenberghe}. The primal problem is
\begin{align*}
    \min \quad &\frac{1}{p}\sum_{e} r(e)\abs{f(e)}^p \\
    \text{s.t.} \quad &Af = b
\end{align*}
We will show that its dual is
\begin{align*}
    \max_x \quad b^T x - \frac{p-1}{p} \sum_{ij \in E} \left(\frac{\abs{x(i) - x(j)}^p}{r_{ij}}\right)^{\frac{1}{p-1}}
\end{align*}
First, we form the Lagrangian by bringing the constraints into the objective as a penalty term:
\begin{align*}
    L(f, x) = \frac{1}{p} \sum_e r(e) \abs{f(e)}^p + x^T(Af - b).
\end{align*}
The Lagrangian dual function is
\begin{align*}
    g(x) = \min_f L(f, x) = \min_f \left\{\frac{1}{p} \sum_e r(e) \abs{f(e)}^p + x^T(Af - b)\right\}.
\end{align*}
Note that for any $x$, $g(x)$ is a lower bound on the optimal value of the primal. This is because for any $f$ with $Af = b$, 
\begin{align*}
    g(x) \leq \frac{1}{p} \sum_e r(e) \abs{f(e)}^p + x^T(Af - b) = \frac{1}{p} \sum_e \abs{f(e)}^p.
\end{align*}
The dual problem is the optimization problem whose goal is to maximize the lower bound:
\begin{align*}
    \max_x g(x) = \max_x \min_f \left\{\frac{1}{p} \sum_e r(e) \abs{f(e)}^p + x^T(Af - b)\right\}
\end{align*}
For a given $x$, the function $\frac{1}{p} \sum_e r(e) \abs{f(e)}^p + x^T(Af - b)$ is convex in $f$. Hence, it is minimized when its gradient with respect to $f$ is 0. Taking the gradient component by component, and setting to 0, we get
\begin{align*}
    \forall e: \qquad r(e)f(e)\abs{f(e)}^{p-2} +\left(A^Tx\right)_e = 0
\end{align*}
which implies
\begin{align}
\label{eq:kkt}
    \forall e = (i,j): \qquad r(e)f(e)\abs{f(e)}^{p-2} = (x(j) - x(i)).
\end{align}
Taking absolute values of both sides, then rearranging, gives 
\begin{align*}
\abs{f(e)}^p = \abs{\frac{x(j) - x(i)}{r(e)}}^{\frac{p}{p-1}} \quad \forall \; e = (i,j) \in E.
\end{align*}
Also, multiplying both sides of \cref{eq:kkt} by $f(e)$ gives $r(e)\abs{f(e)}^p = -f(e)(x(i)-x(j))$. This gives
\begin{align*}
    x^T Af = f^T Ax = \sum_{e = ij \in E} f(e)(x(i) - x(j)) = -\sum_{ij \in E} r(e)\abs{f(e)}^p.
\end{align*}
Plugging these back, we get that the dual problem is
\begin{align*}
    \max_x \left\{ \frac{1}{p} \sum_{e = ij} r(e)\abs{\frac{x(j) - x(i)}{r(e)}}^{\frac{p}{p-1}} - \sum_{e=ij}r(e)\abs{\frac{x(j) - x(i)}{r(e)}}^{\frac{p}{p-1}} - b^Tx\right\}
\end{align*}
Collecting terms and replacing $x$ with $-x$, the above is equal to
\begin{align*}
    \max_x \left\{ b^Tx - \left(1 - \frac{1}{p}\right) \sum_{e = ij} r(e)\abs{\frac{x(i) - x(j)}{r(e)}}^{\frac{p}{p-1}} \right\},
\end{align*}
as desired.

\section{Omitted Proofs from Section \ref{sec:alg}}

\enerincr*

\begin{proof}
The way we update $\mathbf{x}$ is by adding a constant $\Delta$ to the potentials of every vertex in $C$, where 
$$
\Delta = (b(C) - f(C))\cdot R(C)
$$
Recall that $f(C)$ is the net amount of flow going out of $C$ in the flow induced by $\mathbf{x}$. That is,
$$f(C) = \sum_{\substack{ij \in E \\ i \in C,\, j \not\in C}} \frac{x(i) - x(j)}{r(i, j)}$$

Note that the new potentials $\mathbf{x}'$ can be expressed as $\mathbf{x}' = \mathbf{x} + \Delta \one_C$. 

We have
\begin{align*}
    2(\sB(\mathbf{x}') - \sB(\mathbf{x}))
    &= 2\mathbf{b}\trans \mathbf{x}' - (\mathbf{x}')\trans \mathbf{L}\mathbf{x}' - (2\mathbf{b}\trans \mathbf{x} - \mathbf{x}\trans \mathbf{L}\mathbf{x}) \\
    &= 2\mathbf{b}\trans(\mathbf{x} + \Delta\cdot\one_C) - 2\mathbf{b}\trans \mathbf{x} - (\mathbf{x}')\trans \mathbf{L}\mathbf{x}' + \mathbf{x}\trans \mathbf{L}\mathbf{x} \\
    &= 2\Delta\cdot \mathbf{b}\trans \one_C - \sum_{ij \in E}\frac{1}{r(i, j)}\left[(x'(i) - x'(j))^2 - (x(i) - x(j))^2 \right] \\
    &= 2\Delta\cdot \mathbf{b}\trans \one_C - \sum_{ij \in \delta(C)}\frac{1}{r(i, j)}\left[(x'(i) - x'(j))^2 - (x(i) - x(j))^2 \right] \\
    &= 2\Delta\cdot \mathbf{b}\trans \one_C - \sum_{\substack{i \in C, \, j \not\in C \\ ij \in \delta(C)}}\frac{1}{r(i, j)}\left[(x(i)+\Delta - x(j))^2 - (x(i) - x(j))^2 \right] \\
    &= 2\Delta \cdot \mathbf{b}\trans \one_C - \sum_{\substack{i \in C, \, j \not\in C \\ ij \in \delta(C)}}\frac{1}{r(i,j)}\left[2\Delta\cdot(x(i)-x(j)) +\Delta^2\right] \\
    &= 2\Delta \cdot \mathbf{b}\trans \one_C - 2\Delta\cdot f(C) - \Delta^2 \sum_{(i,j) \in \delta(C)} \frac{1}{r(i,j)} \\
    &= 2\Delta \cdot \mathbf{b}\trans \one_C - 2\Delta\cdot f(C) - \Delta^2\cdot R(C)^{-1} \\
    &= 2\Delta \cdot b(C) - 2\Delta\cdot f(C) - \Delta^2 R(C)^{-1} \\
    &= 2\Delta ^2 R(C)^{-1} - \Delta^2 R(C)^{-1} \\
    &= \Delta^2/R(C).
\end{align*}
\end{proof}

\gapp*

\begin{proof}
By definition, we have
\begin{align*}
   2\gap(\mathbf{f}, \mathbf{x})
    &= \sum_{e \in E} r(e)f(e)^2 - (2\mathbf{b}\trans \mathbf{x} - \mathbf{x}\trans \mathbf{L}\mathbf{x}).
\end{align*}
Note that 
\begin{align*}
    \mathbf{b}\trans \mathbf{x} = \sum_{i \in V}b(i)x(i) 
    = \sum_{i \in V} x(i) \left(\sum_{j: (i,j) \in \vec{E}} f(i,j) - \sum_{j:(j,i) \in \vec{E}} f(j,i)\right)
    = \sum_{(i,j) \in \vec{E}}f(i,j)(x(i) - x(j))
    \end{align*}
    and 
    \begin{align*}
    \mathbf{x}\trans \mathbf{L}\mathbf{x} &= 
    \sum_{(i,j) \in \vec{E}} \frac{(x(i) - x(j))^2}{r(i,j)}.
    \end{align*}
    Plugging these into our expression for $\gap(\mathbf{f}, \mathbf{x})$, we obtain 
    \begin{align*}
        2\gap(\mathbf{f}, \mathbf{x})
        &= \sum_{(i,j) \in \vec{E}}\left[r(i,j)f(i,j)^2 - 2f(i,j)(x(i) - x(j)) + \frac{(x(i) - x(j))^2}{r(i,j)}\right] \\
        &= \sum_{(i,j) \in \vec{E}} r(i,j)\left(f(i,j) - \frac{x(i)-x(j)}{r(i,j)}\right)^2
    \end{align*}
    which is what we wanted to show.

\end{proof}

\gaplem*

\begin{proof}
Recall that $C(i,j)$, $\Delta(C(i,j))$ and $R(C(i,j))$ were defined as follows:
\begin{itemize}
    \item $C(i,j)$ is the set of vertices on the side of the fundamental cut of $T$ determined by $(i,j)$ containing $i$. In other words, $C(i,j)$ consists of the vertices in the component of $T - ij$ with $i \in C(i,j)$ and $j \not\in C(i,j)$.
    \item $R(C(i,j)) = \left(\sum_{ij \in \delta(C)} \frac{1}{r(i,j)}\right)^{-1}$.
    \item $\Delta(C(i,j)) = (b(C(i,j)) - f(C(i,j)))R(C(i,j))$, where
    \begin{itemize}
        \item $b(C(i,j)) = \mathbf{b}\trans \one_{C(i,j)}$, and 
        \item $f(C(i,j)) = \displaystyle\sum_{\substack{k \in C(i,j), \, l \not\in C(i,j) \\ kl \in E}} \frac{x(k) - x(l)}{r(k, l)}$
    \end{itemize}
\end{itemize}
We have 
\begin{align*}
    2\gap(\fv_{T, \xv}, \mathbf{x})
    &= \sum_{(i, j) \in E} r(i, j) \left(f_{T,\xv}(i, j) - \frac{x(i)-x(j)}{r(i, j)}\right)^2 \\
    &= \sum_{(i, j) \in T} r(i, j) \left(f_{T,\xv}(i, j) - \frac{x(i)-x(j)}{r(i, j)}\right)^2 \\
    &= \sum_{(i, j) \in T} r(i, j) \left[\left(b(C(i, j)) - \sum_{\substack{ k \in C(i, j), l \not\in C(i, j)\\kl \in E - ij }} \frac{x(k) - x(l)}{r(k, l)}\right) - \frac{x(i) - x(j)}{r(i, j)} \right]^2 \\
    &= \sum_{(i, j) \in T} r(i, j) \left[b(C(i,j)) - \sum_{\substack{k \in C(i, j), l \not\in C(i, j)\\kl \in E}} \frac{x(k) - x(l)}{r(k, l)}\right]^2 \\
    &= \sum_{(i, j) \in T} r(i, j) \left[b(C(i,j)) - f(C(i, j))\right]^2 \\
    &= \sum_{(i, j) \in T} r(i, j)\cdot \frac{\Delta(C(i, j))^2}{R(C(i, j))^2}
\end{align*}
\end{proof}

\gapdecr*

\begin{proof}
We know from the discussion above that
$$\E[\sB(\mathbf{x}^{t+1})] - \sB(\mathbf{x}^t) = \frac{1}{\tau}\gap(\fv_{T, \xv}, \mathbf{x}^t),$$
where $\fv_{T, \xv}$ is the tree-defined flow associated with potentials $\mathbf{x}^t$. Since $\gap(\fv_{T, \xv}, \mathbf{x}^t) \geq \sB(\mathbf{x}^*) - \sB(\mathbf{x}^t)$, we get
$$\E[\sB(\mathbf{x}^{t+1})] - \sB(\mathbf{x}^t) \geq \frac{1}{\tau}\left(\sB(\mathbf{x}^*) - \sB(\mathbf{x}^t)\right).$$
Rearranging gives
$$\sB(\mathbf{x}^*) - \E[\sB(\mathbf{x}^{t+1})] \leq \left(1 - \frac1\tau\right)\left(\sB(\mathbf{x}^*) - \sB(\mathbf{x}^t)\right),$$
as desired. 
\end{proof}

\finalgap*

\begin{proof}
Define the random variable $D_t := \sB(\mathbf{x}^*) - \sB(\mathbf{x}^t)$. By Lemma \ref{lem:gap_decreases}, we know that
$$\E\left[D^{t+1} \mid \mathbf{x}^t\right] \leq \left(1 - \frac{1}{\tau}\right)\E\left[D^t \mid \mathbf{x}^t\right]$$
for all possible vectors of potentials $\mathbf{x}^t$. This implies that $\E\left[D^{t+1}\right] \leq \left(1 - \frac{1}{\tau}\right)\E\left[D^t \right]$ unconditionally. 

By induction on $t$, it then follows that
$$\E\left[D^K\right] \leq \left(1 - \frac1\tau\right)^K\E\left[D^0\right] =\left(1 - \frac1\tau\right)^K\left(\sB(\mathbf{x}^*) - \sB(\mathbf{x}^0)\right) = \left(1 - \frac1\tau\right)^K\sB(\mathbf{x}^*).$$
Thus, 
$$\sB(\mathbf{x}^*) - \E[\sB(\mathbf{x}^K)] \leq  \left(1 - \frac1\tau\right)^K\sB(\mathbf{x}^*).$$

Using the inequality $1-x \leq e^{-x}$, we obtain
$$\sB(\mathbf{x}^*) - \E[\sB(\mathbf{x}^K)] \leq e^{-K/\tau} \sB(\mathbf{x}^*).$$
Hence, if $K \geq \tau\ln(\frac{1}{\epsilon})$, then we will have $\sB(\mathbf{x}^*) - \E[\sB(\mathbf{x}^K)] \leq \epsilon\cdot\sB(\mathbf{x}^*)$, as desired. 
\end{proof}

\dualkosz*
    \begin{proof}[Proof of \Cref{thm:dual_kosz}]
     By Corollary \ref{cor:final_gap}, after $K=\tau\ln(\frac{\tau}{\epsilon})$ iterations, the algorithm returns potentials  $\xv^K$ such that $\sB(\mathbf{x}^*) - \E[\sB(\mathbf{x}^K)] \leq \frac{\epsilon}{\tau} \cdot \sB(\mathbf{x}^*)$. Combining with Lemma \ref{lem:energy_to_potential}, we get that $\E \norm{\mathbf{x}^* - \mathbf{x}^K}_\mathbf{L}^2 \leq \frac{\epsilon}{\tau}\norm{\mathbf{x}^*}_\mathbf{L}^2$. Finally, \Cref{lem:rounding_error} gives $\E\left[\sE(\fv^K)\right] \leq (1+\epsilon)\sE(\fv^*)$.  
     
    \end{proof}

   \begin{restatable}{lemma}{stretch}
    \label{lem:stretch}
    We have $\tau = \st_T(G, \rv)$. 
    \end{restatable}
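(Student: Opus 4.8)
The plan is to expand both $\tau$ and $\st_T(G,\rv)$ as double sums indexed by pairs (tree edge, graph edge) and observe that they are the same sum with the order of summation interchanged. Recall $\tau = \sum_{(i,j)\in T} \frac{r(i,j)}{R(C(i,j))}$ and $R(C(i,j))^{-1} = \sum_{(k,l)\in\delta(C(i,j))} \frac{1}{r(k,l)}$, so
\[
\tau \;=\; \sum_{(i,j)\in T}\ \sum_{(k,l)\in\delta(C(i,j))} \frac{r(i,j)}{r(k,l)}.
\]
On the other hand, straight from the definition of stretch applied to the weights $\rv$,
\[
\st_T(G,\rv) \;=\; \sum_{(k,l)\in\vec{E}} \frac{1}{r(k,l)}\sum_{(i,j)\in P(k,l)} r(i,j) \;=\; \sum_{(k,l)\in\vec{E}}\ \sum_{(i,j)\in P(k,l)} \frac{r(i,j)}{r(k,l)}.
\]

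The key combinatorial fact I would establish (it is the standard cut/path duality for fundamental cuts of a spanning tree) is: for any tree edge $(i,j)\in T$ and any edge $(k,l)\in E$, the edge $(k,l)$ crosses the fundamental cut $\delta(C(i,j))$ if and only if $(i,j)$ lies on the fundamental tree path $P(k,l)$. This is immediate, since deleting $(i,j)$ from $T$ partitions $V$ into $C(i,j)$ and its complement, and $k,l$ land on opposite sides of this partition precisely when the (unique) $k$–$l$ path in $T$ traverses $(i,j)$. In particular, when $(k,l)\in T$ we have $P(k,l)=\{(k,l)\}$, consistent with the fact that a tree edge crosses exactly one fundamental cut among those of tree edges, namely its own.

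Given this equivalence, the two index sets $\{((i,j),(k,l)):(i,j)\in T,\ (k,l)\in\delta(C(i,j))\}$ and $\{((i,j),(k,l)):(k,l)\in\vec{E},\ (i,j)\in P(k,l)\}$ coincide, and the summand $r(i,j)/r(k,l)$ is identical in both expressions, so $\tau=\st_T(G,\rv)$. The only points needing (minor) care are that each tree edge contributes exactly once to the sum defining $\tau$ — which is fine because $r$ and $R$ are symmetric under swapping $i\leftrightarrow j$, so the choice of which side is $C(i,j)$ is immaterial — and the matching of orientations between $T$ and $\vec{E}$, which is likewise irrelevant since every term is a ratio of (symmetric) resistances. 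I do not expect any real obstacle here; the entire content is the fundamental-cut/fundamental-path correspondence, and the rest is bookkeeping.
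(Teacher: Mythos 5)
Your proposal is correct and matches the paper's own proof: both expand $\tau$ and $\st_T(G,\rv)$ as double sums over pairs of a tree edge and a graph edge, and interchange the order of summation using the fundamental-cut/fundamental-path correspondence, namely that $(k,l)\in\delta(C(i,j))$ if and only if $(i,j)\in P(k,l)$. Nothing further is needed.
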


\begin{proof}
    We write out the definitions of $\tau$ and $\st_T(G, \rv)$:
    $$\tau = \sum_{(i, j) \in T} \frac{r(i, j)}{R(C(i, j))} = \sum_{(i, j) \in T} r(i, j)\sum_{(k,l) \in \delta(C(i, j))} \frac{1}{r(k, l)}$$
    and
    $$
    \st_T(G, \rv) = \sum_{(i, j) \in \vec{E}} \st_T((i, j), \rv)
    = \sum_{(i, j) \in \vec{E}} \frac{1}{r(i, j)}\sum_{(k, l) \in P(i,j)} r(k, l),
    $$
    where $P(i, j)$ is the unique path from $i$ to $j$ in $T$. 
    
    It turns out that the expressions for $\tau$ and $\st_T(G)$ are summing exactly the same terms, just in different ways. Indeed, we have
    \begin{align*}
        \tau &= \sum_{(i, j) \in T} \sum_{(k,l) \in \delta(C(i, j))}\frac{r(i, j)}{r(k, l)} \\
        &= \sum_{(k, l) \in \vec{E}} \sum_{(i, j) \in P(k, l)} \frac{r(i, j)}{r(k, l)} \\
        &= \st_T(G, \rv). 
    \end{align*}
    To switch the order of summation from the first line to the second line, we used the fact that for an edge $(k, l) \in \vec{E}$, we have $(k,l) \in \delta(C(i, j))$ if and only if $(i, j) \in P(k, l)$. This is because $T$ is a spanning tree. 
    
    \end{proof}

    % \subsection{Convergence of Energy Implies Convergence of Potentials}
    By Corollary \ref{cor:final_gap}, we know that the potentials $\mathbf{x}^t$ found by the algorithm satisfy the property that $\sB(\mathbf{x}^t)$ converges to $\sB(\mathbf{x}^*)$ at a linear rate, in expectation. The following lemma shows that if $\mathbf{x}$ is a set of potentials such that $\sB(\mathbf{x})$ is close to $\sB(\mathbf{x}^*)$, then $\mathbf{x}$ is close to $\mathbf{x}^*$ as a vector (measured in the matrix norm defined by the Laplacian $\mathbf{L}$). 
    
    \begin{restatable}{lemma}{enertopot}
    \label{lem:energy_to_potential}
    Let $\mathbf{x}$ be any vector of potentials. Then 
    $\frac12\norm{\mathbf{x}^* - \mathbf{x}}_\mathbf{L}^2 = \sB(\mathbf{x}^*) - \sB(\mathbf{x}).$
    In particular, if $\sB(\mathbf{x}^*) - \sB(\mathbf{x}) \leq \epsilon \cdot \sB(\mathbf{x}^*)$, then $\norm{\mathbf{x}^* - \mathbf{x}}_\mathbf{L}^2 \leq  \epsilon \norm{\mathbf{x}^*}_\mathbf{L}^2.$
    \end{restatable}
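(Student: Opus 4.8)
The plan is a direct algebraic computation exploiting the fact that $\xv^*$ is the maximizer of the quadratic $\sB(\xv) = \bv^T\xv - \tfrac12\xv^T\Lm\xv$, which (as noted right before the lemma) is characterized by $\Lm\xv^* = \bv$. First I would write out $\sB(\xv^*) - \sB(\xv)$ explicitly and substitute $\bv = \Lm\xv^*$ wherever $\bv$ appears, so that every term becomes a quadratic form in $\Lm$. Concretely,
\begin{align*}
\sB(\xv^*) - \sB(\xv)
&= \bigl(\bv^T\xv^* - \tfrac12(\xv^*)^T\Lm\xv^*\bigr) - \bigl(\bv^T\xv - \tfrac12\xv^T\Lm\xv\bigr) \\
&= (\xv^*)^T\Lm\xv^* - \tfrac12(\xv^*)^T\Lm\xv^* - (\xv^*)^T\Lm\xv + \tfrac12\xv^T\Lm\xv \\
&= \tfrac12\bigl((\xv^*)^T\Lm\xv^* - 2(\xv^*)^T\Lm\xv + \xv^T\Lm\xv\bigr) = \tfrac12(\xv^* - \xv)^T\Lm(\xv^* - \xv),
\end{align*}
which is exactly $\tfrac12\norm{\xv^* - \xv}_\Lm^2$, using symmetry of $\Lm$ to collapse the cross terms. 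This establishes the identity.

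For the ``in particular'' clause, the key observation is that $\sB(\xv^*)$ itself can be rewritten as a multiple of $\norm{\xv^*}_\Lm^2$: again substituting $\bv = \Lm\xv^*$ gives $\norm{\xv^*}_\Lm^2 = (\xv^*)^T\Lm\xv^* = \bv^T\xv^*$, hence $\sB(\xv^*) = \bv^T\xv^* - \tfrac12(\xv^*)^T\Lm\xv^* = \norm{\xv^*}_\Lm^2 - \tfrac12\norm{\xv^*}_\Lm^2 = \tfrac12\norm{\xv^*}_\Lm^2$. Combining this with the identity just proved, the hypothesis $\sB(\xv^*) - \sB(\xv) \leq \epsilon\,\sB(\xv^*)$ becomes $\tfrac12\norm{\xv^*-\xv}_\Lm^2 \leq \epsilon\cdot\tfrac12\norm{\xv^*}_\Lm^2$, i.e. $\norm{\xv^*-\xv}_\Lm^2 \leq \epsilon\norm{\xv^*}_\Lm^2$, as claimed.

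There is essentially no obstacle here — the only point requiring a moment's care is justifying that the relevant quantities are well-defined despite $\xv^*$ being non-unique (the dual objective, and the Laplacian seminorm, are both invariant under adding a constant to all coordinates, and $\Lm\xv^* = \bv$ holds for every optimal $\xv^*$), so the statement is independent of which optimal $\xv^*$ is chosen. I would mention this briefly but not dwell on it; the whole argument is two short displays.
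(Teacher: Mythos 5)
Your proposal is correct and follows essentially the same route as the paper's proof: both arguments are the single algebraic identity obtained by substituting $\Lm\xv^* = \bv$ and expanding the quadratic form (the paper just runs the computation starting from $\norm{\xv^*-\xv}_\Lm^2$ rather than from $\sB(\xv^*)-\sB(\xv)$), and both derive the second claim from $2\sB(\xv^*) = \norm{\xv^*}_\Lm^2$. Nothing is missing.
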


\begin{proof}
    We have 
    \begin{align*}
        \norm{\mathbf{x}^* - \mathbf{x}}_L^2
        &= (\mathbf{x}^* - \mathbf{x})\trans \mathbf{L} (\mathbf{x}^* - \mathbf{x}) \\
        &= (\mathbf{x}^*)\trans \mathbf{L}\mathbf{x}^* - 2\mathbf{x}\trans \mathbf{L}\mathbf{x}^* + \mathbf{x}\trans \mathbf{L}\mathbf{x} \\
        &= 2\sB(\mathbf{x}^*) - 2\mathbf{x}\trans \mathbf{b} + \mathbf{x}\trans \mathbf{L}\mathbf{x} \\
        &= 2\sB(\mathbf{x}^*) - 2\sB(\mathbf{x}).
    \end{align*}
    In particular, if $\sB(\mathbf{x}^*) - \sB(\mathbf{x}) \leq \epsilon \cdot \sB(\mathbf{x}^*)$, then $\norm{\mathbf{x}^* - \mathbf{x}}_\mathbf{L}^2 \leq 2\epsilon \cdot \sB(\mathbf{x}^*) = \epsilon \norm{\mathbf{x}^*}_\mathbf{L}^2$.
    This is because
    $$2\sB(\mathbf{x}^*) = 2\mathbf{b}\trans \mathbf{x}^* - (\mathbf{x}^*)\trans \mathbf{L}\mathbf{x}^* = (\mathbf{x}^*)\trans \mathbf{L}\mathbf{x}^* =  \norm{\mathbf{x}^*}_\mathbf{L}^2.$$
    \end{proof}
    
    Next, we show that if $\sB(\xv)$ is sufficiently close to $\sB(\xv^*)$, then the associated tree-defined flow $\fv_{T, \xv}$ has energy sufficiently close to $\sE(\fv^*)$.  
    
    \begin{restatable}{lemma}{rounderror}
    \label{lem:rounding_error}
    For any distribution over $\xv$ such that $\E_{\xv}[\sB(\xv)] \geq (1-\frac{\epsilon}{\tau})\sB(\xv^*)$, we have $\E_{\xv}[\sE(\fv_{T, \xv})] \leq (1+\epsilon)\sE(\fv^*)$.
    \end{restatable}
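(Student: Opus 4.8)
The plan is to prove, for \emph{each fixed} $\xv$, the two-sided bound
\begin{equation*}
\sB(\xv^*) - \sB(\xv) \;\leq\; \gap(\fv_{T,\xv},\xv) \;\leq\; \tau\bigl(\sB(\xv^*) - \sB(\xv)\bigr),
\end{equation*}
and then average over the distribution on $\xv$. The left inequality is exactly the fact, recorded in \Cref{sec:analysis}, that $\gap(\fv,\xv)\ge \sB(\xv^*)-\sB(\xv)$ for every feasible $\bv$-flow $\fv$; here $\fv_{T,\xv}$ is feasible by construction (\Cref{def:tree_def_flow}). For the right inequality I would invoke the identity established in the analysis of \ALGNAME: if one samples a tree edge $(i,j)$ from $P_{ij}=\frac1\tau\cdot\frac{r(i,j)}{R(C(i,j))}$ and lets $\xv'$ be the potentials after the corresponding cut-toggle, then $\E[\sB(\xv')]-\sB(\xv)=\frac1\tau\gap(\fv_{T,\xv},\xv)$ (this combines \Cref{lem:energy_increase} with \Cref{lem:gap} and the particular choice of $P_{ij}$). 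Since $\xv^*$ maximizes $\sB$, every realization of $\xv'$ satisfies $\sB(\xv')\le \sB(\xv^*)$, so $\E[\sB(\xv')]\le \sB(\xv^*)$, and rearranging gives $\gap(\fv_{T,\xv},\xv)\le \tau\bigl(\sB(\xv^*)-\sB(\xv)\bigr)$.

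Given the sandwich, the rest is a short computation. By the definition of the gap,
\begin{equation*}
\sE(\fv_{T,\xv}) = \sB(\xv)+\gap(\fv_{T,\xv},\xv) \;\le\; \sB(\xv)+\tau\bigl(\sB(\xv^*)-\sB(\xv)\bigr) \;=\; \tau\,\sB(\xv^*)-(\tau-1)\sB(\xv).
\end{equation*}
Taking expectations over the distribution on $\xv$, and using $\tau\ge 1$ (indeed $\tau=\st_T(G,\rv)\ge n-1$) together with the hypothesis $\E_\xv[\sB(\xv)]\ge\bigl(1-\tfrac\epsilon\tau\bigr)\sB(\xv^*)$, we obtain
\begin{equation*}
\E_\xv[\sE(\fv_{T,\xv})] \;\le\; \tau\,\sB(\xv^*)-(\tau-1)\Bigl(1-\tfrac\epsilon\tau\Bigr)\sB(\xv^*) \;=\; \Bigl(1+\tfrac{\tau-1}{\tau}\,\epsilon\Bigr)\sB(\xv^*).
\end{equation*}
Finally $\tfrac{\tau-1}{\tau}\le 1$ and $\sB(\xv^*)=\sE(\fv^*)\ge 0$ by strong duality, so the right-hand side is at most $(1+\epsilon)\sE(\fv^*)$, which is the claim.

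There is no real obstacle here; the only points requiring care are (i) correctly quoting the per-iteration identity $\E[\sB(\xv')]-\sB(\xv)=\frac1\tau\gap(\fv_{T,\xv},\xv)$, which depends on the specific sampling probabilities $P_{ij}$ and on \Cref{lem:energy_increase} and \Cref{lem:gap}, and (ii) the degenerate case $\bv=0$, where $\sB(\xv^*)=0$ and both sides are nonpositive — but the computation above still goes through unchanged, since it uses only $\sB(\xv^*)\ge 0$, $\tau\ge 1$, and $\epsilon>0$.
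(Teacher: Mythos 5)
Your proposal is correct and follows essentially the same route as the paper's proof: both hinge on the per-iteration identity $\E[\sB(\xv')]-\sB(\xv)=\frac1\tau\gap(\fv_{T,\xv},\xv)$ together with $\E[\sB(\xv')]\le\sB(\xv^*)$ to deduce $\gap(\fv_{T,\xv},\xv)\le\tau(\sB(\xv^*)-\sB(\xv))$, and then decompose $\sE(\fv_{T,\xv})$ via the gap and take expectations. The paper phrases the final computation as bounding $\E[\gap(\fv_{T,\xv},\xv^*)]\le(\tau-1)\E[\sB(\xv^*)-\sB(\xv)]\le\epsilon\sE(\fv^*)$, which is your calculation rearranged.
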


\begin{proof}
For ease of notation, in this proof let $\fv = \fv_{T, \xv}$. (Note that $\fv$ is a random vector that is a function of $\xv$.)
We have $\E_{\xv}[\sE(\fv) - \sE(\fv^*)] = \E_\xv[\gap(\fv, \xv^*)]$. 

For a fixed choice of $\xv$, consider running the algorithm for one more iteration starting from $\xv$ to obtain a vector $\xv'$. Then we have $\E[\sB(\xv')] - \sB(\xv) = \frac1\tau \gap(\fv, \xv)$. This implies $\sB(\xv^*) - \sB(\xv) \geq \frac1\tau \gap(\fv, \xv)$. Taking expectations with respect to $\xv$, we get $\E_{\xv}[\sB(\xv^*) - \sB(\xv)] \geq \frac1\tau\E_{\xv}[ \gap(\fv, \xv)]$.
% $\sB(\xv^*) - \sB(\xv) \geq \frac1\tau \gap(\fv, \xv)$. 
Thus,
\begin{align*}
    \E_{\xv}[\sE(\fv) - \sE(\fv^*)] &=  \E_{\xv}[\gap(\fv, \xv^*)] \\
     &=   \E_{\xv}[\gap(\fv, \xv) - (\sB(\xv^*) - \sB(\xv))]  \\
    &\leq  (\tau-1)\E_\xv[\sB(\xv^*) - \sB(\xv)] \\
    &\leq   \tau\E_\xv[\sB(\xv^*) - \sB(\xv)] \\
    &\leq  \epsilon\sB(\xv^*) \\
    &= \epsilon \sE(\fv^*).
\end{align*}
\end{proof}

% \begin{proof}
% We have $\sE(\fv) - \sE(\fv^*) = \gap(\fv, \xv^*)$. 

% Consider running the algorithm for one more iteration starting from $\xv$ to obtain a vector $\xv'$. Then we have 
% $\E[\sB(\xv')] - \sB(\xv) = \frac1\tau \gap(\fv, \xv)$, which implies that $\sB(\xv^*) - \sB(\xv) \geq \frac1\tau \gap(\fv, \xv)$. Thus,
% \begin{align*}
%     \sE(\fv) &= \sE(\fv^*) + \gap(\fv, \xv^*) \\
%     &=  \sE(\fv^*) + \gap(\fv, \xv) - (\sB(\xv^*) - \sB(\xv)) \\
%     &\leq \sE(\fv^*) + (\tau-1)(\sB(\xv^*) - \sB(\xv)) \\
%     &\leq  \sE(\fv^*) + \tau(\sB(\xv^*) - \sB(\xv)) \\
%     &\leq  (1+\epsilon)\sE(\fv^*).
% \end{align*}
% \end{proof}

\runtime*
\begin{proof}
    
    We can find a spanning tree $T$ with total stretch $\tau=O(m\log n \log\log n)$ in $O(m \log n \log\log n)$ time \cite{AN12}.
    
    For concreteness, fix an arbitrary vertex to be the root of $T$, and direct all edges in $T$ towards the root. The set of fundamental cuts we consider will be the vertex sets of subtrees of $T$.  
    
    To compute $b(C)$ for these $n-1$ fundamental cuts $C$, we can work our way from the leaves up to the root. If $C = \{v\}$ is a leaf of $T$, then $b(C) = b(v)$. Otherwise, $C$ is a subtree rooted at $v$, and $b(C) = b(v) + \sum_{C'} b(C')$, where the sum is over the subtrees of $C$. Hence we can compute $b(C)$ for all fundamental cuts $C$ in $O(n)$ time.
    
    To compute $R(C)$ for the fundamental cuts $C$, we can maintain $n-1$ variables, one for each fundamental cut. The variable corresponding to cut $C$ will represent $R(C)^{-1} = \sum_{e \in \delta(C)} \frac{1}{r(e)}$, and the variables are initialized to 0. We then iterate through all the edges in the graph, and for each such edge $e$, add $\frac{1}{r(e)}$ to the value of each variable that represents a cut $C$ such that $e \in \delta(C)$. Although this naive implementation takes $O(mn)$ time in the worst-case, it is possible to improve this running time to $O(m\log n)$ using link-cut trees \cite{ST83}. One can also achieve this running time using the same data structure as the one used in \cite{KOSZ13}. 
    
    The last part of the running time is the time it takes to run a single iteration of the algorithm.
    In each iteration of the algorithm, we need to compute $\Delta = (b(C) - f(C))\cdot R(C)$, where $C$ is the fundamental cut selected at that iteration. In the above two paragraphs, we described how to precompute the values of $b(C)$ and $R(C)$ for every fundamental cut $C$; note that these values are fixed at the beginning and do not change during the course of the algorithm. Hence, it remains to compute $f(C)$. One way to compute $f(C)$ is to simply iterate over all the edges in $\delta(C)$, summing each edge's contribution to $f(C)$. This takes time proportional to $\abs{\delta(C)}$, which could be $O(m)$ in the worst case. We can get this down to $O(n)$ per iteration by maintaining the values of $f(C)$ for every fundamental cut $C$, and updating these values each time the algorithm updates potentials. Since there are $n-1$ cuts, to do this in $O(n)$ time requires us to be able to update $f(C)$ for a single cut in $O(1)$ time. To do this, we can precompute an $(n-1) \times (n-1)$ table with a row/column for each fundamental cut, where the $(C_1, C_2)$ entry is the amount by which the flow out of $C_2$ increases if we add 1 to the potential of every node in $C_1$. Let $H(C_1, C_2)$ denote this value. With this table, updating the value of $f(C)$ after a potential update step essentially reduces to a single table lookup, which takes $O(1)$ time.
    
    Finally, note that one can construct the $H(C_1, C_2)$ table in $O(n^2)$ time using results from \cite{Karger00}. In the language of Definitions 5.3 and 5.5 in that paper, we are trying to compute $C(v^\downarrow, w^\downarrow)$ for all vertices $v,w$ in the tree, where the edge weights are the reciprocals of the resistances. At the bottom of page 11, it states that the $n^2$ values $C(v, w^\downarrow)$ can be computed in $O(n^2)$ time. At the top of Page 12, it then says that we get the values of $C(v^\downarrow, w^\downarrow)$ using $n$ treefix sums. (Each treefix sum is the procedure described in Lemma 5.8, and takes $O(n)$ time.))
    
%     This is because each row of the table can be filled in $O(m)$ time, by starting from the leaves of the tree and working up to the root (in the same way we computed $b(C)$ for all the cuts $C$). In more detail, suppose we are computing the row corresponding to some cut $C$. That is, we want to compute how much the flow out of every cut $C'$ changes, if we increase the potentials of every node in $C$ by 1. Let $H(C, C')$ denote this value. Then we can compute $H(C, C')$ recursively as follows:
%     $$
%     H(C, C') = 
%     \begin{cases}
%     \sum_{w: vw \in E} \frac{\one[v \in C] - \one[w \in C]}{r(v,w)}, &\text{if $C' = \{v\}$ is a leaf} \\[5pt]
%   \sum_{w: vw \in E} \frac{\one[v \in C] - \one[w \in C]}{r(v,w)} + \sum_{\bar{C}} H(C, \bar{C}), &\text{if $C'$ is subtree rooted at $v$}
%     \end{cases}
%     $$
%     In the second part of the above expression, the sum is over all subtrees $\bar{C}$ of $C'$. The above recursive formula allows us to compute $H(C, C')$ for all cuts $C'$ in $O(m)$ time (for a fixed $C$). Hence computing $H(C, C')$ for all pairs $(C, C')$ (which corresponds to filling out the table), takes $O(mn)$ preprocessing time.  
    
    To summarize, we can run each iteration of the algorithm in $O(m)$ time, which can be reduced to $O(n)$ time if we precompute the $H(C_1, C_2)$ table, which incurs an overhead of $O(n^2)$ storage and $O(n^2)$ preprocessing time. 
    
    Suppose each iteration of the algorithm takes $O(I)$ time, and the algorithm uses $O(L)$ preprocessing time (not including the time needed to compute the low-stretch spanning tree). Then the total running time of the algorithm is $O(L + I \tau\ln(\frac{\tau}{\epsilon}) + m \log n \log\log n) = O(L + mI\log n \log\log n\log\frac{\tau}{\epsilon})$. 
    
    If we use the version which uses $O(n^2)$ preprocessing time and $O(n)$ time per iteration, then $L = O(n^2)$ and $I = O(n)$. This gives the running time of \ALGNAME\ to be $O(mn\log n \log\log n\log\frac{\tau}{\epsilon})$. 
    
    \end{proof}

\section{Missing Proofs from Section \ref{sec:pnorm}}
    \cycleiter*
    
    \begin{proof}[Proof of \Cref{thm:cycle_iter}]
 Let $\sE(\fv) := \frac{1}{p} \sum_{e \in E} r(e) \abs{f(e)}^p$ be the objective that we are trying to minimize. Suppose $\fv = \fv^{t-1}$ is the flow at the beginning of iteration $t$ of the algorithm, let $\fv' = \fv^t$ be the flow at the end of iteration $t$, and let $\fv^*$ be the optimal flow. Our goal will be to show that
 $$\E\left[ \sE(\fv) - \sE(\fv')  \right] \geq \frac{1}{\tau} \left(\sE(\fv) - \sE(\fv^*)\right).$$
 If we can show this, then using the same arguments as Corollary \ref{cor:final_gap}, this will immediately imply that in 
 $K = O
\left(
\tau\ln(\frac1\epsilon)
\ln\left(
    \frac{\sE(\fv^0) - \sE(\fv^*)}{\sE(\fv^*)}
\right)
\right)$ 
iterations, we will have a flow $\fv^K$ that satisfies $\E[\sE(\fv^K)] \leq (1 + \epsilon) \sE(\fv^*)$. To bound the initial energy $\sE(\fv^0)$, note that by \Cref{prop:flow_upperbound} and \Cref{prop:ener_lower_bound} we have the rough bounds
$$
\sE(\fv^0) \leq \frac1p n \norm{\rv}_\infty \norm{\bv}_1^p 
\quad \text{and} \quad 
\sE(\fv^*) \geq \frac1p \norm{\rv}_{-\infty}\left( \frac{\norm{\bv}_\infty}{n}\right)^p
$$
which together imply that $\sE(\fv^0) \leq n^{2p+1}R \sE(\fv^*)$. Thus  $\ln\left(
    \frac{\sE(\fv^0) - \sE(\fv^*)}{\sE(\fv^*)}
\right) = O\left(p\ln(n) + \ln (R) \right)$, matching the expression in the statement of the theorem. 

The remainder of this proof will be devoted to showing that $\E\left[ \sE(\fv) - \sE(\fv')  \right] \geq \frac{1}{\tau} \left(\sE(\fv) - \sE(\fv^*)\right)$, where 
$\tau = O(p2^{2p-1} \cdot m\log n \log\log n +  m(nR)^{\frac{1}{p-1}})$.
  
 Let $\Delta = \fv^* - \fv$. Then
\begin{align*}
    \sE(\fv + \Delta) - \sE(\fv)
    &= \frac{1}{p} \sum_e r(e) \left(\abs{f(e)+\Delta(e)}^p - \abs{f(e)}^p\right) \\
    &\geq \frac1p \sum_e \left( r(e) \left(p f(e)\abs{f(e)}^{p-2}\Delta(e) + 2^{-p}\left(\abs{f(e)}^{p-2}\Delta(e)^2 +\abs{\Delta(e)}^p \right) \right)\right) \\
    &= \sum_e r(e)f(e)\abs{f(e)}^{p-2}\Delta(e) + \frac{1}{p2^p} \sum_e r(e)\left(\abs{f(e)}^{p-2}\Delta(e)^2 + \abs{\Delta(e)}^p\right)
\end{align*}
Here, the inequality is using Lemma B.3 in \cite{KPSW19}. Also, note that $\sE(\fv+\Delta) - \sE(\fv) \leq 0$, because we are considering a minimization problem.

Write $\Delta = \sum_C \Delta_C\one_C$, where $\Delta_C \in \R$ and the sum is over all the fundamental cycles of the spanning tree $T = T^t$. Here, $\one_C$ is the indicator vector of $C$.  This decomposition is possible because the set of cycles of a spanning tree is a basis for the space of circulations. In particular, since every non-tree edge is contained in exactly one fundamental cycle, this implies that $\Delta_C = \Delta(u,v)$, where $(u,v)$ is the non-tree edge that uniquely determines $C$. 

Consider the cycle-toggling update $\fv \gets \fv + \alpha_C\Delta_C\one_C$, where $\alpha_C > 0$. This update preserves the feasibility of $\fv$, because pushing a constant amount of flow around a cycle does not change the amount of flow entering or leaving any vertex. The change in the objective value after the update is
\begin{align*}
    \sE(\fv+\alpha_C\Delta_C\one_C) - \sE(\fv)
    &= \frac1p \sum_{e \in C} r(e)\left(\abs{f(e)+\alpha_C\Delta_C}^p - \abs{f(e)}^p \right) \\
    &\leq \frac1p \sum_{e \in E} r(e)\left(pf(e)\abs{f(e)}^{p-2}\alpha_C\Delta_C + p2^{p-1}\left(\abs{f(e)}^{p-2}\alpha_C^2\Delta_C^2 + \alpha_C^p\abs{\Delta_C}^p \right) \right) \\
    &= \sum_{e \in C} r(e)f(e)\abs{f(e)}^{p-2}\alpha_C\Delta_C + 2^{p-1} \sum_{e \in C}r(e) \left( \abs{f(e)}^{p-2}\alpha_C^2\Delta_C^2 + \alpha_C^p\abs{\Delta_C}^p \right)
\end{align*}
Here, the inequality is using Lemma B.2 in \cite{KPSW19}.
Note that the more negative this quantity is, the more progress the update step makes. Suppose that with probability $p(C)$, we choose cycle $C$ and push $\alpha_C\Delta_C$ units of flow along it. We set $p(C) = \frac{1/\alpha_C}{\tau}$, where $\tau = \sum_C \frac{1}{\alpha_C}$. Then the expected change in the objective value is
\begin{align*}
    &\E\left[\sE(\fv + \alpha_C\Delta_C\one_C) - \sE(\fv) \right] \\
    &\leq\sum_C p(C)\alpha_C\Delta_C \sum_{e \in C} r(e)f(e)\abs{f(e)}^{p-2} + 2^{p-1} \sum_C p(C) \sum_{e \in C}r(e) \left( \abs{f(e)}^{p-2}\alpha_C^2\Delta_C^2 + \alpha_C^p\abs{\Delta_C}^p \right) \\
    &= \frac{1}{\tau}\sum_C \Delta_C \sum_{e \in C} r(e)f(e)\abs{f(e)}^{p-2} + \frac{2^{p-1}}{\tau} \sum_C  \sum_{e \in C}r(e) \left( \abs{f(e)}^{p-2}\alpha_C\Delta_C^2 + \alpha_C^{p-1}\abs{\Delta_C}^p \right) 
\end{align*}
Let us compare $\E\left[\sE(\fv + \alpha_C\Delta_C\one_C) - \sE(\fv) \right]$ to $\sE(\fv^*) - \sE(\fv)$.

\textbf{First-order terms}:
\begin{itemize}
    \item $\sE(\fv^*) -\sE(\fv)$:
    \begin{align*}
        \sum_e r(e)f(e)\abs{f(e)}^{p-2} \Delta(e)
    \end{align*}
    \item $\E\left[\sE(\fv + \alpha_C\Delta_C\one_C) - \sE(\fv) \right]$:
    \begin{align*}
        \frac{1}{\tau} \sum_C \Delta_C \sum_{e \in C} r(e)f(e)\abs{f(e)}^{p-2} 
        &= \frac{1}{\tau} \sum_e r(e)f(e)\abs{f(e)}^{p-2}  \sum_{C: e \in C} \Delta_C\\
        &= \frac{1}{\tau} \sum_e r(e)f(e)\abs{f(e)}^{p-2}\Delta(e)
    \end{align*}
    Hence the first-order terms match exactly, up to the factor of $\frac{1}{\tau}$. 
\end{itemize}

\textbf{Higher-order terms}:
\begin{itemize}
    \item $\sE(\fv^*) -\sE(\fv)$:
    \begin{align*}
        \frac{1}{p2^p} \sum_e r(e)\left(\abs{f(e)}^{p-2}\Delta(e)^2 + \abs{\Delta(e)}^p\right)
    \end{align*}
    \item $\E\left[\sE(\fv + \alpha_C\Delta_C\one_C) - \sE(\fv) \right]$:
    \begin{align*}
       & \frac{2^{p-1}}{\tau} \sum_C  \sum_{e \in C}r(e) \left( \abs{f(e)}^{p-2}\alpha_C\Delta_C^2 + \alpha_C^{p-1}\abs{\Delta_C}^p \right)  
    \end{align*}
    For a fundamental cycle $C = C(u,v)$, where $(u,v)$ is a non-tree edge, let us choose
    \begin{equation}
        \alpha_C \leq \min \left\{ \frac{1}{p2^{2p-1}}\frac{r(u,v)\abs{f(u,v)}^{p-2}}{\sum_{e \in C}r(e) \abs{f(e)}^{p-2}}, \,\left(\frac{1}{p2^{2p-1}} \cdot \frac{r(u,v)}{\sum_{e \in C} r(e)}\right)^{\frac{1}{p-1}}\right\}
    \end{equation}
    Then 
    \begin{align*}
        2^{p-1} \sum_C \sum_{e \in C}r(e)\abs{f(e)}^{p-2}\alpha_C\Delta_C^2
        &\leq \frac{1}{p2^p}\sum_{C = C(u,v)}r(u,v) \abs{f(u,v)}^{p-2}\Delta_{C(u,v)}^2 \\
        &= \frac{1}{p2^p}\sum_{e\not\in T}r(e) \abs{f(e)}^{p-2}\Delta(e)^2
    \end{align*}
    and
    \begin{align*}
        2^{p-1}\sum_C\sum_{e \in C}r(e)\alpha_C^{p-1}\abs{\Delta_C}^p
        &\leq \frac{1}{p2^p} \sum_{C = C(u,v)} r(u,v)\abs{\Delta_C}^p \\
        &= \frac{1}{p2^p} \sum_{e\not\in T} r(e)\abs{\Delta(e)}^p.
    \end{align*}
\end{itemize}
Combining, we get that 
\begin{align*}
    \E\left[ \sE(\fv) - \sE(\fv + \alpha_C\Delta_C\one_C)\right]
    &\geq \frac{1}{\tau} \left(\sE(\fv) - \sE(\fv^*)\right).
\end{align*}
Finally, we need to relate this to the progress made by the algorithm. The algorithm samples a fundamental cycle $C$ with probability $P(C)$ and updates $\fv' \gets \fv + \Delta^t\one_C$, where $\Delta^t$ is the solution to 
$$\sum_{(k,l) \in \vec{C}} r(i,j)(f(i,j) + \Delta^t)\abs{f(i,j) + \Delta^t}^{p-2}= 0.$$ Observe that
$$\Delta^t = \arg\max_\delta \left\{\sE(\fv)-\sE(\fv+\delta\one_{C})\right\},$$
which can be seen by taking the derivative of $\sE(\fv)-\sE(\fv+\delta\one_{C})$ (note this is a concave function of $\delta$) with respect to $\delta$, and setting it to 0. Thus the next iterate $\fv'$ of the algorithm satisfies
$$\E\left[\sE(\fv) - \sE(\fv')\right] 
\geq  \E\left[ \sE(\fv) - \sE(\fv + \alpha_C\Delta_C\one_C)\right]
\geq \frac{1}{\tau} \left(\sE(\fv) - \sE(\fv^*)\right),$$
which is what we wanted to show.

It remains to analyze the value of $\tau$. 
We have
\begin{align*}
    \tau
    &= \sum_C \frac{1}{\alpha_C}\\
    &= \sum_{C=C(u,v)} \max\left\{ {p2^{2p-1}}\frac{\sum_{e \in C}r(e) \abs{f(e)}^{p-2}}{r(u,v)\abs{f(u,v)}^{p-2}}, \,\left({p2^{2p-1}} \cdot \frac{\sum_{e \in C} r(e)}{r(u,v)}\right)^{\frac{1}{p-1}}\right\} \\
    &\leq {p2^{2p-1}} \underbrace{\sum_{C=C(u,v)} \frac{\sum_{e \in C}r(e) \abs{f(e)}^{p-2}}{r(u,v)\abs{f(u,v)}^{p-2}}}_{A} + \left(p2^{2p-1}\right)^{\frac{1}{p-1}} \underbrace{\sum_{C=C(u,v)} \left( \frac{\sum_{e \in C} r(e)}{r(u,v)}\right)^{\frac{1}{p-1}}}_{B}
\end{align*}
 Note that $A$ can be rearranged as
\begin{align*}
    A &= \sum_{(u,v)\not\in E(T)}\left(1 +  \frac{1}{r(u,v)\abs{f(u,v)}^{p-2}}\cdot \sum_{e \in P_T(u,v)} r(e)\abs{f(e)}^{p-2}\right) \\
    &= (m-n+1) + \st_T\left(G, {r(e)\abs{f(e)}^{p-2}}\right) - (n-1) \\
    &= m - 2n + 2 + \st_T\left(G, {r(e)\abs{f(e)}^{p-2}}\right)
\end{align*}
Here, $\st_T(G, w)$ represents the stretch of the spanning tree $T$ in the graph $G$ with respect to the weights $w$. On the other hand, $B$ can be trivially upper-bounded by
\begin{align*}
    B \leq m (nR)^{\frac{1}{p-1}},
\end{align*}
where $R = \frac{\max_e r(e)}{\min_e r(e)}$. Therefore, if we choose $T$ at each iteration to be a low-stretch spanning tree with respect to the weights ${r(e)\abs{f(e)}^{p-2}}$, then we can make $A \leq \wtd{O}(m)$, and hence
\begin{equation*}
    \tau \leq p2^{2p-1} \cdot \wtd{O}(m) + \left(p2^{2p-1}\right)^{\frac{1}{p-1}} \cdot m(nR)^{\frac{1}{p-1}}.
\end{equation*}    
    Observe that since $p \geq 2$, we have $\left(p2^{2p-1}\right)^{\frac{1}{p-1}} = O(1)$.

    \end{proof}

    \cutiter*
    
    \begin{proof}[Proof of \Cref{thm:cut_iter}]
 Suppose $\xv = \xv^{t-1}$ is the iterate at the beginning of iteration $t$ of the algorithm. Let $\xv' = \xv^t$ be the iterate at the end of iteration $t$, and let $\xv^*$ be the optimal solution. Our goal will be to show that
\begin{equation*}
    \E[\sB(\xv') - \sB(\xv)] \geq \frac{1}{\tau} \left( \sB(\xv^*) - \sB(\xv)\right),
\end{equation*}
for $\tau = O\left(q 2^{2q-1} \cdot m\log n \log\log n +  nRm^{\frac{1}{q-1}}\right)$. If we can show this, then the same arguments as \Cref{lem:gap_decreases} and \Cref{cor:final_gap} will show that after $K = O\left(\tau\ln(\frac1\epsilon)\right)$ iterations, we will have a potential vector $\xv^K$ that satisfies $\E[\sB(\xv^K)] \geq (1-\epsilon)\sB(\xv^*)$. The remainder of the proof will be devoted to proving that $\E[\sB(\xv') - \sB(\xv)] \geq \frac{1}{\tau} \left( \sB(\xv^*) - \sB(\xv)\right).$

Let $T = T^t$ be the spanning tree chosen in iteration $t$ of the algorithm. Without loss of generality, assume that $x(r) = x^*(r) = 0$ where $r$ is the root of $T$. Define $\Delta = \xv^*- \xv$. Decompose
\begin{align*}
\Delta = \sum_{C} \Delta_C\one_C,
\end{align*}
where the sum is over all fundamental cuts of the tree.
Recall that by convention, we direct all tree edges toward the root, and for an edge $(u,v) \in E(T)$, the fundamental cut associated with $(u,v)$ is the cut $C(u,v)$ that consists of all the vertices in the component of $T - (u, v)$ on the same side as $u$. Then for an edge $(u, v) \in E(T)$, we have
\begin{align*}
\Delta_{C(u,v)} = \Delta(u) - \Delta(v).
\end{align*}

% Assume that $1 < p \leq 2$. The primal problem is
% \begin{align*}
%     \min \quad &\frac{1}{p}\sum_{e} r(e)\abs{f(e)}^p \\
%     \text{s.t.} \quad &Af = b
% \end{align*}
% and its dual is
% \begin{align*}
%     \max_x \quad b^T x - \frac{p-1}{p} \sum_{ij \in E} \left(\frac{\abs{x(i) - x(j)}^p}{r_{ij}}\right)^{\frac{1}{p-1}}
% \end{align*}
% For clarity, define $w(i,j) = (\frac{1}{r_{ij}})^{\frac{1}{p-1}}$ and let $q := \frac{p}{p-1}$. Then the dual is 
% \begin{align*}
%     \max_x \quad b^T x - \frac{1}{q} \sum_{ij \in E} w(i,j)\abs{x(i) - x(j)}^q.
% \end{align*}
% The details of how the dual is derived are in Appendix \ref{app:dual_deriv}.
% Let $\sB(x)$ be the dual objective function.
% Suppose our current iterate is $x$, and let $\Delta := x^* - x$. Then
Recalling that $\sB(\,\cdot\,)$ is the dual objective function, we have
\begin{align*}
    \sB(\xv + \Delta) - \sB(\xv) 
    &= \bv^T\Delta  - \frac{1}{q}  \sum_{(i,j) \in \vec{E}} w(i,j)\left[ \abs{x(i)-x(j) +\Delta(i)-\Delta(j)}^q - \abs{x(i) - x(j)}^q\right] \\
    &\leq \bv^T \Delta - \sum_{(i,j) \in \vec{E}} w(i,j) (\Delta(i) - \Delta(j))(x(i) - x(j))\abs{x(i) - x(j)}^{q-2} \\
    & \qquad\qquad - \frac{1}{q}\sum_{(i,j) \in \vec{E}} w(i,j) \cdot 2^{-q} \cdot \left[(\Delta(i) - \Delta(j))^2\abs{x(i) - x(j)}^{q-2} + \abs{\Delta(i) - \Delta(j)}^q \right]
    % &= b^T\Delta - \sum_{ij \in E} w(i,j) (\Delta(i) - \Delta(j))(x(i) - x(j))^{q-1} - \frac{1}{q}\sum_{ij \in E} w(i,j) \sum_{k=2}^{q-1} \binom{q}{k} (\Delta(i) - \Delta(j))^k (x(i) - x(j))^{q-k} 
\end{align*}
To go from the first to the second line, we used the inequality $\abs{1 + \delta}^q - 1 - q\delta \geq 2^{-q}\left(\delta^2 + \abs{\delta}^q\right)$, which holds for all $\delta \in \R$ and $q \geq 2$ (Lemma B.3 in \cite{KPSW19}).

On the other hand, consider taking some fundamental cut $C=C(u,v)$ of $T$ and adding $\alpha_C\Delta_C$ to the potential of every vertex of $C$ for some $\alpha_C > 0$. Then
\begin{align*}
    \sB(\xv + \alpha_C\Delta_C\one_C) - \sB(\xv)
    &= \alpha_C\Delta_C\one_C^T\bv - \frac{1}{q} \sum_{ij \in \delta(C)} w(i,j)\left[ \abs{x(i) - x(j) + \alpha_C\Delta_C}^q - \abs{x(i) - x(j)}^q\right] \\
    &\geq \alpha_C\Delta_C\one_C^T\bv - \sum_{ij \in \delta(C)} w(i,j) \alpha_C\Delta_C(x(i)-x(j))\abs{x(i)-x(j)}^{q-2} \\
    &\qquad\qquad - \frac{1}{q} \sum_{ij \in \delta(C)} w(i,j) \cdot q2^{q-1}\left[ \alpha_C^2\Delta_C^2 \abs{x(i) - x(j)}^{q-2} +  \alpha_C^q\abs{\Delta_C}^q\right]
    % & \qquad\qquad- \frac{1}{q} \sum_{ij \in \delta(C)} w(i,j)\left[ \sum_{k=2}^q \binom{q}{k} (\alpha_C\Delta_C)^k(x(i)-x(j))^{q-k}\right] \\
\end{align*}
Here, we used the inequality $\abs{1+\delta}^q - 1 - q\delta \leq q2^{q-1}(\delta^2 + \abs{\delta}^q)$, which holds for all $\delta \in \R$ and $q \geq 1$ (Lemma B.2 in \cite{KPSW19}). Suppose that with probability $p(C)$, we choose cut $C$ and add $\alpha_C\Delta_C$ to the potential of every vertex in $C$. Then the expected increase in the dual objective is
\begin{align*}
   \E[ \sB(\xv + \alpha_C\Delta_C\one_C) - \sB(\xv)]
   &\geq \sum_C p(C)\alpha_C\Delta_C\one_C^Tb - \sum_Cp(C)\sum_{ij \in \delta(C)} w(i,j) \alpha_C\Delta_C(x(i)-x(j))\abs{x(i)-x(j)}^{q-2} \\
   & \qquad - \frac{1}{q} \sum_Cp(C) \sum_{ij \in \delta(C)} w(i,j) \cdot q2^{q-1}\left[ \alpha_C^2\Delta_C^2 \abs{x(i) - x(j)}^{q-2} +  \alpha_C^q\abs{\Delta_C}^q\right]
\end{align*}
Let us choose $p(C) = \frac{1/\alpha_C}{\tau}$, where $\tau = \sum_C \frac{1}{\alpha_C}$. Then
\begin{align*}
   \E[ \sB(\xv + \alpha_C\Delta_C\one_C) - \sB(\xv)]
   &\geq \frac{1}{\tau}\sum_C \Delta_C\one_C^Tb - \frac{1}{\tau}\sum_C \sum_{ij \in \delta(C)} w(i,j) \Delta_C(x(i)-x(j))\abs{x(i)-x(j)}^{q-2} \\
   & \qquad - \frac{1}{q}\cdot\frac{1}{\tau} \sum_C \sum_{ij \in \delta(C)} w(i,j) \cdot q2^{q-1}\left[ \alpha_C\Delta_C^2 \abs{x(i) - x(j)}^{q-2} +  \alpha_C^{q-1}\abs{\Delta_C}^q\right]
\end{align*}
We want to show that $\E[ \sB(\xv + \alpha_C\Delta_C\one_C) - \sB(\xv)]$ is greater than some fraction of $\sB(\xv + \Delta) - \sB(\xv) $. To do this, we compare the two expressions term by term.

\textbf{First-order terms:}
\begin{itemize}
    \item $\sB(\xv+ \Delta) - \sB(\xv)$: 
    \begin{align*}
        \bv^T\Delta - \sum_{ij \in E} w(i,j)(\Delta(i) - \Delta(j))(x(i) - x(j))\abs{x(i) - x(j)}^{q-2}.
    \end{align*}
    \item $\E[ \sB(\xv + \alpha_C\Delta_C\one_C) - \sB(\xv)]$:
    \begin{align*}
       & \frac{1}{\tau}\sum_C \Delta_C\one_C^T\bv - \frac{1}{\tau}\sum_C \sum_{ij \in \delta(C)} w(i,j) \Delta_C(x(i) - x(j))\abs{x(i)-x(j)}^{q-2} \\
       &=\frac{1}{\tau} \Delta^T \bv - \frac{1}{\tau} \sum_{ij \in E}w(i,j)(x(i) - x(j))\abs{x(i) - x(j)}^{q-2} \sum_{uv \in P_T(ij)} \Delta_{C(u,v)} \\
       &=\frac{1}{\tau} \Delta^T \bv - \frac{1}{\tau} \sum_{ij \in E}w(i,j)(x(i) - x(j))\abs{x(i) - x(j)}^{q-2} \sum_{uv \in P_T(ij)} (\Delta(u) - \Delta(v)) \\
       &=\frac{1}{\tau} \Delta^T \bv - \frac{1}{\tau} \sum_{ij \in E}w(i,j)(\Delta(i) - \Delta(j))(x(i) - x(j))\abs{x(i) - x(j)}^{q-2}
    \end{align*}
\end{itemize}
Thus the first-order terms match exactly up to the factor of $\frac{1}{\tau}$.

\textbf{Higher-order terms:} (Loss).
\begin{itemize}
    \item $\sB(\xv+ \Delta) - \sB(\xv)$: 
    \begin{align*}
        \frac{1}{q}\sum_{ij \in E} w(i,j) \cdot 2^{-q} \cdot \left[(\Delta(i) - \Delta(j))^2\abs{x(i) - x(j)}^{q-2} + \abs{\Delta(i) - \Delta(j)}^q \right]
    \end{align*}
    \item  $\E[ \sB(\xv + \alpha_C\Delta_C\one_C) - \sB(\xv)]$:
    \begin{align*}
       & \frac{1}{q}\cdot\frac{1}{\tau} \sum_C \sum_{ij \in \delta(C)} w(i,j) \cdot q2^{q-1}\left[ \alpha_C\Delta_C^2 \abs{x(i) - x(j)}^{q-2} +  \alpha_C^{q-1}\abs{\Delta_C}^q\right] 
    \end{align*}
    For a cut $C = C(u,v)$, let us choose
    \begin{align*}
        \alpha_C \leq \min\left\{ \frac{w(u,v)\abs{x(i) - x(j)}^{q-2}}{\sum_{ij \in \delta(C)} w(i,j)\abs{x(i) - x(j)}^{q-2}} \cdot \frac{1}{q}\cdot2^{1-2q}, \;\left(\frac{w(u,v)}{\sum_{ij \in \delta(C)}w(i,j)}\right)^{\frac{1}{q-1}}\left(\frac{1}{q}\right)^{\frac{1}{q-1}} 2^{\frac{1-2q}{q-1}}\right\}.
    \end{align*}
    Then 
    \begin{align*}
        &\sum_C \sum_{ij \in \delta(C)} w(i,j) \cdot q2^{q-1}\left[ \alpha_C\Delta_C^2 \abs{x(i) - x(j)}^{q-2}\right] \\
        &= \sum_C q2^{q-1}\alpha_C\Delta_C^2\sum_{ij \in \delta(C)} w(i,j)  \abs{x(i) - x(j)}^{q-2} \\
        &\leq \sum_{C = C(u,v)} 2^{-q} \Delta_C^2 w(u,v) \abs{x(u) - x(v)}^{q-2} \\
         &= \sum_{C = C(u,v)} 2^{-q} w(u,v)(\Delta(u) - \Delta(v))^2  \abs{x(u) - x(v)}^{q-2}
    \end{align*}
    and 
    \begin{align*}
        &\sum_C \sum_{ij \in \delta(C)} w(i,j) \cdot q2^{q-1}  \alpha_C^{q-1}\abs{\Delta_C}^q \\
        &= \sum_C q2^{q-1}\alpha_C^{q-1}\abs{\Delta_C}^q \sum_{ij \in \delta(C)} w(i,j) \\
        &\leq \sum_{C=C(u,v)} 2^{-q} \abs{\Delta_C}^qw(u,v) \\
        &= \sum_{C = C(u,v)} 2^{-q} w(u,v) \abs{\Delta(u) - \Delta(v)}^q.
    \end{align*}

\end{itemize}

\textbf{Putting things together.}
    Combining, we get
    \begin{align*}
        \E[ \sB(\xv + \alpha_C\Delta_C\one_C) - \sB(\xv)] &\geq 
        \frac{1}{\tau} \Delta^T\bv - \frac{1}{\tau} \sum_{ij \in E}w(i,j)(\Delta(i) - \Delta(j))(x(i) - x(j))\abs{x(i)-x(j)}^{q-2} \\
        & -\frac{1}{q}\cdot\frac{1}{\tau}\sum_{C = C(u,v)} 2^{-q} w(u,v)\left[(\Delta(u) - \Delta(v))^2  \abs{x(u) - x(v)}^{q-2} + w(u,v) \abs{\Delta(u) - \Delta(v)}^q\right] \\
        &\geq \frac{1}{\tau} \left(\sB(\xv^*) - \sB(\xv)\right).
    \end{align*}
    
    Finally, we need to relate this to the progress made by the algorithm. The algorithm samples a fundamental cut $C$ with probability $P(C)$ and updates $\xv' \gets \xv + \Delta^t\one_C$, where $\Delta^t$ is the solution to 
$$\sum_{k\in C^t, l\not\in C^t, kl \in E } w(k,l)(x^{t-1}(k) - x^{t-1}(l) +\Delta^t)\abs{x^{t-1}(k)-x^{t-1}(l)+\Delta^t}^{q-2} = b(C).$$
Observe that
$$\Delta^t = 
\arg\max_\delta \left\{\sB(\xv+\delta\one_{C}) - \sB(\xv)\right\},
$$
which can be seen by taking the derivative of $\sB(\xv+\delta\one_{C}) - \sB(\xv)$ (note this is a concave function of $\delta$) with respect to $\delta$, and setting it to 0. Thus the next iterate $\xv'$ of the algorithm satisfies
$$\E\left[\sB(\xv') - \sB(\xv)\right] 
\geq  \E[ \sB(\xv + \alpha_C\Delta_C\one_C) - \sB(\xv)]
\geq \frac{1}{\tau} \left(\sB(\xv^*) - \sB(\xv)\right),$$
which is what we wanted to show.
    
\textbf{What is $\tau$?}
We have
\begin{align*}
    \tau &= \sum_C \frac{1}{\alpha_C} \\
    &= \sum_{C = C(u,v)} \max\left\{ \frac{\sum_{ij \in \delta(C)} w(i,j)\abs{x(i) - x(j)}^{q-2}}{w(u,v)\abs{x(u) - x(v)}^{q-2}} \cdot q\cdot2^{2q-1}, \;\left(\frac{\sum_{ij \in \delta(C)}w(i,j)}{w(u,v)}\right)^{\frac{1}{q-1}}q^{\frac{1}{q-1}} \cdot2^{\frac{2q-1}{q-1}}\right\} \\[5pt]
    &\leq q\cdot2^{2q-1}\underbrace{\sum_{C = C(u,v)}  \frac{\sum_{ij \in \delta(C)} w(i,j)\abs{x(i) - x(j)}^{q-2}}{w(u,v)\abs{x(u) - x(v)}^{q-2}}}_{A}  + q^{\frac{1}{q-1}} \cdot2^{\frac{2q-1}{q-1}} \underbrace{\sum_{C = C(u,v)} \left(\frac{\sum_{ij \in \delta(C)}w(i,j)}{w(u,v)}\right)^{\frac{1}{q-1}}}_{B} 
\end{align*}
Note that 
\begin{align*}
    A &= \sum_{ij \in E} w(i,j)\abs{x(i) - x(j)}^{q-2}\sum_{uv \in P_T(ij)} \frac{1}{w(u,v)\abs{x(u) - x(v)}^{q-2}} \\
    &= \mathrm{st}_T\left(G,\; \frac{1}{w(i,j)\abs{x(i) - x(j)}^{q-2}}\right).
\end{align*}
Thus, if we pick $T$ to be a low-stretch spanning tree with respect to the weights $\frac{1}{w(i,j)(x(i) - x(j))^{q-2}}$, we can make $A \leq m\log n \log\log n$. 

On the other hand, letting $W = \frac{\max w(i,j)}{\min w(i,j)}$, we have the trivial upper bound that  $B \leq n (mW)^{\frac{1}{q-1}}$. Combining, we get
\begin{align*}
    \tau \leq q\cdot 2^{2q-1} \cdot m\log n \log\log n+ q^{\frac{1}{q-1}} \cdot2^{\frac{2q-1}{q-1}} \cdot n(mW)^{\frac{1}{q-1}}.
\end{align*}
Finally, recalling that  $w(i,j) = \left(\frac{1}{r_{ij}}\right)^{\frac{1}{p-1}} = \left(\frac{1}{r_{ij}}\right)^{q-1}$ gives $W^{\frac{1}{q-1}}$, we get that
$$
W = \frac{\max (1/r_{ij})^{q-1}}{\min (1/r_{ij})^{q-1}} = \left(\frac{\max r_{ij}}{\min r_{ij}}\right)^{q-1} 
= R^{q-1}.
$$
Combining this with the fact that $q^{\frac{1}{q-1}} \cdot2^{\frac{2q-1}{q-1}} \leq O(1)$ (since $q \geq 2$) gives the theorem statement.
\end{proof}

    \flowconvert*
    
    \begin{proof}
Let $\fv$ be the potential-defined flow with respect to $\xv$. That is, 
$$f(i,j) = w(i,j)(x(i)-x(j))\abs{x(i)-x(j)}^{q-2}$$ for all $(i,j) \in \vec{E}$. Since $\xv$ is not an optimal dual solution, $\fv$ is not a feasible $\bv$-flow. Route the residual supplies $\bv - \Am\fv$ along the edges of any spanning tree $T$; call the resulting flow $\tilde{\fv}$. Then $\fv_{T, \xv} = \fv + \tilde{\fv}$, and this is a feasible flow. Our goal is to show that $\sE(\fv + \tilde{\fv}) \leq (1+\epsilon)\sE(\fv^*)$.

The proof proceeds in 5 steps.
\begin{enumerate}
    \item Show that if $\sB(\xv) \geq (1-\epsilon')\sB(\xv^*)$ then $\xv$ and $\xv^*$ are close (in an appropriate metric).
    \item Show that $\fv$ and $\fv^*$ are close (again in an appropriate metric).
    \item Show that $\sE(\fv) \leq \sE(\fv^*)(1+\delta)^p$ for a suitable choice of $\delta>0$, 
    \item Show that the residuals $\norm{\bv - \Am\fv}$ are small, which implies that $\sE(\tilde{\fv})$ is small.
    \item Show that $\sE(\fv + \tilde{\fv}) \leq \sE(\fv^*)(1+\epsilon)$.
\end{enumerate}
\textbf{Step 1:} We use the following inequality, which is proved in the proof of \Cref{thm:cut_iter}. It holds for all possible values of $\xv$ and $\Delta$.
\begin{align*}
    \sB(\xv + \Delta) - \sB(\xv) 
    &\leq \bv^T \Delta - \sum_{(i,j) \in \vec{E}} w(i,j) (\Delta(i) - \Delta(j))(x(i) - x(j))\abs{x(i) - x(j)}^{q-2} \\
    & \qquad\qquad - \frac{1}{q}\sum_{(i,j) \in \vec{E}} w(i,j) \cdot 2^{-q} \cdot \left[(\Delta(i) - \Delta(j))^2\abs{x(i) - x(j)}^{q-2} + \abs{\Delta(i) - \Delta(j)}^q \right]
\end{align*}
Substituting $\xv \gets \xv^*$ and $\Delta \gets \xv-\xv^*$, we get
\begin{align*}
    \sB(\xv) - \sB(\xv^*) 
    &\leq \bv^T \Delta - \sum_{(i,j) \in \vec{E}} w(i,j) (\Delta(i) - \Delta(j))(x^*(i) - x^*(j))\abs{x^*(i) - x^*(j)}^{q-2} \\
    & \qquad\qquad - \frac{1}{q}\sum_{(i,j) \in \vec{E}} w(i,j) \cdot 2^{-q} \cdot \left[(\Delta(i) - \Delta(j))^2\abs{x^*(i) - x^*(j)}^{q-2} + \abs{\Delta(i) - \Delta(j)}^q \right]
\end{align*}
Since $\xv^*$ is an optimal dual solution, we know $\nabla \sB(\xv^*) = 0$. This implies that 
$$b(i) = \sum_{j: ij \in E} w(i,j)(x^*(i) - x^*(j))\abs{x^*(i) - x^*(j)}^{q-2} \quad \forall \; i \in V,$$
which gives that for any $\Delta \in \R^V$, we have
$$
\bv^T \Delta = \sum_{(i,j) \in \vec{E}} w(i,j) (\Delta(i) - \Delta(j))(x^*(i) - x^*(j))\abs{x^*(i) - x^*(j)}^{q-2}.
$$
Plugging this into the previously displayed inequality, we get
\begin{align*}
\sB(\xv^*) - \sB(\xv) &\geq \frac{1}{q}\sum_{(i,j) \in \vec{E}} w(i,j) \cdot 2^{-q} \cdot \left[(\Delta(i) - \Delta(j))^2\abs{x(i) - x(j)}^{q-2} + \abs{\Delta(i) - \Delta(j)}^q \right] \\
&\geq \frac{1}{q} \cdot 2^{-q}\sum_{(i,j) \in \vec{E}} w(i,j) \cdot  \abs{\Delta(i) - \Delta(j)}^q.
\end{align*}
Thus,
\begin{equation}
\label{eq:xerr}
    \sum_{(i,j) \in \vec{E}} w(i,j) \cdot  \abs{\Delta(i) - \Delta(j)}^q 
    \leq q2^q \left(\sB(\xv^*) - \sB(\xv)\right) 
    \leq q2^q \epsilon' \sB(\xv^*).
\end{equation}
It follows that $\xv$ and $\xv^*$ are ``close'' in the sense of the above formula.

\textbf{Step 2:}  We now use the fact that $\xv$ and $\xv^*$ are ``close" (in the sense of the bound derived in Step 1), to show that $\fv$ and $\fv^*$ are close. Specifically, we will derive an upper bound on $\abs{f(i,j) - f^*(i,j)}$ for all edges $(i,j)$. Recall that $\fv$ is defined via
$$f(i,j) = w(i,j)(x(i)-x(j))\abs{x(i)-x(j)}^{q-2},$$
whereas
$$f^*(i,j) = w(i,j)(x^*(i) - x^*(j))\abs{x^*(i)-x^*(j)}^{q-2}.$$
Inequality \ref{eq:xerr} implies that for every edge $ij \in E$, we have
$$w(i,j)\abs{x(i)-x(j) - (x^*(i) - x^*(j))}^q \leq \epsilon'q2^q\sB(\xv^*).$$
Comparing $f(i,j)$ and $f^*(i,j)$, we have
\begin{align*}
    \abs{f(i,j)-f^*(i,j)}
    &= w(i,j)\abs{(x(i)-x(j))\abs{x(i)-x(j)}^{q-2} - (x^*(i) - x^*(j))\abs{x^*(i)-x^*(j)}^{q-2}} \\
    &\leq w(i,j)(q-1)\max\left\{\abs{x(i)-x(j)}^{q-2}, \abs{x^*(i)-x^*(j)}^{q-2}\right\}\abs{x(i)-x(j) - (x^*(i) - x^*(j))} \\
    &\leq w(i,j)(q-1)\max\left\{\abs{x(i)-x(j)}^{q-2}, \abs{x^*(i)-x^*(j)}^{q-2}\right\} \left(\frac{\epsilon'q2^q\sB(\xv^*)}{w(i,j)}\right)^{\frac{1}{q}}.
\end{align*}
Here, the second line is using the inequality $\abs{a\abs{a}^{k} -  b\abs{b}^{k}} \leq (k+1)\max\left\{\abs{a}^{k}, \abs{b}^k\right\}\cdot\abs{a-b}$, which is proved as Proposition \ref{prop:abs_diff} in Appendix \ref{sec:flow_convert_lems}. Also, by the triangle inequality, we have
$$\abs{x(i) - x(j)} \leq \abs{x^*(i)-x^*(j)} + \abs{x(i)-x(j) - (x^*(i) - x^*(j))} \leq \abs{x^*(i)-x^*(j)} + \left(\frac{\epsilon'q2^q\sB(\xv^*)}{w(i,j)}\right)^{\frac{1}{q}}.$$
Furthermore, we have by Proposition \ref{prop:flow_upperbound} in Appendix \ref{sec:flow_convert_lems} that $\abs{x^*(i) - x^*(j)} \leq r(i,j)(nR)^{\frac{p-1}{p}}\norm{\bv}_1^{p-1}$. Plugging these inequalities back into what we have above, we get
\begin{align*}
    \abs{f(i,j)-f^*(i,j)}
    &\leq w(i,j)(q-1)\left\{ r(i,j)(nR)^{\frac{p-1}{p}}\norm{\bv}_1^{p-1}+ \left(\frac{\epsilon'q2^q\sB(\xv^*)}{w(i,j)}\right)^{\frac{1}{q}}\right\}^{q-2} \left(\frac{\epsilon'q2^q\sB(\xv^*)}{w(i,j)}\right)^{\frac{1}{q}}.
\end{align*}
From now on, let us denote $\alpha := \norm{\fv - \fv^*}_\infty$. In the above calculations, we derived an upper bound on $\alpha$.

\textbf{Step 3:} 
Now we show that because $\fv$ and $\fv^*$ are close,  $\sE(\fv)$ cannot be much larger than $\sE(\fv^*)$. Specifically, we will prove that $\sE(\fv) \leq \sE(\fv^*)(1+\delta)^p$ for a suitable $\delta > 0$. 
\begin{align*}
    \sE(\fv) &= \frac{1}{p} \sum_{e} r(e) \abs{f(e)}^p \\
    &= \frac{1}{p} \sum_e r(e) \abs{f(e) - f^*(e) + f^*(e)}^p \\
    &\leq  \left[\left(\frac{1}{p}\sum_e r(e) \abs{f(e) - f^*(e)}^p\right)^{\frac{1}{p}} + \left(\frac{1}{p}\sum_e r(e)\abs{f^*(e)}^p\right)^{\frac{1}{p}} \right]^p \quad \text{(Proposition \ref{prop:ener_add} in Appendix \ref{sec:flow_convert_lems})}\\
    &\leq \left[\left(\frac{1}{p}m\norm{\rv}_\infty \alpha^p\right)^{\frac{1}{p}} + \sE(\fv^*)^{\frac{1}{p}} \right]^p \\
    &= \sE(\fv^*)\left(1 + \left(\frac{m\norm{\rv}_\infty \alpha^p}{p\sE(\fv^*)}\right)^{\frac{1}{p}}  \right)^p \\
    &\leq \sE(\fv^*)\left(1 + \left(\frac{m\norm{\rv}_\infty \alpha^pn^p}{\norm{\rv}_{-\infty} \norm{\bv}_\infty^p}\right)^{\frac{1}{p}}  \right)^p \quad \text{(Proposition \ref{prop:ener_lower_bound} in Appendix \ref{sec:flow_convert_lems})} \\
    &= \sE(\fv^*)
    \left(1 + 
    \left(mR\right)^{\frac{1}{p}} 
    \cdot \frac{\alpha n}{\norm{\bv}_\infty}
    \right)^p.
\end{align*}

\textbf{Step 4:} We will now show that the residual flow $\sE(\tilde{\fv})$ has small energy. Since $\alpha = \norm{\fv - \fv^*}_\infty$, this implies $\norm{\bv - \Am\fv}_\infty \leq \alpha n$, because each vertex has at most $n$ edges incident to it, and the difference between the value of $\fv$ and $\fv^*$ on an edge is at most $\alpha$. Recall that $\tilde{\fv}$ is a flow that routes the residual supplies $\bv - \Am\fv$ on some spanning tree $T$. Thus $\abs{\tilde{f}(i,j)} = \sum_{k \in C(i,j)} (\bv - \Am\fv)(k) \leq \norm{\bv - \Am\fv}_1 $ (where $C(i,j)$ is the fundamental cut of $T$ determined by $(i,j)$.) This implies
\begin{align*}
    \sE(\tilde{\fv})
    &= \frac{1}{p} \sum_{e \in T} r(e)\abs{\tilde{f}(e)}^p \\
    &\leq \frac{1}{p}(n-1) \norm{\rv}_\infty \norm{\bv - \Am\fv}_1^p \\
    &\leq \frac{1}{p}n \norm{\rv}_\infty \left(n\norm{\bv - \Am\fv}_\infty\right)^p \\
    &\leq \frac{1}{p}n \norm{\rv}_\infty \left(\alpha n^2\right)^p \\
    &\leq \sE(\fv^*) \cdot \frac{n \norm{\rv}_\infty \left(\alpha n^2\right)^pn^p}{\norm{\rv}_{-\infty}\norm{\bv}_\infty^p} \quad \text{(Proposition \ref{prop:ener_lower_bound} in Appendix \ref{sec:flow_convert_lems})}\\
    &= \sE(\fv^*) \cdot nR \cdot \frac{\left(\alpha n^3\right)^p}{\norm{\bv}_\infty^p}.
\end{align*}

\textbf{Step 5:} Finally, we now show that $\sE(\fv + \tilde{\fv})$ is not much larger than $\sE(\fv^*)$. 
\begin{align*}
    \sE(\fv + \tilde{\fv})
    &\leq \left[ \sE(\fv)^{\frac1p} + \sE(\tilde{\fv})^{\frac1p}\right]^p \quad \text{(Proposition \ref{prop:ener_add} in Appendix \ref{sec:flow_convert_lems})}\\
    &\leq 
    \left[
    \sE(\fv^*)^{\frac1p}\left(1 + 
    \left(mR\right)^{\frac{1}{p}} 
    \cdot \frac{\alpha n}{\norm{\bv}_\infty}
    \right)
    + 
    \sE(\fv^*)^{\frac1p}
    (nR)^{\frac1p} \cdot \frac{ \left(\alpha n^3\right)}{\norm{\bv}_\infty}
    \right]^p \\
    &\leq \sE(\fv^*)
    \left[
    1
     + 
    2\left(mR\right)^{\frac{1}{p}} 
    \cdot \frac{\alpha n^3}{\norm{\bv}_\infty}
    \right]^p 
\end{align*}
Suppose $\alpha$ is small enough so that 
$$ 2\left(mR\right)^{\frac{1}{p}} 
    \cdot \frac{\alpha n^3}{\norm{\bv}_\infty}
    \leq \min\{\epsilon/3, 1\}.
    $$
    Then we would have 
    \begin{align*}
    \sE(\fv + \tilde{\fv}) 
   & \leq 
    \sE(\fv^*)
    \left[
    1
     + 
    2\left(mR\right)^{\frac{1}{p}} 
    \cdot \frac{\alpha n^3}{\norm{\bv}_\infty}
    \right]^p  \\
   & \leq  \sE(\fv^*)
    \left[
    1
     + 
    2\left(mR\right)^{\frac{1}{p}} 
    \cdot \frac{\alpha n^3}{\norm{\bv}_\infty}
    \right]^2 \\
    &\leq \sE(\fv^*)\left[
    1
     + 
    6\left(mR\right)^{\frac{1}{p}} 
    \cdot \frac{\alpha n^3}{\norm{\bv}_\infty}
    \right] \\
    &\leq \sE(\fv^*)\left[
    1+ \epsilon
    \right].
    \end{align*}
Thus to obtain $\sE(\fv + \tilde{\fv}) \leq \sE(\fv^*)$, it suffices to make $\alpha$ small enough so that $ 2\left(mR\right)^{\frac{1}{p}} 
    \cdot \frac{\alpha n^3}{\norm{\bv}_\infty}
    \leq \min\{\epsilon/3, 1\}.$
    
    Rearranging for $\alpha$, the previous inequality is equivalent to
    $$\alpha
    \leq 
    \frac{\min\{\epsilon/3, 1\}{\norm{\bv}_\infty}}{2n^3(mR)^{\frac{1}{p}}}.
    $$
    Recalling the upper bound on $\alpha$ obtained at the end of Step 2, it suffices to have
     \begin{equation}
     %\label{eq:flow_convert_want}
     \max_{ij}
     \left\{
     w(i,j)(q-1)\left\{ r(i,j)(nR)^{\frac{p-1}{p}}\norm{\bv}_1^{p-1}+ \left(\frac{\epsilon'q2^q\sB(\xv^*)}{w(i,j)}\right)^{\frac{1}{q}}\right\}^{q-2} \left(\frac{\epsilon'q2^q\sB(\xv^*)}{w(i,j)}\right)^{\frac{1}{q}}
     \right\}
     \leq 
     \frac{\min\{\epsilon/3, 1\}{\norm{\bv}_\infty}}{2n^3(mR)^{\frac{1}{p}}}.
     \end{equation}

     For clarity, define $\delta(i,j) =\left(\frac{\epsilon'q2^q\sB(\xv^*)}{w(i,j)}\right)^{\frac{1}{q}} $. Then the above inequality can be rewritten as 
     \begin{equation}
     \label{eq:flow_convert_want}
     \max_{ij}
     \left\{
     w(i,j)(q-1)\left\{ r(i,j)(nR)^{\frac{p-1}{p}}\norm{\bv}_1^{p-1}+ \delta(i,j)\right\}^{q-2} \delta(i,j)
     \right\}
     \leq 
     \frac{\min\{\epsilon/3, 1\}{\norm{\bv}_\infty}}{2n^3(mR)^{\frac{1}{p}}}.
     \end{equation}
     Next we will try to simplify the left-hand side of (\ref{eq:flow_convert_want}). First we will choose $\epsilon'$ small enough so that $\delta(i,j) \leq r(i,j)(nR)^{\frac{p-1}{p}}\norm{\bv}_1^{p-1}$. This will allow us to replace the sum of two terms on the LHS that is raised to the $(q-2)$th power with $2r(i,j)(nR)^{\frac{p-1}{p}}\norm{\bv}_1^{p-1}$. 
     Note that we have the following upper bounds for $\delta(i,j)$:
     \begin{align*}
         \delta(i,j)
         &= \left(\frac{\epsilon'q2^q\sB(\xv^*)}{w(i,j)}\right)^{\frac{1}{q}} \\
         &\leq \left(\frac{\epsilon'q2^q\frac{1}{p}n\norm{\rv}_\infty\norm{\bv}_1^p}{w(i,j)}\right)^{\frac{1}{q}} \quad \text{(Proposition \ref{prop:flow_upperbound} in Appendix \ref{sec:flow_convert_lems})} \\
         &\leq \left(\frac{\epsilon'q2^qn\norm{\rv}_\infty\norm{\bv}_1^p}{w(i,j)}\right)^{\frac{1}{q}} 
     \end{align*}
     
      For $\delta(i,j) \leq r(i,j)(nR)^{\frac{p-1}{p}}\norm{\bv}_1^{p-1}$ to hold, using the upper bound on $\delta(i,j)$ derived above it suffices to make it so that
      \begin{align*}
          \left(\frac{\epsilon'q2^qn\norm{\rv}_\infty\norm{\bv}_1^p}{w(i,j)}\right)^{\frac{1}{q}} \leq r(i,j)(nR)^{\frac{p-1}{p}}\norm{\bv}_1^{p-1},
      \end{align*}
      or in other words, 
      \begin{align*}
          \epsilon'
          &\leq \frac{r(i,j)^q(nR)^{q\cdot\frac{p-1}{p}}\norm{\bv}_1^{q(p-1)}w(i,j)}{q2^qn\norm{\rv}_\infty\norm{\bv}_1^p} \\
          &= \frac{r(i,j)(nR)\norm{\bv}_1^{p}}{q2^qn\norm{\rv}_\infty\norm{\bv}_1^p} \\
          &=  \frac{r(i,j)R}{q2^q\norm{\rv}_\infty}.
      \end{align*}
      Since $r(i,j)R \geq \norm{\rv}_\infty$, it suffices to choose $\epsilon' \leq \frac{1}{q2^q}$. Assuming we choose $\epsilon' \leq \frac{1}{q2^q}$, and recalling that $w(i,j) = r(i,j)^{-\frac{1}{p-1}} = r(i,j)^{-(q-1)}$, we then have
      \begin{align*}
           &\max_{ij}
     \left\{
     w(i,j)(q-1)\left\{ r(i,j)(nR)^{\frac{p-1}{p}}\norm{\bv}_1^{p-1}+ \delta(i,j)\right\}^{q-2} \delta(i,j)
     \right\} \\
     &\leq \max_{ij}
     \left\{
     w(i,j)q2^{q-2}\left\{ r(i,j)(nR)^{\frac{p-1}{p}}\norm{\bv}_1^{p-1} \right\}^{q-2} \delta(i,j)
     \right\} \\
     &\leq q2^q\max_{ij}
     \left\{
     w(i,j) r(i,j)^{q-2}(nR)^{(q-2)\frac{p-1}{p}}\norm{\bv}_1^{(q-2)(p-1)} \delta(i,j)
     \right\} \\
     &= q2^q(nR)^{(q-2)\frac{p-1}{p}}\norm{\bv}_1^{(q-2)(p-1)}\max_{ij}
     \left\{
     w(i,j) r(i,j)^{q-2} \delta(i,j)
     \right\} \\
     &= q2^q(nR)^{\frac{2-p}{p}}\norm{\bv}_1^{2-p}\max_{ij}
     \left\{
     \frac{ \delta(i,j)}{r(i,j)}
     \right\} \\
     &\leq q2^q(nR)^{\frac{2-p}{p}}\norm{\bv}_1^{2-p}\max_{ij}
     \left\{
     \left(\frac{\epsilon'q2^qn\norm{\rv}_\infty\norm{\bv}_1^p}{w(i,j)r(i,j)^q}\right)^{\frac{1}{q}}
     \right\} \\
     &= q2^q(nR)^{\frac{2-p}{p}}\norm{\bv}_1^{2-p}\max_{ij}
     \left\{
     \left(\frac{\epsilon'q2^qn\norm{\rv}_\infty\norm{\bv}_1^p}{r(i,j)}\right)^{\frac{1}{q}}
     \right\} \\
     &\leq  q2^q(nR)^{\frac{2-p}{p}}\norm{\bv}_1^{2-p}
     \left(
     \epsilon'q2^qnR\norm{\bv}_1^p
     \right)^{\frac{1}{q}}
     \\
     &= (q2^q)^{1+\frac1q}(nR)^{\frac{1}{p}}\norm{\bv}_1
     \left(
     \epsilon'
     \right)^{\frac{1}{q}} \\
     &\leq (q2^q)^{1+\frac{1}{q}}(nR)^{\frac{1}{p}}n\norm{\bv}_\infty
     \left(
     \epsilon'
     \right)^{\frac{1}{q}}
      \end{align*}
    Therefore, for \Cref{eq:flow_convert_want} to hold, it suffices to have
    $$
    (q2^q)^{1+\frac1q}(nR)^{\frac{1}{p}}n\norm{\bv}_\infty
     \left(
     \epsilon'
     \right)^{\frac{1}{q}}
     \leq 
     \frac{\min\{\epsilon/3, 1\}{\norm{\bv}_\infty}}{2n^3(mR)^{\frac{1}{p}}}.
    $$
    Rearranging for $\epsilon'$, we get
    $$
   \left( \epsilon' \right) ^{\frac1q}
    \leq 
    \frac{\min\{\epsilon/3, 1\}}{2n^4(mR)^{\frac{1}{p}} (q2^q)^{1+\frac1q}(nR)^{\frac{1}{p}}}.
    $$
    Therefore, for \Cref{eq:flow_convert_want} to hold it suffices to choose $(\epsilon')^{\frac{1}{q}}$ equal to the right-hand side of the inequality above. 
    %in which case we get
    % \begin{align*}
    % \ln \frac{1}{\epsilon'} 
    % &= 
    % q\left(
    % \ln2+4\ln n+\frac{1}{p}\ln(mnR^2) 
    % +\left(1 + \frac1q\right)\ln(q2^q)
    % + \ln \left(\max\left\{\frac{3}{\epsilon}, 1 \right\}\right)
    % \right) \\
    % &= O\left(q
    % \ln\left(mnRq2^q\max\left\{\frac3\epsilon, 1\right\}\right)\right).
    % \end{align*}
    
    \end{proof}

\section{Miscellaneous Lemmas}
\label{sec:misc}
In this section we collect some lemmas that are used elsewhere in the paper.

\subsection{Inequalities used in proof of Lemma \ref{lem:flow_convert}}
\label{sec:flow_convert_lems}
\begin{prop}
\label{prop:abs_diff}
For all $a, b \in \R$ and $k \geq 0$, 
$$\abs{a\abs{a}^k - b\abs{b}^k} \leq (k+1)
\max\left\{\abs{a}^k, \abs{b}^k\right\} \cdot\abs{a-b}.$$
\end{prop}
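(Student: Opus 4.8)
The plan is to recognize the quantity $a\abs{a}^k - b\abs{b}^k$ as the increment of the single-variable function $g(t) := t\abs{t}^k$ between $t=b$ and $t=a$, and to control this increment by its derivative. First I would check that $g$ is continuously differentiable on all of $\R$ with $g'(t) = (k+1)\abs{t}^k$: for $t>0$ this is immediate since $g(t)=t^{k+1}$; for $t<0$, writing $t=-s$ with $s>0$ gives $g(t)=-s^{k+1}$, so the chain rule yields $g'(t)=(k+1)s^{k}=(k+1)\abs{t}^{k}$; and at $t=0$ the difference quotient is $\abs{h}^k$, which tends to $(k+1)\cdot 0^k$ (for $k=0$ this is the trivial identity $\abs{a-b}\le\abs{a-b}$, so assume $k>0$ hereafter, in which case the quotient tends to $0$). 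Since $t\mapsto(k+1)\abs{t}^k$ is continuous, $g$ is $C^1$.

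Next I would invoke the fundamental theorem of calculus to write $a\abs{a}^k - b\abs{b}^k = g(a)-g(b) = \int_b^a (k+1)\abs{t}^k\,dt$, take absolute values, and bound the integrand by its maximum over the interval of integration $[\min(a,b),\max(a,b)]$, obtaining
\[
\abs{a\abs{a}^k - b\abs{b}^k}\ \le\ (k+1)\left(\max_{t\in[\min(a,b),\max(a,b)]}\abs{t}^k\right)\abs{a-b}.
\]
The last step is to note that on any interval $[\alpha,\beta]$ the function $\abs{t}^k$ attains its maximum at an endpoint, because $\abs{t}$ does (it is convex, hence maximized on an interval at an endpoint) and $x\mapsto x^k$ is nondecreasing on $[0,\infty)$; hence $\max_{t\in[\min(a,b),\max(a,b)]}\abs{t}^k = \max\{\abs{a}^k,\abs{b}^k\}$, and substituting this in gives exactly the claimed inequality.

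There is no serious obstacle here; the only points needing a little care are the differentiability of $g$ at the origin and the elementary observation that $\abs{t}^k$ is maximized at an endpoint of an interval. If one prefers to avoid calculus entirely, an alternative is a short case analysis on the signs of $a$ and $b$: when $a,b$ have the same sign the mean value theorem applied to $x\mapsto x^{k+1}$ on $[\,\abs{b},\abs{a}\,]$ (after a WLOG reduction) gives the bound directly, and when they have opposite signs one uses the triangle-type estimate $\abs{a}^{k+1}+\abs{b}^{k+1}\le \max\{\abs{a}^k,\abs{b}^k\}(\abs{a}+\abs{b})$ together with $k+1\ge 1$ and $\abs{a-b}=\abs{a}+\abs{b}$. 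The integral formulation is cleaner and uniform in $k\ge 0$, so that is the route I would present.
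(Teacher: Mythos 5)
Your proof is correct, and the main route you present is genuinely different from the paper's. The paper argues by a case analysis on the signs of $a$ and $b$: when both are nonnegative it reduces (via the substitution $t=a/b$) to the inequality $t^{k+1}-1\le (k+1)t^k(t-1)$, which it derives from the convexity gradient inequality for $t\mapsto t^{k+1}$; when the signs are opposite it uses the direct estimate $\abs{a}^{k+1}+\abs{b}^{k+1}\le\max\{\abs{a}^k,\abs{b}^k\}(\abs{a}+\abs{b})$ — exactly the "alternative" you sketch in your closing parenthetical. Your integral formulation $g(a)-g(b)=\int_b^a (k+1)\abs{t}^k\,dt$ avoids the case split entirely and handles all sign configurations and the non-integer exponents uniformly, at the cost of verifying that $g(t)=t\abs{t}^k$ is $C^1$ at the origin (which you do correctly, including the separate treatment of $k=0$) and that $\abs{t}^k$ is maximized at an endpoint of the interval. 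Both arguments are complete; yours is cleaner and more uniform, the paper's is more elementary in that it uses only the convexity of $t^{k+1}$ on $(0,\infty)$ rather than the fundamental theorem of calculus. One small point in your favor: the paper's normalization $t=a/b$ tacitly assumes $b\neq 0$ in the same-sign case, a degenerate case your integral argument absorbs without comment.
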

\begin{proof}
First, note that without loss of generality we may assume that either $a,b \geq 0$ or $a \geq 0 $ and $b \leq 0$. 

First, suppose that $a, b \geq 0$. Then the inequality we are trying to prove becomes
\begin{align*}
    \abs{a^{k+1} - b^{k+1}}
    \leq (k+1)\max\left\{{a}^k, {b}^k\right\} \cdot\abs{a-b}.
\end{align*}
Without loss of generality, we may assume that $a > b$. Applying the change of variables $t = \frac{a}{b}$, the inequality is equivalent to 
\begin{align*}
    t^{k+1} - 1
    \leq (k+1)t^k(t-1).
\end{align*}
Consider the function $g(t) = t^{k+1}$. By the convexity of $g$, we know that for all $h$, we have
$$g(t+h) - g(t) \geq g'(t)h.$$
Plugging in $h = -(t-1)$ gives
$$1 - t^{k+1} \geq -(k+1)t^k(t-1),$$
which is what we wanted to show.
 
Next, suppose that $a \geq 0$ and $b \leq 0$. Then
\begin{align*}
     \abs{a\abs{a}^k - b\abs{b}^k}
     &=  a^{k+1} + \abs{b}^{k+1} \\
     &\leq \max\left\{a^k, \abs{b}^k\right\}\left(a + \abs{b}\right)  \\
     &= \max\left\{\abs{a}^k, \abs{b}^k\right\} \cdot\abs{a-b} \\
     &\leq (k+1) \max\left\{\abs{a}^k, \abs{b}^k\right\} \cdot\abs{a-b}.
\end{align*}

\end{proof}

\begin{prop}
\label{prop:flow_upperbound}
Let $f^*$ be the optimal solution to the minimum $p$-norm flow problem. Then
$$
\sE(\fv^*) \leq \frac{1}{p}n\norm{\rv}_\infty\norm{\bv}_1^p
$$
In addition, for all $(i,j) \in \vec{E}$,
$$\abs{f^*(i,j)} \leq (nR)^{\frac{1}{p}} \norm{\bv}_1,$$
where $R = \frac{\max_e r(e)}{\min_e r(e)}$. This implies that
$$\abs{x^*(i) - x^*(j)} \leq r(i,j)(nR)^{\frac{p-1}{p}}\norm{\bv}_1^{p-1} \quad \forall \; (i,j) \in \vec{E}.$$

\end{prop}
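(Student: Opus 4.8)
The plan is to establish the three displayed inequalities in order, using the energy bound to get the flow bound and then the KKT stationarity condition to get the potential bound.

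\textbf{Bounding $\sE(\fv^*)$.} The key idea is to compare the optimal flow against an explicit feasible $\bv$-flow whose energy is easy to control. I would fix any spanning tree $T$ of $G$ and let $\fv^T$ be the unique $\bv$-flow supported on $T$ (zero on non-tree edges). For each edge $e \in T$, deleting $e$ splits $T$ into two components, and flow conservation forces $f^T(e) = \pm\sum_{i \in C}b(i)$ where $C$ is one of the two sides; hence $\abs{f^T(e)} \le \sum_{i \in V}\abs{b(i)} = \norm{\bv}_1$. Therefore $\sE(\fv^T) = \frac1p\sum_{e \in T} r(e)\abs{f^T(e)}^p \le \frac1p(n-1)\norm{\rv}_\infty\norm{\bv}_1^p \le \frac1p n\norm{\rv}_\infty\norm{\bv}_1^p$, and since $\fv^*$ minimizes $\sE$ over all $\bv$-flows, $\sE(\fv^*) \le \sE(\fv^T)$, which gives the first claim.

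\textbf{Bounding $\abs{f^*(i,j)}$.} For any fixed edge $(i,j) \in \vec{E}$, every term of $\sE(\fv^*)$ is nonnegative, so $\frac1p r(i,j)\abs{f^*(i,j)}^p \le \sE(\fv^*) \le \frac1p n\norm{\rv}_\infty\norm{\bv}_1^p$. Dividing through and using $r(i,j) \ge \norm{\rv}_{-\infty}$, so that $\norm{\rv}_\infty/r(i,j) \le \norm{\rv}_\infty/\norm{\rv}_{-\infty} = R$, yields $\abs{f^*(i,j)}^p \le nR\norm{\bv}_1^p$, i.e.\ $\abs{f^*(i,j)} \le (nR)^{1/p}\norm{\bv}_1$.

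\textbf{Bounding $\abs{x^*(i) - x^*(j)}$.} Here I would invoke the stationarity (Ohm's-law) condition in Fact~\ref{fact:kkt}, which for the optimal pair $(\fv^*, \xv^*)$ reads $x^*(i) - x^*(j) = r(i,j)f^*(i,j)\abs{f^*(i,j)}^{p-2}$, hence $\abs{x^*(i) - x^*(j)} = r(i,j)\abs{f^*(i,j)}^{p-1}$. Substituting the bound from the previous step gives $\abs{x^*(i) - x^*(j)} \le r(i,j)\big((nR)^{1/p}\norm{\bv}_1\big)^{p-1} = r(i,j)(nR)^{(p-1)/p}\norm{\bv}_1^{p-1}$, as desired. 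None of the steps presents a genuine obstacle; the only points that need mild care are the verification of $\abs{f^T(e)} \le \norm{\bv}_1$ for the tree flow and the correct bookkeeping of the exponents of $R$ and $\norm{\bv}_1$ when passing from $f^*$ to $x^*$.
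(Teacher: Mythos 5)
Your proof is correct and follows essentially the same route as the paper's: compare $\fv^*$ to the tree-supported $\bv$-flow to bound the energy, lower-bound the energy by a single edge's term to bound $\abs{f^*(i,j)}$, and apply the KKT/Ohm's-law identity $\abs{x^*(i)-x^*(j)} = r(i,j)\abs{f^*(i,j)}^{p-1}$ for the potential bound. No issues.
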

\begin{proof}
This upper bound comes from upper-bounding the energy of an optimal flow. For any spanning tree $T$, the unique $\bv$-flow $\fv_T$ that is supported on the edges of $T$ is a feasible flow. This flow has $\fv_T(u,v) = \sum_{i \in C(u,v)} b(i)$ for all $(u,v) \in E(T)$, where $C(u,v)$ is the fundamental cut of $T$ containing $u$. Hence $\norm{\fv_T}_\infty \leq \norm{\bv}_1$, which implies that 
$$\sE(\fv^*) \leq \sE(\fv_T) \leq \frac{1}{p}n\norm{\rv}_\infty\norm{\bv}_1^p.$$
On the other hand, for any edge $(i,j)$, 
$$\sE(\fv^*) \geq \frac{1}{p} r(i,j)\abs{f^*(i,j)}^p \geq \frac{1}{p} \norm{\rv}_{-\infty} \abs{f^*(i,j)}^p.$$
Combining these inequalities gives $\abs{f^*(i,j)} \leq (nR)^{\frac{1}{p}} \norm{\bv}_1$, as claimed.

Using this gives us
$$\abs{x^*(i) - x^*(j)} = r(i,j)\abs{f^*(i,j)}^{p-1} \leq r(i,j)(nR)^{\frac{p-1}{p}}\norm{\bv}_1^{p-1}.$$
\end{proof}

\begin{prop}
\label{prop:ener_add}
For any  $\fv, \fv' \in \R^E$, we have 
$$\sE(\fv + \fv') \leq \left( \sE(\fv)^{\frac1p} + \sE(\fv')^{\frac1p} \right)^p.$$
\end{prop}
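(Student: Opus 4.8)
The plan is to recognize $\bigl(\sum_e r(e)\abs{f(e)}^p\bigr)^{1/p}$ as a weighted $\ell_p$ norm and reduce the statement to Minkowski's inequality. For $\fv \in \R^E$ write $\norm{\fv}_{r,p} := \bigl(\sum_{e} r(e)\abs{f(e)}^p\bigr)^{1/p}$, so that $\sE(\fv) = \tfrac1p \norm{\fv}_{r,p}^p$ and hence $\sE(\fv)^{1/p} = p^{-1/p}\norm{\fv}_{r,p}$. With this notation the claimed inequality is equivalent, after dividing out the common factor $p^{-1/p}$ and using that $t\mapsto t^p$ is nondecreasing on $[0,\infty)$ (here we use $p\ge 1$, which holds since $p>1$ throughout), to the triangle inequality $\norm{\fv+\fv'}_{r,p} \le \norm{\fv}_{r,p} + \norm{\fv'}_{r,p}$.

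To get the triangle inequality I would substitute $g(e) := r(e)^{1/p} f(e)$ and $g'(e) := r(e)^{1/p} f'(e)$; since each $r(e) > 0$ this is an invertible diagonal change of variables, and $\norm{\fv}_{r,p} = \norm{\gv}_p$, $\norm{\fv'}_{r,p} = \norm{\gv'}_p$, and $\norm{\fv+\fv'}_{r,p} = \norm{\gv+\gv'}_p$, where $\norm{\cdot}_p$ is the ordinary unweighted $\ell_p$ norm on $\R^E$. The desired bound is then exactly the classical Minkowski inequality $\norm{\gv+\gv'}_p \le \norm{\gv}_p + \norm{\gv'}_p$, valid for all $p \ge 1$.

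I do not expect any real obstacle here; the proof is just a repackaging of Minkowski. The only points requiring a moment of care are (i) that $\sE(\cdot)\ge 0$, so that taking $p$-th roots and then $p$-th powers is monotone and loses nothing, and (ii) that $p\ge 1$ is what makes $\norm{\cdot}_{r,p}$ an honest norm. If one prefers a self-contained argument rather than citing Minkowski, the standard route is to bound $\sum_e \abs{g(e)+g'(e)}^p \le \sum_e \abs{g(e)}\,\abs{g(e)+g'(e)}^{p-1} + \sum_e \abs{g'(e)}\,\abs{g(e)+g'(e)}^{p-1}$, apply Hölder with conjugate exponents $p$ and $\tfrac{p}{p-1}$ to each sum on the right, and then divide through by $\norm{\gv+\gv'}_p^{p-1}$, treating the degenerate case $\gv+\gv' = 0$ (where the inequality is trivial) separately.
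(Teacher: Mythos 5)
Your proof is correct and is essentially identical to the paper's: the paper makes the same substitution $g(e) = r(e)^{1/p}f(e)$, $g'(e) = r(e)^{1/p}f'(e)$ and invokes Minkowski's inequality for the unweighted $\ell_p$ norm. The extra remarks on monotonicity of $t \mapsto t^p$ and the self-contained Hölder argument are fine but not needed beyond what the paper already does.
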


\begin{proof}
For all $e \in E$, define $g(e) = r(e)^{\frac{1}{p}}f(e)$ and $g'(e) = r(e)^{\frac{1}{p}} f'(e)$. Then 
$$\sE(\fv + \fv') = \frac{1}{p}\norm{\gv + \gv'}_p^p \leq \frac{1}{p}\left(\norm{\gv}_p + \norm{\gv'}_p\right)^p= \left( \sE(\fv)^{\frac1p} + \sE(\fv')^{\frac1p} \right)^p.$$
\end{proof}

\begin{prop}
\label{prop:ener_lower_bound}
For any feasible $\bv$-flow $\fv$, we have $\sE(\fv) \geq \frac{1}{p}\left(\frac{\norm{\bv}_\infty}{n}\right)^p \norm{\rv}_{-\infty}$. 
% Here, $\norm{\rv}_{-\infty} = \min_e\{r(e)\}$. 
\end{prop}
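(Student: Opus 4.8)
The plan is a direct pigeonhole argument on the vertex of largest supply. First I would dispose of the trivial case: if $\norm{\bv}_\infty = 0$, the right-hand side is $0$ and the claim follows immediately from $\sE(\fv) \ge 0$. So I would assume $\norm{\bv}_\infty > 0$ and fix a vertex $v \in V$ with $\abs{b(v)} = \norm{\bv}_\infty$.

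Next I would use feasibility of $\fv$: since $\Am\fv = \bv$, the net flow out of $v$ equals $b(v)$, i.e. $\sum_{j\,:\,vj \in E} f(v,j) = b(v)$ using the convention $f(v,j) = -f(j,v)$. Note $v$ cannot be isolated (an isolated vertex would force $b(v) = 0$, contradicting $\abs{b(v)} = \norm{\bv}_\infty > 0$), so $\deg(v) \ge 1$, and of course $\deg(v) \le n-1 < n$ in a simple graph. By the triangle inequality and pigeonhole, some edge $e_0$ incident to $v$ satisfies
$$
\abs{f(e_0)} \;\ge\; \frac{\abs{b(v)}}{\deg(v)} \;\ge\; \frac{\norm{\bv}_\infty}{n}.
$$

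Finally, I would bound the energy by keeping only the contribution of $e_0$:
$$
\sE(\fv) \;=\; \frac1p \sum_{e \in E} r(e)\abs{f(e)}^p \;\ge\; \frac1p\, r(e_0)\abs{f(e_0)}^p \;\ge\; \frac1p \norm{\rv}_{-\infty}\left(\frac{\norm{\bv}_\infty}{n}\right)^p,
$$
which is exactly the claimed inequality. There is no real obstacle here; the only point requiring a moment's care is handling the degenerate cases ($\norm{\bv}_\infty = 0$, or $v$ isolated), and making sure the bound $\deg(v) \le n$ is the one that is actually used so that the statement matches. (If one wanted to allow parallel edges, one could instead route against a spanning tree as in Proposition~\ref{prop:flow_upperbound}, but for the simple-graph convention used elsewhere in the paper the pigeonhole bound above suffices.)
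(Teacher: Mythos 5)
Your proposal is correct and is essentially the same argument as the paper's: pick the vertex attaining $\norm{\bv}_\infty$, use pigeonhole over its at most $n$ incident edges to find one carrying at least $\norm{\bv}_\infty/n$ flow, and lower bound the energy by that single edge's contribution. Your extra handling of the degenerate cases ($\norm{\bv}_\infty = 0$, isolated vertex) and the absolute value on $b(v)$ only adds care the paper leaves implicit.
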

\begin{proof}
Consider $i \in V$ with $b(i) = \norm{\bv}_\infty$. Vertex $i$ has at most $n$ edges leaving it, and the sum of the flow values on these edges is $b(i)$. Thus there must be some edge $(i,j) \in \vec{E}$ with $\abs{f(i,j)} \geq \frac{b(i)}{n}$.  The contribution of this edge to the total energy of $\fv$ is already 
$$\frac{1}{p} r(i,j)\norm{f(i,j)}^p \geq \frac{1}{p} \norm{\rv}_{-\infty}\left(\frac{b(i)}{n}\right)^p. $$
\end{proof}

\subsection{Spectral Approximations}

Let $\Am, \Bm$ be $n \times n$ symmetric, positive semidefinite matrices. We say that $\Bm$ is a $\gamma$-spectral sparsifier of $\Am$ if
\[
\left( 1 - \gamma \right) \Am
\preceq
\Bm
\preceq
\left( 1 + \gamma \right) \Am.
\]

% \begin{claim}
% If $\Am$ and $\Bm$ are full rank,
% $\Am \approx_{\gamma} \Bm$, then for any vector $\xv$, we have
% \[
% \norm{\xv - \Bm^{-1} \Am \xv}_{\Am}
% \leq
% \sqrt{2\gamma}
% \norm{\xv}_{\Am}
% \]
% or equivalently, for any vector $\bv$,
% \[
% \norm{\Am^{-1} \bv - \Bm^{-1} \bv}_{\Am}
% \leq
% \sqrt{2\gamma}
% \norm{\Am^{-1} \bv}_{\Am}
% \]
% TODO: maybe this is true with just $O(\gamma)$ too...
% \end{claim}
% \begin{proof}

% \end{proof}
\begin{prop}[Spectral Approximations]
\label{prop:spec_approx}
Suppose $\Am, \Bm \in \mathbb{S}^n_+$ and $\left( 1 - \gamma \right) \Am
\preceq
\Bm
\preceq
\left( 1 + \gamma \right) \Am$. Let $\xv, \yv, \zv, \bv \in \R^n$. Then the following hold:
\begin{enumerate}
    \item $(1-\gamma) \norm{\xv}_{\Am}^2 \leq \norm{\xv}_{\Bm}^2 \leq (1+\gamma)\norm{\xv}_{\Am}^2$
    \item If $\Am\xv = \bv$ and $\Bm\yv = \bv$, then $\norm{\xv - \yv}_{\Am}^2 \leq h(\gamma) \norm{\xv}_{\Am}^2$, where $h(\gamma) = \frac{\gamma^2}{(1-\gamma)^2}$. 
\end{enumerate}
\end{prop}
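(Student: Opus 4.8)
The plan is to treat the two parts separately. Part~1 is essentially a restatement of the hypothesis, while part~2 is the only part that requires real work.

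\textbf{Part 1.} Recall that for symmetric matrices $P, Q$, the relation $P \preceq Q$ means exactly that $\vv^\top P \vv \leq \vv^\top Q \vv$ for every vector $\vv$. Applying this to the hypothesis $(1-\gamma)\Am \preceq \Bm \preceq (1+\gamma)\Am$ with $\vv = \xv$ gives $(1-\gamma)\,\xv^\top\Am\xv \leq \xv^\top\Bm\xv \leq (1+\gamma)\,\xv^\top\Am\xv$, which is the claimed inequality once the quadratic forms are rewritten as $\norm{\cdot}_\Am^2$ and $\norm{\cdot}_\Bm^2$.

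\textbf{Part 2.} The crux is the identity $\norm{\xv-\yv}_\Am^2 = (\xv-\yv)^\top(\Bm-\Am)\yv$. To obtain it, expand $\norm{\xv-\yv}_\Am^2 = (\xv-\yv)^\top(\Am\xv - \Am\yv)$, substitute $\Am\xv = \bv = \Bm\yv$ into the occurrence $\Am\xv$, and collect terms. Next I would bound the right-hand side by a generalized Cauchy--Schwarz inequality. From $(1-\gamma)\Am \preceq \Bm \preceq (1+\gamma)\Am$ we get $-\gamma\Am \preceq \Bm-\Am \preceq \gamma\Am$; writing $\Bm - \Am = \Am^{1/2}\Mm\Am^{1/2}$ with $\Mm := \Am^{-1/2}(\Bm-\Am)\Am^{-1/2}$ symmetric and all of its eigenvalues in $[-\gamma,\gamma]$, and setting $\uv := \Am^{1/2}(\xv-\yv)$, $\vv := \Am^{1/2}\yv$, we obtain
$$\abs{(\xv-\yv)^\top(\Bm-\Am)\yv} = \abs{\uv^\top\Mm\vv} \leq \gamma\,\norm{\uv}_2\,\norm{\vv}_2 = \gamma\,\norm{\xv-\yv}_\Am\,\norm{\yv}_\Am.$$
Combining with the identity and cancelling a factor of $\norm{\xv-\yv}_\Am$ (the case $\norm{\xv-\yv}_\Am = 0$ being trivial) gives $\norm{\xv-\yv}_\Am \leq \gamma\,\norm{\yv}_\Am$. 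Finally the triangle inequality $\norm{\yv}_\Am \leq \norm{\xv}_\Am + \norm{\xv-\yv}_\Am$ turns this into $(1-\gamma)\norm{\xv-\yv}_\Am \leq \gamma\,\norm{\xv}_\Am$, and squaring yields $\norm{\xv-\yv}_\Am^2 \leq \frac{\gamma^2}{(1-\gamma)^2}\norm{\xv}_\Am^2 = h(\gamma)\,\norm{\xv}_\Am^2$.

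The one technical point to handle carefully is that $\Am$ and $\Bm$ are only positive semidefinite (in the application they are Laplacians), so $\xv,\yv$ are not uniquely determined and $\Am^{1/2}$ is only invertible on a subspace. This is harmless: for a connected graph the kernels of $\Am$ and $\Bm$ both equal $\mathrm{span}(\one)$, and $\bv$ (which satisfies $\one^\top\bv = 0$ in our application) lies in their common range, so one may restrict all vectors to $\one^\perp$, where $\Am$ and $\Bm$ are positive definite and every step above is literally valid; equivalently, interpret $\xv = \Am^\dagger\bv$, $\yv = \Bm^\dagger\bv$ and read $\Am^{-1/2}$ as a pseudo-inverse. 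I do not expect a genuine obstacle anywhere; the generalized Cauchy--Schwarz step is the only non-routine ingredient, and it is standard.
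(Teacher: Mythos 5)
Your proof is correct, and it takes a genuinely different route from the paper's. The paper first reduces to the pseudo-inverse solutions $\xv = \Am^\dag\bv$, $\yv = \Bm^\dag\bv$ and then establishes the Loewner-order inequality $(\Am^\dag - \Bm^\dag)\Am(\Am^\dag - \Bm^\dag) \preceq h(\gamma)\,\Am^\dag$ by passing to $\frac{1}{1+\gamma}\Am^\dag \preceq \Bm^\dag \preceq \frac{1}{1-\gamma}\Am^\dag$, conjugating by $\Am^{1/2}$, and bounding the eigenvalues of $\Pi - \Am^{1/2}\Bm^\dag\Am^{1/2}$; your argument instead works at the level of the vectors themselves, via the identity $\norm{\xv-\yv}_\Am^2 = (\xv-\yv)^\top(\Bm-\Am)\yv$, a generalized Cauchy--Schwarz step, and the triangle inequality, arriving at the same constant $\gamma/(1-\gamma)$ before squaring. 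Your version is shorter and more elementary, applies directly to arbitrary solutions of $\Am\xv=\bv$, $\Bm\yv=\bv$ rather than only the pseudo-inverse ones, and isolates the single place where the semidefiniteness of $\Am$ matters (the factorization $\Bm-\Am = \Am^{1/2}\Mm\Am^{1/2}$), which you handle correctly by noting $\Null(\Am)=\Null(\Bm)$ so that $\Bm-\Am$ vanishes on the kernel. What the paper's heavier operator-level argument buys is the uniform matrix inequality itself, valid for all right-hand sides simultaneously, but for the purpose of \Cref{lem:errs} only the scalar inequality is needed, so your proof is a fully adequate substitute.
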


\begin{proof}
The first one is by definition of $\norm{\xv}_{\Am}^2 = \xv^{\top} \Am \xv$.

For the second one, first we claim that it is sufficient to prove that $\norm{\Am^\dag \bv - \Bm^\dag \bv}_{\Am}^2 \leq h(\gamma) \norm{\Am^\dag \bv}_{\Am}^2$. This is because in general, we have $\xv = \Am^\dag \bv + \uv$, and $\yv = \Bm^\dag \bv + \vv$, for some  $\uv \in \Null(\Am)$ and $\vv \in \Null(\Bm)$. Moreover, the condition 
$
\left( 1 - \gamma \right) \Am
\preceq
\Bm
\preceq
\left( 1 + \gamma \right) \Am
$
implies that $\Null(\Am) = \Null(\Bm)$. Hence, $\norm{\xv - \yv}_{\Am}^2 = \norm{\Am^\dag \bv - \Bm^\dag \bv}_{\Am}^2$, and $\norm{\xv}_{\Am}^2 =  \norm{\Am^\dag \bv}_{\Am}^2$. 

Next, we expand
$\norm{\Am^\dag \bv - \Bm^\dag \bv}_{\Am}^2 \leq h(\gamma) \norm{\Am^\dag \bv}_{\Am}^2$ into 
$$
(\Am^\dag \bv - \Bm^\dag \bv)^T \Am (\Am^\dag \bv - \Bm^\dag \bv) \leq h(\gamma) \bv^T \Am^\dag \bv,
$$
or equivalently,
$$
\bv^T \left(\Am^\dag - \Bm^\dag\right) \Am \left(\Am^\dag - \Bm^\dag\right) \bv \leq h(\gamma)   \bv^T \Am^\dag \bv.
$$
To prove the above inequality, it suffices to prove that 
\begin{equation}
    \label{eq:spec_goal}
     \left(\Am^\dag - \Bm^\dag\right) \Am \left(\Am^\dag - \Bm^\dag\right)  \preceq h(\gamma)  \Am^\dag.
\end{equation}
Multiplying the left and right sides of \Cref{eq:spec_goal} by $\Am^{\frac12}$, we get that (\ref{eq:spec_goal}) is  implied by
\begin{align}
\label{eq:spec_suff}
    \Am^{\frac12}\left(\Am^\dag - \Bm^\dag\right) \Am \left(\Am^\dag - \Bm^\dag\right)\Am^{\frac12}  \preceq h(\gamma) \Am^{\frac12} \Am^\dag \Am^{\frac12}.
\end{align}
Let $\Pi := \Am^{\frac12} \Am^\dag \Am^{\frac12}$ be the projection map onto the row space of $\Am$. Note that $\Pi = \Am^\dag\Am = \Am\Am^\dag$. Also, $\Pi = \Am^{\frac{\dag}{2}}\Am^{\frac{1}{2}} = \Am^{\frac{1}{2}}\Am^{\frac{\dag}{2}}$. These can be seen using the spectral decomposition. 
Now, the reason why \Cref{eq:spec_suff} implies \Cref{eq:spec_goal} is because if we can multiply both sides of (\ref{eq:spec_suff}) with one copy of $ \Am^{\frac{\dag}{2}}$ on the left and one copy of  $\Am^{\frac{\dag}{2}}$ on the right. Then \Cref{eq:spec_suff} becomes  $\Pi\left(\Am^\dag - \Bm^\dag\right) \Am \left(\Am^\dag - \Bm^\dag\right)\Pi \preceq h(\gamma) \Pi\Am^\dag \Pi.$ We have $\Pi (\Am^{\dag} - \Bm^{\dag}) = (\Am^{\dag} - \Bm^{\dag})\Pi  = \Am^{\dag} - \Bm^{\dag}$ because $\Am$ and $\Bm$ have the same null space. Similarly, $\Pi \Am^{\dag}  = \Am^{\dag} \Pi  = \Am^{\dag}$.

To prove (\ref{eq:spec_suff}), first rewrite it as
\begin{align*}
    \left(\Am^{\frac12}\left(\Am^\dag - \Bm^\dag\right) \Am^{\frac12} \right)^2   \preceq h(\gamma) \Pi,
\end{align*}
or equivalently
\begin{align*}
    \left(\Pi - \Am^{\frac12}\Bm^\dag\Am^\frac12 \right)^2   \preceq h(\gamma) \Pi.
\end{align*}
From the spectral approximation
$\left( 1 - \gamma \right) \Am
\preceq
\Bm
\preceq
\left( 1 + \gamma \right) \Am$, we deduce that
$$
\frac{1}{1 + \gamma} \Am^\dag
\preceq
\Bm^\dag
\preceq
\frac{1}{1-\gamma} \Am^\dag,
$$
which, when multiplying on the left and right by $\Am^{\frac12}$, implies that
$$
\frac{1}{1 + \gamma} \Pi
\preceq
\Am^{\frac12}\Bm^\dag\Am^{\frac12}
\preceq
\frac{1}{1-\gamma} \Pi.
$$
This in turn gives
$$
\frac{-\gamma}{1 + \gamma} \Pi
\preceq
\Pi - \Am^{\frac12}\Bm^\dag\Am^{\frac12}
\preceq
\frac{\gamma}{1-\gamma} \Pi.
$$
Observe that any eigenvector of $\Pi - \Am^{\frac12}\Bm^\dag\Am^{\frac12}$ is also an eigenvector of $\Pi$ (they share eigenspaces because $\Am$ and $\Bm$ have the same null spaces).
Moreover, the eigenvalues of $\Pi$ are 0 or 1. 
This implies that the eigenvalues of $\Pi - \Am^{\frac12}\Bm^\dag\Am^\frac12$ are all between $\frac{-\gamma}{1 + \gamma}$ and $\frac{\gamma}{1-\gamma}$. Hence, the eigenvalues of  $\left(\Pi - \Am^{\frac12}\Bm^\dag\Am^\frac12 \right)^2$ are all between 0 and  $ \frac{\gamma^2}{(1-\gamma)^2}$, and thus $\left(\Pi - \Am^{\frac12}\Bm^\dag\Am^\frac12 \right)^2   \preceq \frac{\gamma^2}{(1-\gamma)^2} \Pi$.

\end{proof}

\section{Previous Works on $p$-Norm Flows}
\label{sec:previous}

Here are some of the existing algorithms for computing minimum $p$-norm flows.
They include algorithms that use electrical flow solvers as a subroutine.  

\begin{enumerate}
    \item The following three algorithms compute an $(1+\epsilon)$-approximation to the optimal weighted $p$-norm flow, and they involve calling Laplacian solvers as a black box: 
    \begin{itemize}
        \item $O(pm^{\frac{4p-4}{3p-2} + o(1)}\log^2 \frac{1}{\epsilon})$ arithmetic operations for $2 \leq p < \mbox{poly}(m)$ \cite{AS20}
    (not considering the bit complexity of the operations), 
        \item $\tilde{O}(pm^{1 + \abs{\frac{1}{2}-\frac{1}{p}}} \log \frac{1}{\epsilon})$ time \cite{BCLL18} for $1 < p < \infty$,
        \item  $\tilde{O}(2^{\max\{p, \frac{1}{p - 1} \} } m^{1 + \frac{\abs{p - 2}}{2p + \abs{p - 2}}} \log \frac{1}{\epsilon})$ time for $1 < p < \infty$ \cite{AKPS19};
    \end{itemize}
    \item An algorithm with time $O(p(m^{1 + o(1)} + n^{4/3 + o(1)})\log^2 \frac{1}{\epsilon})$ has recently been achieved for $p=\omega(1)$ \cite{ABKS21:arxiv}. This algorithm uses graph sparsification in combination with the algorithms in the three bullet points above.
    \item An algorithm that runs in $\tilde{O}(m^{1.2} \log(\frac{1}{\epsilon}))$ time for $p=4$ \cite{Bullins20}, using higher-order acceleration.
    \item For unweighted graphs, a near-optimal flow can be computed in $pm^{\frac{p}{p-1} + o(1)}$ arithmetic operations \cite{AS20} for $2 \leq p \leq \mbox{poly}(m)$, or time $2^{O(p^{3/2})}m^{1 + \frac{7}{\sqrt{p-1}}+o(1)}\mbox{poly}(\log \frac{1}{\epsilon})$  for $p \geq 2$ using recursive preconditioning \cite{KPSW19}.
\end{enumerate}

\bibliographystyle{alpha} 
\bibliography{references}

\end{document}